\newtheorem*{rep@theorem}{\rep@title}
\newcommand{\newreptheorem}[2]{%
\newenvironment{rep#1}[1]{%
 \def\rep@title{#2 \ref{##1}}%
 \begin{rep@theorem}}%
 {\end{rep@theorem}}}
\newtheorem*{rep@assumption}{\rep@title}
\newcommand{\newrepassumption}[2]{%
\newenvironment{rep#1}[1]{%
 \def\rep@title{#2 \ref{##1}}%
 \begin{rep@assumption}}%
 {\end{rep@assumption}}}
\newcommand{\mathsep}{,~}
\newcommand{\st}{\,\middle|\,}
\newcommand{\set}[1]{\left\lbrace #1 \right\rbrace}
\newcommand{\card}[1]{\left\lvert{#1}\right\rvert}
\newcommand{\absv}[1]{\card{#1}}
\newcommand{\norm}[2]{\left\lVert{#1}\right\rVert_{#2}}
\newcommand{\trace}[1]{\textsc{Tr}\left(#1\right)}
\newcommand{\setR}{\mathbb R}
\newcommand{\setN}{\mathbb N}
\newcommand{\probability}[1]{\mathbb P\left[#1\right]}
\newcommand{\probabilitysub}[2]{\mathbb P_{#1}\left[{#2}\right]}
\newcommand{\expect}[1]{\mathbb E #1}
\newcommand{\expectVariable}[2]{\mathbb E_{#1}\left[#2\right]}
\DeclareMathOperator*{\argmin}{arg\,min}
\newtheorem{definition}{Definition}
\newtheorem{proposition}{Proposition}
\newtheorem{theorem}{Theorem}
\newtheorem{lemma}{Lemma}
\newtheorem{corollary}{Corollary}
\newtheorem{assumption}{Assumption}
\newtheorem{remark}{Remark}
\newcommand{\param}{\theta}
\newcommand{\paramsub}[1]{\param_{#1}}
\newcommand{\paramfamily}{\vec \theta}
\newcommand{\paramdistribution}{\tilde \theta}
\newcommand{\PARAM}{\Theta}
\newcommand{\strategicvote}[1]{s}
\newcommand{\targetvector}{t}
\newcommand{\vote}{\textsc{Vote}}
\newcommand{\average}{\textsc{Avg}}
\newcommand{\ball}{\mathcal B}
\newcommand{\skewness}{\textsc{Skew}}
\newcommand{\spectrum}{\textsc{Sp}}
\newcommand{\symmetric}[1]{\textsc{Sym}_{#1}}
\newcommand{\hessian}{H}
\newcommand{\sdpmatrix}{S}
\newcommand{\adaptivematrix}{\Sigma}
\newcommand{\convexset}{\mathcal C}
\newcommand{\subgradient}{h}
\newcommand{\parx}{x}
\newcommand{\parxsub}[1]{\parx_{#1}}
\newcommand{\parxfamily}{\vec{\parx}}
\newcommand{\pary}{y}
\newcommand{\parysub}[1]{\pary_{#1}}
\newcommand{\parz}{z}
\newcommand{\parw}{w}
\newcommand{\parwsub}[1]{\parw_{#1}}
\newcommand{\voter}{v}
\newcommand{\voterbis}{w}
\newcommand{\VOTER}{V}
\newcommand{\truthful}{T}
\newcommand{\strategists}{S}
\newcommand{\eigenvalue}{\lambda}
\newcommand{\randomvar}{X}
\newcommand{\unitvector}[1]{{\bf u}_{#1}}
\newcommand{\GeometricMedian}{\textsc{Gm}}
\newcommand{\CWMedian}{\textsc{Cw}}
\newcommand{\geometricmedian}{g}
\newcommand{\strategyproofbound}{\alpha}
\newcommand{\Loss}{\mathcal L}
\newcommand{\event}{\mathcal E}
\newcommand{\error}{e}
\newcommand{\ra}{r_1}
\newcommand{\rb}{r_3}
\newcommand{\rc}{r_2}
\newcommand{\setAchieve}{\mathcal A_{\VOTER}}
\newcommand{\Var}{\mathrm{Var}}
\newcommand{\helement}{a}
\newcommand{\oneDelement}[2]{{#1}[{#2}]}
\newcommand{\twoDelement}[3]{{#1}[{#2},{#3}]}
\newcommand{\threeDelement}[4]{{#1}[{#2},{#3},{#4}]}
\begin{document}

\twocolumn[

\runningauthor{El-Mhamdi, Farhadkhani, Guerraoui, Hoang}

\aistatstitle{On the Strategyproofness of the Geometric Median}

\aistatsauthor{ El-Mahdi El-Mhamdi \And Sadegh Farhadkhani$^*$}

\aistatsaddress{ Calicarpa,  École Polytechnique \And EPFL} 

\aistatsauthor{ Rachid Guerraoui \And Lê-Nguyên Hoang$^*$ }

\aistatsaddress{ EPFL \And Calicarpa, Tournesol } 
]

\begin{NoHyper}

\renewcommand{\thefootnote}{\fnsymbol{footnote}}
\footnotetext{\hspace{-5mm}Authors are listed in alphabetical order.
\\$^*$Correspondence to:
\texttt{sadegh.farhadkhani@epfl.ch}, \\ and 
\texttt{
len@tournesol.app}.
} 
\end{NoHyper}

\renewcommand{\thefootnote}{\arabic{footnote}}

\begin{abstract}
  The \emph{geometric median}, an instrumental component of the secure machine learning toolbox, is known to be effective when robustly aggregating models (or gradients), gathered from potentially malicious (or strategic) users. What is less known is the extent to which the geometric median incentivizes dishonest behaviors. This paper addresses this fundamental question by 
  quantifying its \emph{strategyproofness}.
  While we observe that the geometric median is not even approximately strategyproof, we prove that it is \emph{asymptotically $\strategyproofbound$-strategyproof}: %
  when the number of users is large enough, 
  a user that misbehaves can gain at most a multiplicative factor $\strategyproofbound$, which 
  we compute as a function of the distribution followed by the users.
  We then generalize our results to the case where users actually care more about specific dimensions, determining how this impacts   $\strategyproofbound$. 
 We also show how the \emph{skewed geometric medians} can be used to improve strategyproofness.
\end{abstract}

\color{black}
\section{INTRODUCTION}
There has recently been a growing interest in collaborative
machine learning to efficiently utilize the ever-increasing
amount of data and computational resources~\citep{McMahan17,Kairouz21,Tensorflow2015}. 
Collaborative learning gathers information from multiple users (e.g., gradient vectors~\citep{Zinkevich10}, local model parameters~\citep{DinhTN20,equivalence} or users' preferences~\citep{NoothigattuGADR18,AllouahGHV22}) and typically summarizes it in a single vector. 
While averaging is the most widely used method for aggregating multiple vectors into a single vector~\citep{polyak92}, it suffers from severe security flaws: 
 averaging can be arbitrarily manipulated by a single strategic user
~\citep{BlanchardMGS17}.  

The geometric median is a promising ``robust'' alternative to averaging. 
It has been widely used in collaborative learning as it is a provably good approximation of the average~\citep{STANISLAV15} and it is robust to a minority of malicious users~\citep{lopuhaa1989}.
A large body of research known as ``Byzantine learning''~\citep{BlanchardMGS17,chen17,pmlr-v80-mhamdi18a,rajput19,alistarh18} uses the geometric median to ensure safe learning despite the presence of participants with arbitrarily malicious behavior~\citep{pmlr-v162-farhadkhani22a, karimireddy2022byzantinerobust,acharya2021,Wu20,So21,Gu21,pillutla2019robust, equivalence}. 
Interestingly, the geometric median also satisfies the fairness principle  \emph{``one voter, one vote {with a unit force}''} (see Section \ref{sec:gm_def}), making it ethically appealing. 

In this paper, we study the extent to which the geometric median \emph{incentivizes} strategic manipulations\footnote{Hence, we often use the term ``voter'' instead of ``user''.}.
Ideally, we would like the geometric median to be \emph{strategyproof}~\citep{gibbard1973manipulation,satterthwaite1975strategy,brandt16}, i.e., 
we want it to be in each voter's best interest to report their true preferred vector. 
Put differently, honesty would ideally be a \emph{dominant strategy}~\citep{chung2007foundations}.
This is very different from \emph{Byzantine learning}, 
which only focuses on the resilience of the training, 
usually assuming a \emph{majority} of honest users. 
Conversely, we consider the more realistic case where \emph{every} user wants to bias the algorithm towards their specific target state.
Such considerations are critical 
for high-stake life-endangering applications such as content moderation and recommendation~\citep{yue2019weaponization,whitten2020poison}, 
in which different people have diverging preferences over 
what should be removed~\citep{ribeiro20,bhat20}, 
accompanied with a warning message~\citep{mena20},
and be promoted at scale~\citep{michelman20}. 
Clearly, activists, companies and politicians all want to bias algorithms to promote certain views, products or ideologies~\citep{hoang20a}.
These entities should thus be expected to behave untruthfully, if they can easily game the algorithms with fabricated behaviors. 

Now, assuming that each user wants to minimize the distance between the computed geometric median and their target vector, 
it is actually known that the geometric median fails to be strategyproof \citep{KIM198429}
(see Figure \ref{fig:geometricmedian}).
However, raw strategyproofness is a binary worst-case analysis. 
In practice, optimizing strategic reporting may be costly (e.g., information gathering and computational costs, and the risk of being exposed), and hence may not be profitable if the potential gain is small.
This prompts us to \emph{quantify} the strategyproofness of the geometric median:
how much can a strategic voter gain by misreporting their preferred vector~\citep{lubin12,wang15,Han15}? 

\paragraph{\textbf{Contributions.}}
 Our first contribution is to show that the geometric median fails to guarantee approximate strategyproofness. 
More precisely, for any $\strategyproofbound$, we show that there exists a configuration where a strategic voter can gain a factor $\strategyproofbound$ by behaving strategically rather than truthfully. %

Our main contribution is to then study the more specific case where voters' reported vectors come independently from an underlying distribution. We prove that, in the limit where the number of voters is large enough, and with high probability, the geometric median is indeed $\strategyproofbound$-strategyproof. This goes through 
introducing and formalizing the notion of \emph{asymptotic strategyproofness} with respect to the distribution of reported vectors.
We  show how to compute the bound $\strategyproofbound$ as a function of this distribution. 

Our two first contributions
apply to the case where a voter wants to minimize the \emph{Euclidean} distance between the geometric median and their target vector. 
Essentially, this amounts to saying that the voters' preferences are isotropic, i.e., all dimensions have the same importance for the voters. However, in practical applications, a voter may care a lot more about certain dimensions than others, 
Our third contribution is a generalization to this setting,  proving that, in a rigorous sense, the geometric median becomes {\it less} strategyproof if some dimensions are both more polarized and more important than others. 

As a fourth important contribution, we show how strategyproofness can be improved by introducing and analyzing the \emph{skewed geometric median}. 
Intuitively, this corresponds to skewing the feature space using a linear transformation $\adaptivematrix$, computing the geometric median in the skewed space, and de-skewing the computed geometric median by applying $\adaptivematrix^{-1}$. In essence, the skewed geometric median can be used to weaken pulls along polarized dimensions, and strengthen pulls along others. 
This helps limit the incentives to exaggerate preferences along more polarized dimensions, by intuitively giving voters more voting power along orthogonal dimensions ``at the same cost''.

{
\paragraph{\textbf{Background.}} 
Classically called the Fermat-Weber solution~\citep{brimberg17}, the geometric median solves a version of the widely studied (optimal) facility location problem~\citep{hansen85,walsh20,pinyan09,AAAIW1510182,pingzhong20,esco11,Sui15,kyr10,fotakis13}, as it minimizes the sum of distances of the agents to the chosen location. 
In one dimension, the geometric median coincides with the median, which was shown~\citep{Moulin80} to be (group) strategyproof. 
But in higher dimensions, the geometric median is known to be \emph{not} strategyproof~\citep{KIM198429}. 
To the best of our knowledge, however, our paper is the first to analyze the geometric median in high dimension, with weakened forms of strategyproofness like (asymptotic) $\strategyproofbound$-strategyproofness. 
As far as we know, we are also the first to investigate skewed geometric medians and skewed preferences.}

  \paragraph{\textbf{Roadmap.}}
The rest of the paper is organized as follows. 
Section~\ref{sec:model} formally defines different notions of strategyproofness and the geometric median aggregation rule. 
Section~\ref{sec:non_strategyproofness} proves that this rule is not $\strategyproofbound$-strategyproof, whilst Section~\ref{sec:asymptotic_strategyproofness} proves that it is asymptotically $\strategyproofbound$-strategyproof. 
In Section~\ref{sec:skewed_preferences}, we generalize our result to non-isotropic voters' preferences and to the skewed geometric median. We provide a simple experiment in Section~\ref{sec:exp}.
{ Section~\ref{sec:related_work} discusses related work,}
and Section~\ref{sec:conclusion} concludes. 
{ Due to space limitations, most of the proofs and some auxiliary results are provided in the appendices.}

\section{MODEL}
\label{sec:model}

We consider $1+\VOTER$ voters. 
Each voter $\voter \in [\VOTER] \triangleq \{ 1, \ldots, \VOTER \}$ reports a { (potentially fabricated)} vector $\paramsub{\voter} \in \setR^d$. We denote by $\paramfamily{} \triangleq (\paramsub{1}, \ldots, \paramsub{\VOTER})$ the family of other voters' reported vectors. 
We then, without loss of generality\footnote{Because all the votes that we consider are permutation invariant (Proposition \ref{prop:anonymity} in Appendix~\ref{sec:GM}).}, analyze the incentives of voter $0$. 
We assume that voter $0$ has a preferred \emph{target} vector $\targetvector \in \setR^d$,
but they report a potentially different, \emph{strategically} crafted, vector $\strategicvote{0} \in \setR^d$. 
A voting algorithm $\vote$ then aggregates all voters' vectors into a common decision vector $\vote(\strategicvote{0}, \paramfamily{}) \in \setR^d$,
which voter $0$ would prefer to be close to their target vector $\targetvector$.

\subsection{The Many Faces of Strategyproofness}
\label{sec:strategyproofness}

We define the strategic gain as the best multiplicative gain that voter $0$ can obtain by misreporting their preference, i.e. by reporting $\strategicvote{}$ instead of $\targetvector$.
Strategyproofness bounds the maximal strategic gain.

\begin{definition}[$\strategyproofbound$-strategyproofness]
\label{def:strategyproofness}
  $\vote$ is $\strategyproofbound$-strategyproof if, for any others' vectors $\paramfamily{} \in \setR^{d \times \VOTER}$, any target vector $\targetvector \in \setR^d$ and any strategic vote $\strategicvote{0} \in \setR^d$, 
  the strategic gain is at most $1+\strategyproofbound$, i.e.
  \begin{equation*}
    \forall \paramfamily{}, \targetvector, \strategicvote{} \mathsep 
    \norm{\vote(\targetvector, \paramfamily{}) - \targetvector}{2} \leq (1+\strategyproofbound) \norm{\vote(\strategicvote{0}, \paramfamily{}) - \targetvector}{2}.
  \end{equation*}
  Smaller values of $\strategyproofbound$ yield stronger guarantees. If $\strategyproofbound =0$, then we simply say that $\vote$ is strategyproof.
\end{definition}

The opposite of strategyproofness is an arbitrarily manipulable vote, which we define as follows.

\begin{definition}[Arbitrarily manipulable]
\label{def:arbitrarily-manipulable}
  $\vote$ is arbitrarily manipulable by a single voter if, for any others' vectors $\paramfamily{} \in \setR^{  d \times  \VOTER}$ and any target vector $\targetvector \in \setR^d$, there exists $\strategicvote{0} \in \setR^d$ such that $\vote (\strategicvote{0}, \paramfamily{}) = \targetvector$.
\end{definition}

\textcolor{black}{It is possible for a vector aggregation rule to be neither} $\strategyproofbound$-strategyproof nor arbitrarily manipulable. In fact, we show that this is the case for the geometric median. This remark calls for more subtle definitions of strategyproofness.
In particular, it may be unreasonable to demand $\strategyproofbound$-strategyproofness for \emph{all} other voters' inputs $\paramfamily{} \in \setR^{d \times \VOTER}$ (this is known as \emph{dominant strategy incentive compatibility}).
In practice, other voters are usually expected to report some vectors more often than others.
This motivates us to consider an alternative high-probability definition of $\strategyproofbound$-strategyproofness\footnote{Our definition does not coincide with \emph{Bayesian incentive compatibility}, which aims to bound one's \emph{expected} strategic gain.} taking into account the distribution of vectors.
We thus introduce and study \emph{asymptotic $\strategyproofbound$-strategyproofness}.
To define this notion, we first assume that other voters' vectors are drawn\footnote{This setting is similar to ``Worst-case IID susceptibility'' proposed by~\cite{lubin12}. But, we consider high
probability bounds on the gain which is different from the expected regret defined  by~\cite{lubin12}.} independently from some distribution $\paramdistribution$ over $\setR^d$. Asymptotic strategyproofness then corresponds to strategyproofness in the limit where $\VOTER$ is large enough. 

{
\begin{definition}[Asymptotic $\strategyproofbound$-strategyproofness]
\label{def:asymptotic-strategyproofness}
  $\vote$ is asymptotically $\strategyproofbound$-strategyproof if, 
  for any $\varepsilon, \delta >0$, 
  there exists $\VOTER_0 \geq 1$ such that,
  as long as there are $\VOTER \geq \VOTER_0$ other voters whose reported vectors are drawn independently from distribution $\paramdistribution$, 
  then with probability at least $1-\delta$, 
  for any target vector $\targetvector \in \setR^d$,
  and any strategic vote $\strategicvote{0} \in \setR^d$, 
  the strategic gain is bounded by $1+\strategyproofbound+\varepsilon$, i.e., 
  \begin{equation*}
      \probabilitysub{\paramfamily \sim (\paramdistribution)^\VOTER}{
        \forall \targetvector, \strategicvote{0} :
        \event(\strategyproofbound+\varepsilon, \targetvector, \strategicvote{})
        } \geq 1-\delta,
  \end{equation*}
  where $\event(\strategyproofbound+\varepsilon, \targetvector, \strategicvote{})$ is the event
  \begin{equation*}
       \set{ \norm{\vote(\targetvector, \paramfamily{}) - \targetvector}{} \leq (1+\strategyproofbound+\varepsilon) \norm{\vote(\strategicvote{0}, \paramfamily{}) - \targetvector}{}}.
  \end{equation*}
  If $\strategyproofbound = 0$, we say that $\vote$ is asymptotically strategyproof.
\end{definition}
}
Note that this definition implicitly depends on the distribution $\paramdistribution$ of voters' inputs. In fact, we prove that the geometric median is asymptotically $\strategyproofbound$-strategyproof, for a value of $\strategyproofbound$ that we derive from the distribution $\paramdistribution$. 

Finally, we also study the more general case of non-isotropic preferences. To model this, we replace the Euclidean norm by the $\sdpmatrix$-Mahalanobis norm, for some positive definite matrix $\sdpmatrix \succ 0$, which is given by $\norm{\parx}{\sdpmatrix} \triangleq \norm{\sdpmatrix \parx}{2}$.
Intuitively, the eigenvectors with larger eigenvalues of $\sdpmatrix$ represent the directions that matter more to the voter.
Now, if voter $0$ has an $\sdpmatrix$-skewed preference, then they aim to minimize the $\sdpmatrix$-Mahalanobis norm between the result of $\vote(\strategicvote{}, \paramfamily{})$ and the target vector $\targetvector$. 
This leads us to define strategyproofness for skewed preferences as follows.

\begin{definition}%
\label{def:skewed_preferences}
  $\vote$ is $\strategyproofbound$-strategyproof for an $\sdpmatrix$-skewed preference if, for any others' vectors $\paramfamily{} \in \setR^{d \times \VOTER}$, any target vector $\targetvector \in \setR^d$ and any strategic vote $\strategicvote{0} \in \setR^d$, 
   The maximal strategic $\sdpmatrix$-skewed gain is at most $1+\strategyproofbound$, i.e.
  \begin{equation*}
    \forall \targetvector, \strategicvote{} \mathsep
    \norm{\vote(\targetvector, \paramfamily{}) - \targetvector}{\sdpmatrix} \leq (1+\strategyproofbound) \norm{\vote(\strategicvote{0}, \paramfamily{}) - \targetvector}{\sdpmatrix}.
  \end{equation*}
\end{definition}

This notion can then be straightforwardly adapted to define asymptotic $\strategyproofbound$-strategyproofness.

{
\subsection{The Geometric Median}
\label{sec:gm_def}
In this paper, we study the strategyproofness property of a particular \vote{}, i.e., the geometric median. It can be defined for $1+\VOTER$ voters using the average of distances between a vector $\parz$ and the reported vectors:
\begin{equation*}
  \Loss (\strategicvote{0}, \paramfamily{}, \parz) \triangleq \frac{1}{1+\VOTER} \left( \norm{\parz - \strategicvote{0}}{2} + \sum_{\voter \in [\VOTER]} \norm{\parz - \paramsub{\voter}}{2}\right).
\end{equation*}
We can now precisely define the geometric median.

\begin{definition}
A geometric median $\GeometricMedian$ operator is a function $\setR^{d\times{(1+\VOTER)}} \rightarrow \setR^d$ that outputs a minimizer of this average of distances, i.e., for any inputs $\strategicvote{0} \in \setR^d$ and $\paramfamily \in \setR^{d \times {\VOTER}}$, we must have $\GeometricMedian(\strategicvote{0}, \paramfamily) \in \argmin_{\parz \in \setR^d} \Loss (\strategicvote{0}, \paramfamily, \parz)$.
\end{definition}

In dimension $d \geq 2$, the uniqueness of $\GeometricMedian(\strategicvote{0}, \paramfamily)$ can be guaranteed \textcolor{black}{when all vectors do not lie on a $1$-dimensional line} (Proposition~\ref{prop:uniqueness} in Appendix~\ref{sec:GM_uinique}).
Interestingly, the geometric median can be regarded as the result of a dynamic process, 
where, each voter pulls a point $\parz$ towards their preferred vector with a unitary force. 
The geometric median is the equilibrium point, 
when all forces acting on $\parz$ cancel out.
It thus verifies the fairness principle \emph{``one voter, one vote with a unit force''}. Formal discussion is provided in Appendix~\ref{sec:unit_force}.
}

\section{MANIPULABILITY AND STRATEGYPROOFNESS}
\label{sec:non_strategyproofness}

While the average is arbitrarily manipulable by a single voter \citep{BlanchardMGS17}, the geometric median is robust even to a collusion of a strict minority of voters. However, we prove that the geometric median is not (even approximately) strategyproof in the general case.

\subsection{The Geometric Median is Not Arbitrarily Manipulable}
\label{sec:not_manipulable}

As opposed to the average, a strategic voter cannot arbitrarily manipulate the geometric median. 
This property is sometimes known as \emph{Byzantine resilience} in distributed computing, or as \emph{statistical breakdown} in robust statistics. 
Here, we state it in the terminology of computational social choice, and we consider a slightly more general setting than \emph{individual} manipulation.
Namely, we consider \emph{group} manipulation, by allowing a set of voters to collude.
Even then, strategic voters can at most have a bounded impact.
The proof of this result \textcolor{black}{which is adapted from~\citet{lopuhaa1989}} is given in Appendix \ref{sec:proof_nonmanipulabe}.

\begin{proposition}[\cite{lopuhaa1989}]
\label{th:byzantine}
  The geometric median is not arbitrarily manipulable by any minority of colluding voters.
\end{proposition}

This result shows that a  minority of strategic voters whose target vectors differ a lot from a large majority of other voters' reported vectors do not
have full control over the output of the geometric median.

\subsection{The~Geometric~Median~is~Not~\texorpdfstring{$\strategyproofbound$}{a}-Strategyproof}
\label{sec:not_strategyproff}

\begin{figure}%

    \centering
    \includegraphics[width=.43\textwidth]{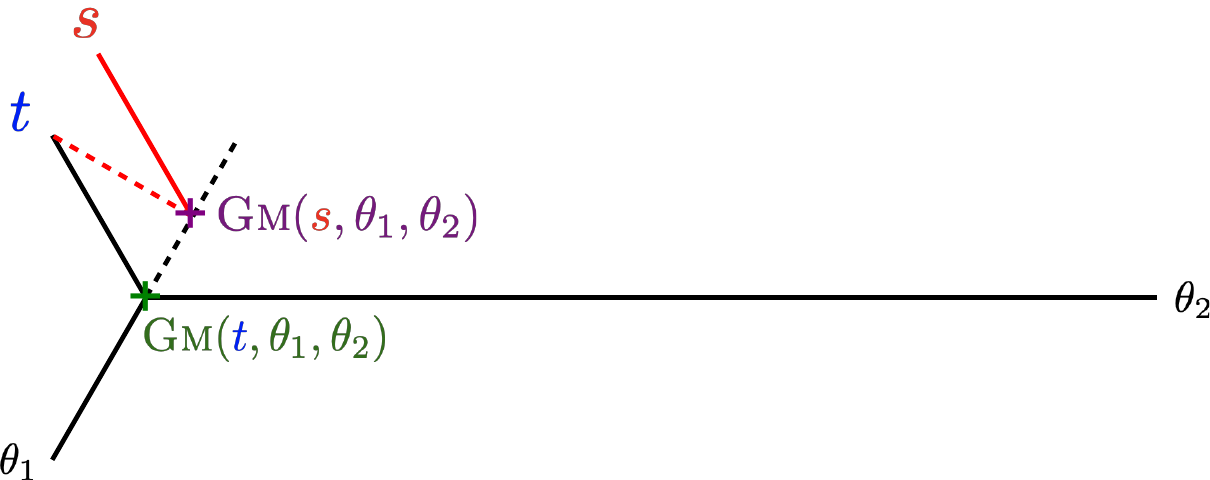}
    \caption{A simple example where the geometric median fails to be strategyproof. This example is easy to analyze in the limit where $\paramsub{2}$ is infinitely far on the right, in which case its pull is always towards the right. Since the unit pulls of all voters must cancel out, there must be a third of a turn between any two unit pull. This shows why, as the strategic voter reports $\strategicvote{0}$ rather than their target vector $\targetvector$, the geometric median moves up the dotted line, closer to $\targetvector$.}
    \label{fig:geometricmedian}
\end{figure}

The (geometric) median is slightly ill-behaved in dimension~1, when $1+\VOTER$ is even.
Typically, if $\VOTER =1$, $\strategicvote{} =  \targetvector= 0$ and $\paramsub{1} = 1$, then any point between $0$ and $1$ is a geometric median (according to our definition). 
A common solution for this case is to take the middle point of the interval of the middle vectors. However, this solution now fails to be strategyproof. Indeed, voter $0$ could now obtain $\GeometricMedian{} (\strategicvote{0}, \paramsub{1}) = \targetvector$ by reporting $\strategicvote{0} = -1$. To retrieve strategyproofness in this setting, \citet{Moulin80} essentially proposed to add one (or any odd number of) fictitious voters.
But, in higher dimensions, even when it is perfectly well-defined, the geometric median fails to guarantee strategyproofness.
Figure \ref{fig:geometricmedian} provides a simple proof of this, where voter $0$ can gain by a factor of nearly $2\sqrt{3}/3 \approx 1.15$.
Below, we prove a stronger result.

\begin{theorem}
\label{th:geometric_median_not_strategyproof}
  Even under $\dim \paramfamily \geq 2$, there is no value of $\strategyproofbound$ for which the geometric median is $\strategyproofbound$-strategyproof.
\end{theorem}

This more precise result has important implications: if a voter knows  they gain a lot by strategic misreporting, then they will more likely invest in, e.g., business intelligence, to optimize their (mis)reporting. 
Their reported preferences will then more likely diverge from their honest preferences.
We sketch the proof of Theorem~\ref{th:geometric_median_not_strategyproof} below. The full proof is highly non-trivial and is given in Appendix~\ref{sec:proof_th_geometric_median_not_strategyproof}.%

\begin{proof}[Sketch of proof]
We study the achievable set $\setAchieve$,
gathering all the possible values of the geometric median that a strategic voter can achieve by strategically choosing their reported vector.
First we show that this set 
is the set
\begin{equation}
        \setAchieve \triangleq \set{\parz \in \setR^d \st \exists \subgradient \in \nabla_\parz \Loss_{}(\paramfamily,\parz) \mathsep \norm{\subgradient}{2}\leq 1/\VOTER }, \label{eq:achivable_set}
\end{equation}
of points $\parz$ where the loss restricted to other voters $\voter \in [\VOTER]$ has a subgradient of norm at most $1/\VOTER$ (Lemma~\ref{lemma:achievable_set}).
The proof of the theorem then corresponds to the example of Figure \ref{fig:ellipsoid}, where other voters' vectors are nearly one-dimensional.
For a large number of voters, we prove, the achievable set is approximately a very flat ellipsoid defined by a matrix $H$ that has very different eigenvalues.
Then we show that the target vector $\targetvector$'s pull is heavily skewed compared to the normal to the ellipsoid. 
This implies that voter $0$ can obtain a significantly better geometric median by misreporting their target vector. 
\end{proof}

\begin{figure}[ht!]
    \centering
    \includegraphics[width=0.45\textwidth]{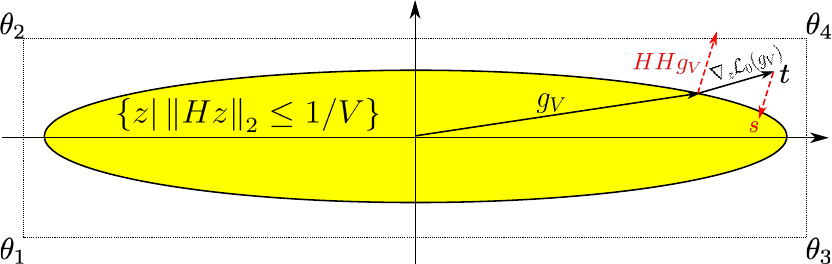}
    \caption{Illustration of the example where the geometric median fails to be $\strategyproofbound$-strategyproof, for any value of $\strategyproofbound$. Voters $v \in [4]$ report vectors that are nearly one dimensional. In the limit of a large number of voters, the achievable set for voter $0$ is an ellipsoid. But the pull of voter $0$'s preferred vector turns out to be skewed compared to the normal to the ellipsoid. This means that voter $0$ can obtain a significantly better geometric median by misreporting their preference.}
    \label{fig:ellipsoid}
\end{figure}

Interestingly, on the positive side, the proof of Theorem~\ref{th:geometric_median_not_strategyproof} requires the strategic voter's target vector to take very precise locations to gain a lot by lying.
Thus, while the geometric median has failure modes where some voters have strong incentives to misreport their preferences, in practice, such incentives are unlikely to be strong.
On the negative side, our proof suggests the possibility of a vicious cycle.
Namely, it underlines the fact that a strategic voter's optimal strategy is to report a vector that is closer to the subspaces where other voters' vectors mostly lie. 
These subspaces may be interpreted as the more polarized dimensions.
As a result, if all voters behave strategically, we should expect the reported vectors to be even more flattened on these subspaces than voters' true target vectors.
But then, if voters react strategically to the other voters' strategic votes, there are even more incentives to vote according to the one-dimensional line.
In other words, the geometric median seems to initiate a vicious cycle where strategic voters are incentivized to escalate strategic behaviours, and this would lead them to essentially ignore all the dimensions of the vote except the most polarized one.

\section{ASYMPTOTIC STRATEGYPROOFNESS}
\label{sec:asymptotic_strategyproofness}

Our negative result of the previous section encourages us to weaken the notion of strategyproofness.
We do so by replacing the bound on voters' strategic gains for \emph{all} other voters' inputs with a bound for \emph{most} of other inputs.
We assume that each voter $\voter$ reports a vector $\paramsub{\voter}$ drawn independently from a probability distribution $\paramdistribution$.
We then study the maximal strategic gain of voter $0$, when there are many other voters whose reported vectors are obtained this way.
Any bound $\strategyproofbound$ that holds with high probability as the number $\VOTER$ of voters is sufficiently large guarantees what we call \emph{asymptotic $\strategyproofbound$-strategyproofness} (see Definition~\ref{def:asymptotic-strategyproofness}).

Throughout this section, we consider a given fixed distribution $\paramdistribution$ of voters' reported vectors.
Our main result relies on the following mild smoothness assumption about the distribution $\paramdistribution$ of other voters' vectors, which is clearly satisfied by numerous classical probability distributions over $\setR^d$, like the normal distribution (with $\PARAM \triangleq \setR^d$).

\begin{assumption}
\label{ass:pdf}
  There is a convex open set $\PARAM \subseteq \setR^d$, with $d \geq 5$, 
  such that the distribution $\paramdistribution$ yields a probability density function $p$ continuously differentiable on $\PARAM$, 
  and such that $\probability{\param \in \PARAM} = 1$ and $\expect{\norm{\param}{2}} = \int_{\setR^d} \norm{\param}{2} p(\param) d\param < \infty$.
\end{assumption}

To simplify notations, we leave the dependence to the distribution implicit.
For any number $\VOTER \in \setN$ of other voters, we denote by $\paramfamily_{\VOTER} \in \setR^{d \times{\VOTER}}$ the random tuple of the $\VOTER$ voters' reported vectors,
and we define 
\begin{equation}
  \Loss_{1:\VOTER} (\parz) \triangleq \frac{1}{\VOTER} \sum_{\voter \in [\VOTER]}\norm{\parz-\paramsub{\voter}}{2},
\end{equation}
 and $\geometricmedian_{1:\VOTER} \triangleq \GeometricMedian(\paramfamily_{\VOTER})$ the random average of distances and the geometric median for the voters $\voter \in [\VOTER]$.
We denote by $\Loss_{0:\VOTER} (\strategicvote{0}, \parz)$ and $\geometricmedian_{0:\VOTER} \triangleq \GeometricMedian(\strategicvote{0}, \paramfamily_{\VOTER})$ the similar quantities that also include voter $0$'s strategic vote $\strategicvote{0}$,
and $\geometricmedian^\dagger_{0:\VOTER} \triangleq \GeometricMedian(\targetvector, \paramfamily_\VOTER)$ the truthful geometric median, which results from voter $0$'s truthful reporting of $\targetvector$.

\subsection{Infinite Limit}
\label{sec:infinite_limit}

Consider the limit where $\VOTER \rightarrow \infty$. The distribution $\paramdistribution$ defines its own average-of-distance function:
\begin{equation}
  \Loss_\infty (\parz) \triangleq \expectVariable{\param\sim\paramdistribution}{\norm{\parz - \param}{2}}.
\end{equation}
We say that $\geometricmedian_\infty$ is a geometric median of the distribution $\paramdistribution$ if it minimizes the loss $\Loss_\infty$. 
Under Assumption \ref{ass:pdf}, the support of $\paramdistribution$ is of full dimension $d$, which guarantees the uniqueness of the geometric median (Proposition~\ref{porp:positive difinite} in Appendix~\ref{sec:proof_asymptotic_strategyproofness}). 
We denote by $\hessian_\infty \triangleq \nabla^2 \Loss_\infty (\geometricmedian_\infty)$ the Hessian at the geometric median. The properties of this matrix will be central to the strategyproofness of the geometric median. 

{
\begin{remark} [on the smoothness assumption]
Note that Assumption~\ref{ass:pdf} is a mild technical assumption, which intuitively guarantees that, for a sufficiently large number of voters, 
the infinite limit case will be approximately recovered. This will allow us to invoke some statistics of $\paramdistribution$ to derive our  strategyproofness bounds. In practice, assuming there are sufficiently many voters, then $\paramdistribution$ may be estimated by the empirical distribution of the reported vectors.
\end{remark}
}

\subsection{The Geometric Median is Asymptotically \texorpdfstring{$\strategyproofbound$}{a}-Strategyproof}

One of our main results is that the geometric median is asymptotically $\strategyproofbound$-strategyproof, for some appropriate value of $\strategyproofbound$ that depends on the skewness of the Hessian matrix $\hessian_\infty$. 
 We define
the skewness of a positive definite matrix $\sdpmatrix$ by 
  \begin{align}
  \label{equ:skewness_definition}
    \skewness(\sdpmatrix)
    &\triangleq \sup_{\parx \neq 0} \set{\frac{\norm{\parx}{2} \norm{\sdpmatrix \parx}{2}}{\parx^T \sdpmatrix \parx} -1}\\
    &= \sup_{\norm{\unitvector{}}{2} = 1} \set{\frac{\norm{\sdpmatrix \unitvector{}}{2}}{\unitvector{}^T \sdpmatrix \unitvector{}} -1}.\nonumber
  \end{align}
This quantity bounds the angle between a vector $x$ and its linear transformation $Sx$. It is straightforward that $\skewness(\beta \sdpmatrix) = \skewness(\sdpmatrix)$ for all $\beta >0$. 
Also the identity matrix has no skewness $(\skewness(I) = 0)$. Intuitively, the more $\sdpmatrix$ distorts the space, typically by having very different eigenvalues, the more skewed it is. In Section~\ref{sec:skewness}, we derive upper and lower bounds on $\skewness$. 
We can now present our main theorem. 

\begin{theorem}
\label{th:asymptotic_strategyproofness}
Under Assumption~\ref{ass:pdf}, the geometric median is asymptotically $\skewness(\hessian_\infty)$-strategyproof.
\end{theorem}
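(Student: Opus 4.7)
My plan is to characterize the set of outputs voter $0$ can achieve by strategically reporting $\strategicvote{0}$. By the first-order optimality of the geometric median---at the minimizer $g$ of $\Loss$, the unit-force contribution from voter $0$ must cancel the sum of unit-force contributions from the honest voters---this \emph{achievable set} is exactly
$\setAchieve = \{g \in \setR^d : \|\nabla_g \Loss_{-0}(g)\|_2 \leq 1/(1+\VOTER)\}$,
where $\Loss_{-0}(g) \triangleq \frac{1}{1+\VOTER}\sum_{\voter \in [\VOTER]} \|g - \paramsub{\voter}\|_2$. If $\setAchieve$ is convex, voter $0$'s optimal strategic outcome is the orthogonal projection $\pi_0$ of $\paramsub{0}$ onto $\setAchieve$, and the strategyproofness ratio reduces to comparing $\|\paramsub{0} - \geometricmedian^\dagger\|_2$ with $\|\paramsub{0} - \pi_0\|_2$, where $\geometricmedian^\dagger \triangleq \GeometricMedian(\paramsub{0},\paramfamily)$ is the truthful output.

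First I would analyze the infinite-voter idealization. Under Assumption~\ref{ass:pdf}, $\Loss_\infty(\cdot,\paramdistribution)$ is strictly convex and $C^3$ in a neighborhood of its unique minimizer $\geometricmedian_\infty(\paramdistribution)$; the hypothesis $d \geq 5$ is used to guarantee that the integrable singularities of the second and third derivatives of $\|\cdot\|_2$ do not accumulate, so that the expected derivatives of $\Loss_\infty$ are well-defined. Writing $\varepsilon \triangleq 1/(1+\VOTER)$ and Taylor-expanding $\nabla \Loss_\infty$ around $\geometricmedian_\infty$, the idealized achievable set is, to leading order in $\varepsilon$, the ellipsoid $\mathcal E \triangleq \{g : \|\hessian_\infty(g - \geometricmedian_\infty)\|_2 \leq \varepsilon\}$, with semi-axes of size $\varepsilon/\eigenvalue_i(\hessian_\infty)$.

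The core of the argument is then a geometric comparison of $\geometricmedian^\dagger$ and $\pi_0$ on this ellipsoid. Linearizing the truthful optimality condition $\nabla \Loss_{-0}(\geometricmedian^\dagger) = -\varepsilon(\geometricmedian^\dagger - \paramsub{0})/\|\geometricmedian^\dagger - \paramsub{0}\|_2$ yields that $\hessian_\infty(\geometricmedian^\dagger - \geometricmedian_\infty)$ is parallel to $\paramsub{0} - \geometricmedian^\dagger$, whereas the projection condition says $\paramsub{0} - \pi_0$ is parallel to the outward normal $\hessian_\infty^2(\pi_0 - \geometricmedian_\infty)$. The worst case for the strategyproofness ratio occurs when $\paramsub{0}$ is just outside $\mathcal E$, so that both distances are small and of the same order. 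Parametrizing $\paramsub{0} = b_0 + \eta \hat{n}$ for $b_0$ on the boundary and $\hat n$ the outward normal there, a first-order expansion in $\eta \to 0^+$ gives $\|\paramsub{0} - \pi_0\|_2 \approx \eta$ and $\|\paramsub{0} - \geometricmedian^\dagger\|_2 \approx \eta\cdot \|w\|_2 \|\hessian_\infty w\|_2 /(w^\top \hessian_\infty w)$ with $w \triangleq \hessian_\infty(b_0 - \geometricmedian_\infty)$. Taking the supremum over $w \neq 0$ gives exactly the claimed multiplicative bound $1 + \skewness(\hessian_\infty(\paramdistribution))$.

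Finally I would transfer this infinite-voter picture to the finite-voter setting. Using concentration of measure for i.i.d.\ samples---uniform laws of large numbers for $\nabla \Loss_{-0}$ and its first two derivatives on a fixed neighborhood $B$ of $\geometricmedian_\infty$, at rate $O(1/\sqrt{\VOTER})$, together with matrix concentration of the empirical Hessian at $\geometricmedian_\infty$ toward $\hessian_\infty$---the approximate-ellipsoid picture above holds with probability at least $1-\delta$ once $\VOTER$ is large enough. The main obstacle is precisely this transfer, because $\Loss_{-0}$ is non-smooth at each $\paramsub{\voter}$ and my derivative-based arguments require smoothness on $B$. The remedy is a second concentration argument: in dimension $d \geq 5$ with a smooth density, with high probability every $\paramsub{\voter}$ lies at distance $\Omega(1)$ from $B$, so $\Loss_{-0}$ is $C^3$ on $B$ with uniformly bounded derivatives. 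Strict convexity of the finite-voter $\setAchieve$, and confinement of $\geometricmedian^\dagger$ to $B$, then both follow from the third-derivative tensor of $\Loss_{-0}$ being small relative to $\hessian_\infty$ on the relevant $O(\varepsilon)$ scale. Sending $\VOTER \to \infty$, all approximation errors vanish, which yields the claimed asymptotic $\skewness(\hessian_\infty(\paramdistribution))$-strategyproofness up to an arbitrary $\varepsilon' > 0$.
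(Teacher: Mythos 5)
Your overall strategy coincides with the paper's: characterize the achievable set $\setAchieve$ as the sublevel set of $\norm{\nabla \Loss_{1:\VOTER}}{2}$ at level $1/\VOTER$, show that for large $\VOTER$ it is convex and approximately the ellipsoid determined by $\hessian_\infty$, and bound voter $0$'s gain by the angle between $\paramsub{0} - \geometricmedian^\dagger_{0:\VOTER}$ and the normal to the supporting hyperplane at $\geometricmedian^\dagger_{0:\VOTER}$, yielding the worst-case ratio $\norm{w}{2}\norm{\hessian_\infty w}{2}/(w^T \hessian_\infty w) \leq 1 + \skewness(\hessian_\infty)$. This is exactly the mechanism of Lemma~\ref{lemma:convex_projection_strategyproof} (the paper projects onto the tangent half-space through $\geometricmedian^\dagger_{0:\VOTER}$, whose normal is $\nabla^2 \Loss_{1:\VOTER} \cdot \nabla \Loss_{1:\VOTER}$, rather than onto the ellipsoid itself, but the constant is the same), and your infinite-voter analysis, including the role of $d \geq 5$ for integrability of the higher derivatives of $\Loss_\infty$, matches Proposition~\ref{porp:positive difinite}.

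However, the finite-voter transfer---which you rightly identify as the main obstacle---contains a genuine error. You claim that, in dimension $d \geq 5$, with high probability every $\paramsub{\voter}$ lies at distance $\Omega(1)$ from a \emph{fixed} neighborhood $B$ of $\geometricmedian_\infty$, so that $\Loss_{1:\VOTER}$ is $C^3$ on $B$ with uniformly bounded derivatives. Nothing in Assumption~\ref{ass:pdf} makes $p$ vanish near $\geometricmedian_\infty$ (for a Gaussian, $\geometricmedian_\infty$ is the mode), so each sample lands within any fixed distance of $B$ with a fixed positive probability, and as $\VOTER \to \infty$ voters appear inside $B$ with probability tending to one; high dimension makes small balls improbable but does not make fixed-radius neighborhoods avoidable. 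The paper's repair (Lemma~\ref{lemma:no_voter}) is to exclude only a \emph{shrinking} ball $\ball(\geometricmedian_\infty, \VOTER^{-r_1})$ with $r_1 > 1/d$: a union bound gives exclusion probability $O(\VOTER^{1 - r_1 d}) \to 0$, and this is where $d \geq 5$ is really used, to allow exponents $2/d < 2r_1 < r_3 < r_2 < 1/2$. The price is that the derivative bounds degrade polynomially: on $\ball(\geometricmedian_\infty, \VOTER^{-r_3})$ the third derivative is only $O(1 + \VOTER^{3r_1 - r_3})$ (Lemma~\ref{lemma:bounded_third_derivative}), not $O(1)$, and the convexity criterion $\nabla^2 F \cdot \nabla^2 F + \nabla^3 F \cdot \nabla F \succeq 0$ survives only because $\norm{\nabla \Loss_{1:\VOTER}}{2} = O(1/\VOTER)$ on the relevant ball, making the perturbation $O(\VOTER^{3r_1 - r_3 - 1}) \to 0$ against $\lambda_{\min}(\hessian_\infty)^2/4$. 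Without replacing your separation claim by this shrinking-radius argument and carrying through the exponent bookkeeping, the uniform smoothness, the bounded third derivative, and hence the convexity of $\setAchieve$ are all unjustified; with that repair, your plan completes as in the paper.
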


Intuitively, the more the distribution of the reported vectors is flattened along some dimensions, which can be interpreted as more polarized dimensions, the worse the strategyproofness bound is. The proof of this theorem is given in Appendix~\ref{sec:proof_steps_main}.  In the next section, we provide a brief proof sketch to help the readers follow our reasoning.
\subsection{Proof Techniques and Technical Challenges}

\begin{figure}%
    \centering
    \includegraphics[width=.3\textwidth]{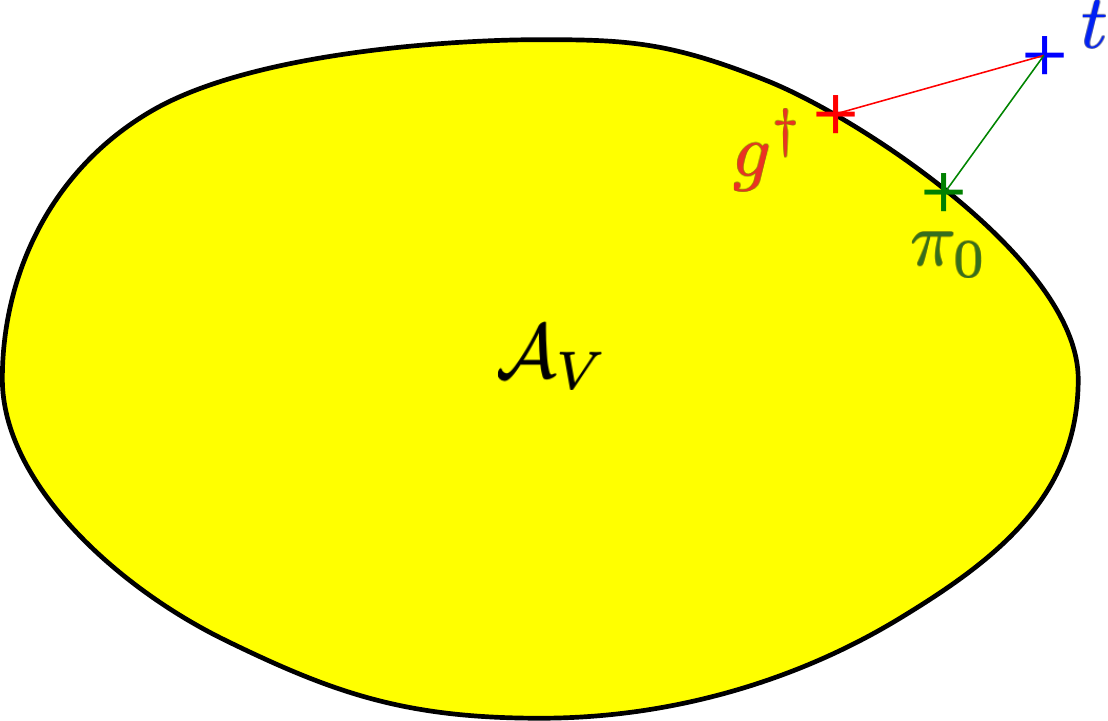}
    \caption{Illustration of our proof strategy. 
    For a large number of voters, the achievable set $\setAchieve$ is approximately an ellipsoid. 
    To derive the strategyproof bounds, we study the orthogonal projection $\pi_0$ of the target vector $\targetvector$.
Strategyproofness then depends on the angle between $\targetvector - g^\dagger$ and $\targetvector - \pi_0$,
    which we derive from the skewness of the positive definite matrix that approximately defines the ellipsoid. 
    }
    \label{fig:sketch_proof}
\end{figure}

The proof of Theorem~\ref{th:asymptotic_strategyproofness} relies on the following steps:
{
\paragraph{\textbf{1. Approximating the Achievable Set with an Ellipsoid.}} We first consider the infinite-case   assuming a strategic voter with a very small voting power $\varepsilon$ where the voting power of each voter is the magnitude of their pull compared to the sum of all pulls (see Section~\ref{sec:gm_def}).
By analyzing the Taylor's approximation of the gradient of the loss function for other voters, we show that the achievable set for the strategic voter (defined in \eqref{eq:achivable_set}) becomes approximately an ellipsoid as $\varepsilon \rightarrow 0$.
Now, as shown in Figure~\ref{fig:sketch_proof}, since the ellipsoid is convex, the best-possible achievable point for the strategic voter is the orthogonal projection $\pi_0$ of the target vector $\targetvector$ on the ellipsoid.
By comparing the distance between $\targetvector$ and $\pi_0$ to the distance between $\targetvector$ and the geometric median $\geometricmedian^\dagger$ obtained by a truthful reporting of $\targetvector$, we then obtain what the strategic voter can gain by behaving strategically, in the infinite-voter case where they have a very small voting power $\varepsilon$.
Intuitively, the more flattened the ellipsoid, the more the strategic voter can gain; conversely, 
for a quasi-hyperspherical ellipsoid,
the strategic voter cannot gain by misreporting.

\paragraph{\textbf{2. Deriving a Finite-voter Case from the Infinite One.}}
To obtain meaningful strategyproofness guarantees, we consider the  finite-voter case with a large (but not infinite) number of voters.
Unfortunately, the finite-voter case is trickier than the infinite-voter case.
To retrieve the strategyproofness bound,  we need in addition to  bound the divergence between the finite-voter case and the infinite-voter case.
Fortunately, for $\VOTER$ large enough, the voting power of a single strategic voter is small, which allows us to quasi-reduce the finite-voter case to the infinite-voter case. In fact, one important challenge of the proof is to leverage the well-behaved smoothness of the infinite-voter case to derive bounds for the finite-voter case, where singularities and approximation bounds make the analysis trickier. Indeed, while the infinite-voter loss function is smooth enough everywhere (under Assumption~\ref{ass:pdf}), the finite-voter loss function is not differentiable everywhere. 
At any point $\paramsub{\voter}$, it yields a nontrivial set of subgradients.
This complicates the analysis, as we exploit higher order derivatives.

To address this difficulty, we identify different regions around the infinite-voter geometric median where the finite-voter loss function is well-behaved enough as shown in Figure~\ref{fig:proof_strategy}.
Namely, in high dimensions, assuming a smooth distribution $\paramdistribution$,  the distances between any two randomly drawn vectors are large.
Concentration bounds allow us to guarantee that, with high probability, other voters' vectors $\paramsub{\voter}$  are all far away from the infinite geometric median $\geometricmedian_\infty$  (Lemma~\ref{lemma:no_voter} in the Appendix).
This has two important advantages. 
First, it guarantees the absence of singularities in a region around $\geometricmedian_\infty$.
Second, and more importantly, it allows us to control the variations of higher-order derivatives in this region (Lemma~\ref{lemma:close_hessian}).
This turns out to be sufficient to guarantee that the finite-voter geometric median is necessarily within this region.
\paragraph{\textbf{3. Controlling the Largeness of the Third Derivative Tensor.}}
Another challenge that we encountered was to guarantee that the achievable set in the finite-voter setting is convex.
This condition is indeed critical to provide an upper bound on $\strategyproofbound$, since it enables us to determine the strategic voter's optimal strategy by studying the orthogonal projection of the target vector onto the achievable set.
To prove this condition, we identify a sufficient condition, which involves the third derivative tensor of the finite-voter loss function (lemmas~\ref{lemma:convexity} and~\ref{lemma:convex_projection_strategyproof}).
Fortunately, just as we manage to guarantee that the finite-voter geometric median is necessarily close enough to the infinite-voter geometric median (Lemma~\ref{lemma:close_median}), using similar arguments based on concentration bounds, we successfully controlled the largeness of the third derivative tensor (Lemma~\ref{lemma:bounded_third_derivative}).
Therefore, for a large number of voters and with high probability, the achievable set is convex. Additionally, it is approximately an ellipsoid, which is characterized by the infinite-voter Hessian matrix $\hessian_\infty$. As a result, and since ``rounder'' ellipsoids yield better strategyproofness guarantees, when the number of voters is sufficiently large, the strategic gain of a strategic voter is upper-bounded by how skewed the infinite-voter Hessian matrix $\hessian_\infty$ is. 
}

\begin{figure}[t]
    \begin{center}
    \includegraphics[width=.44\textwidth]{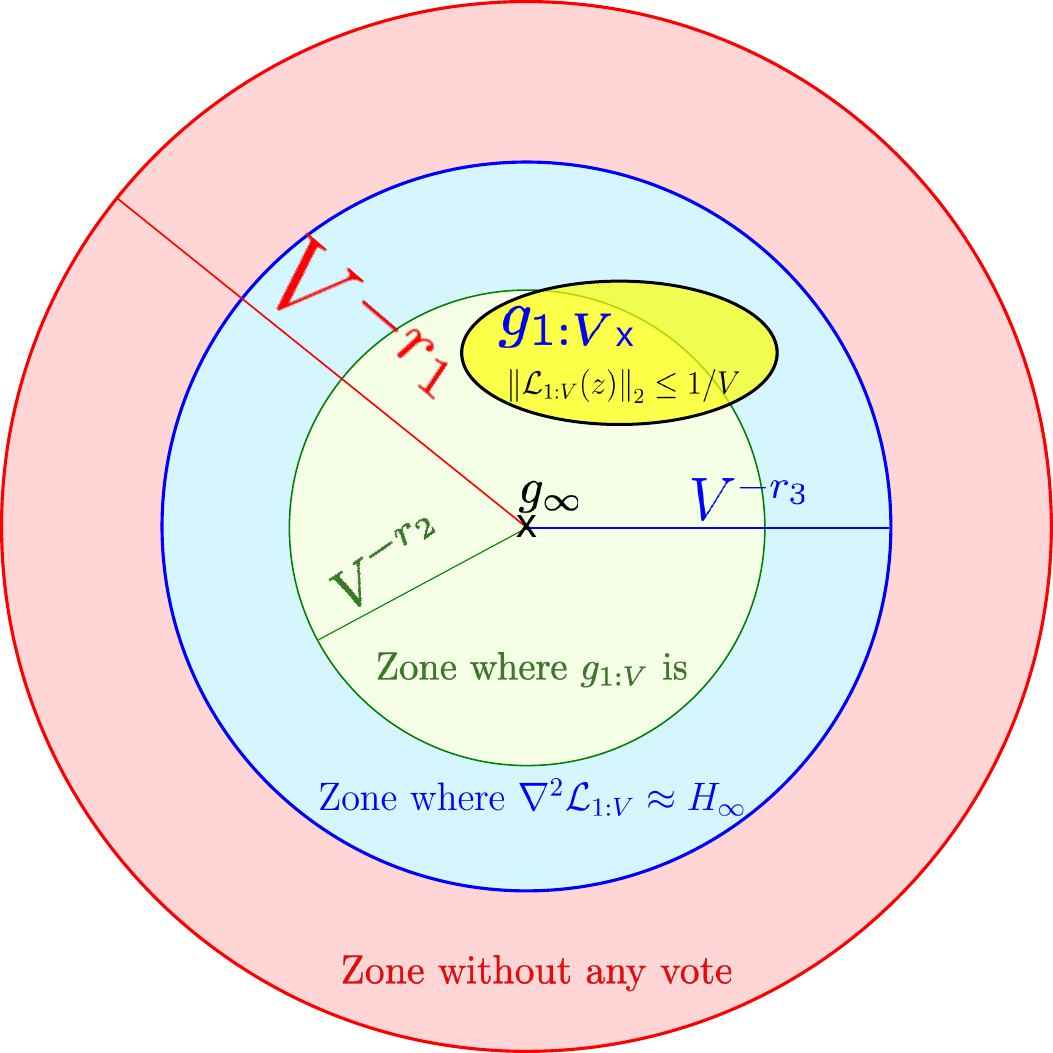}
    \end{center}
    \vspace{-5mm}
    \caption{Illustration of the proof strategy for Theorem \ref{th:asymptotic_strategyproofness}, which is based on the following claims that hold with high probability, for $2/d<2\ra<\rb<\rc<1/2$, and for $\VOTER$ large enough.
    First, there is no vote in $\ball(\geometricmedian_\infty, \VOTER^{-\ra})$ (a ball centered on $\geometricmedian_\infty$, and of radius $\VOTER^{-\ra}$). Thus, $\Loss_{1:\VOTER}$ is infinitely differentiable there.
    Moreover, the second and third derivatives of $\Loss_{1:\VOTER}$ cannot be too different from the second and third derivatives of $\Loss_\infty$ in $\ball(\geometricmedian_\infty, \VOTER^{-\rb})$. 
    Plus,  $\geometricmedian_{1:\VOTER}$ lies in $\ball(\geometricmedian_\infty, \VOTER^{-\rc})$, and the set of geometric medians that voter $0$ can obtain by misreporting their preferences is approximately an ellipsoid centered on $\geometricmedian_{1:\VOTER}$. This ellipsoid lies completely inside $\ball(\geometricmedian_\infty, \VOTER^{-\rb})$.
    }
    \label{fig:proof_strategy}
\end{figure}

\color{black}

\subsection{Bounds on $\skewness$}
\label{sec:skewness}

As we saw, the asymptotic strategyproofness of the geometric median depends on the skewness of the Hessian matrix $\hessian_\infty$, defined in Equation (\ref{equ:skewness_definition}). 
In this section, we derive upper and lower bounds on the skewness function based on the ratio of the extreme eigenvalues of the matrix.
Intuitively, the more different the eigenvalues of $\sdpmatrix$ are, the more skewed $\sdpmatrix$ is. 
We formalize this intuition with upper and lower bounds, whose proofs are given in Appendix \ref{sec:proof_prop_skewness_lowerbound}.  
\begin{proposition}
\label{prop:skewness_lowerbound}
  Denote $\Lambda \triangleq \frac{\max \spectrum(\sdpmatrix)}{\min \spectrum(\sdpmatrix)}$ the ratio of extreme eigenvalues of $\sdpmatrix$. Then
    $\frac{1 + \Lambda}{2 \sqrt{\Lambda}} - 1
    \leq \skewness(\sdpmatrix) 
    \leq \Lambda - 1$.
  In dimension 2, the lower-bound inequality is an equality.
\end{proposition}

\section{SKEWNESS GENERALIZATIONS}
\label{sec:skewed_preferences}
We generalize our main result in two aspects. First, we consider skewed preferences where users give different weights to different dimensions. Second, we study the skewed geometric median which can be derived by re-scaling the space before computing the geometric median.

\subsection{Skewed Preferences}

Our analysis so far rested on the assumption that voters have single-peaked preferences, which depend on the Euclidean distance between the geometric median and their preferred vectors. 
While this makes our analysis simpler, in practice, this assumption is not easy to justify.
In fact, it seems reasonable to assume that some dimensions have greater importance for voters than others.

This motivates us to introduce $\sdpmatrix$-skewed preferences, for a positive definite matrix $\sdpmatrix$. More precisely, we say that a voter $\voter$ has an $\sdpmatrix$-skewed preference if they aim to minimize $\norm{\geometricmedian - \paramsub{\voter}}{\sdpmatrix}$,
where $\geometricmedian$ is the result of the vector vote and $\norm{\parz}{\sdpmatrix} \triangleq \norm{\sdpmatrix \parz}{2}$ is  the $\sdpmatrix$-Mahalanobis norm. 
Intuitively, the matrix $\sdpmatrix$ allows us to highlight which directions of space matter more to voter $\voter$.
For instance, if $\sdpmatrix = \begin{pmatrix} Y & 0 \\ 0 & 1 \end{pmatrix}$, with $Y \gg 1$, it means that the voter gives a lot more importance to the first dimension than to the second dimension.

\subsection{The Skewed Geometric Median}
\label{sec:skewed_geometric_median}

Intuitively, to counteract voters' strategic exaggeration incentives,
we could make it more costly to express strong preferences along the more polarized and more important dimensions.
In other words, voters would have a unit force along less polarized dimensions, and a less-than-unit force along more polarized dimensions.
We capture this intuition by introducing ``$\adaptivematrix$-skewed geometric median'' for a  positive definite matrix $\adaptivematrix \succ 0$.

\paragraph{\textbf{Skewed Loss.}}
We define the $\adaptivematrix$-skewed infinite loss  as $$\Loss^\adaptivematrix_\infty (\parz, \paramdistribution) \triangleq \expect_{\param\sim\paramdistribution}{\norm{\parz - \paramsub{} }{\adaptivematrix}},$$ using the $\adaptivematrix$-Mahalanobis norm ($\norm{\parz}{\adaptivematrix} \triangleq \norm{\adaptivematrix \parz}{2}$), and we call $\adaptivematrix$-skewed geometric median $\geometricmedian_\infty^\adaptivematrix$ its minimum.
We also introduce their finite-voter equivalents, for $1+\VOTER$ voters, by
\begin{equation*}
    \Loss^\adaptivematrix_{0:\VOTER} (\strategicvote{0}, \parz)
    \triangleq \frac{1}{1+\VOTER} \norm{ \strategicvote{0} - \parz }{\adaptivematrix} + \frac{1}{1+\VOTER} \sum_{\voter \in [\VOTER]} \norm{ \paramsub{\voter} - \parz }{\adaptivematrix},
\end{equation*}
and $\geometricmedian^\adaptivematrix_{0:\VOTER} = \argmin_{\parz} \Loss^\adaptivematrix_{0:\VOTER} (\strategicvote{0}, \parz)$.
Intuitively, this is equivalent to mapping the original space to a new space using the linear transformation $\adaptivematrix$, and computing the geometric median in this new space (Lemma~\ref{lemma:skewedGMComp} in Appendix~\ref{sec:proof_skewed_preferences}).

\begin{remark}
Interestingly, we also show that this skewed geometric median can be interpreted as modifying the way we measure the norm of voters' forces in the original space, thereby guaranteeing its consistency with the fairness principle \emph{``one voter, one vote with a unit force''}. The formal discussion is given in Appendix~\ref{sec:alternative}.
\end{remark}

\subsection{Strategyproofness of the Skewed Geometric Median for Skewed Preferences}

For any skewing positive definite matrix $\adaptivematrix$, we define 
$\hessian^\adaptivematrix_\infty \triangleq \nabla^2 \Loss^\adaptivematrix_\infty (\geometricmedian_\infty^\adaptivematrix)$ the Hessian matrix of the skewed loss at the skewed geometric median.
We then have the following asymptotic strategyproofness guarantee for an appropriately skewing matrix. The sketch of the proof is provided in Appendix \ref{sec:proof_skewed_preferences}.

\begin{theorem}
\label{th:skewed_geometric_median}
Under Assumption~\ref{ass:pdf}, the $\adaptivematrix$-skewed geometric median is asymptotically $\skewness(\sdpmatrix^{-1} \hessian^\adaptivematrix_\infty \sdpmatrix^{-1})$-strategyproof for a voter with $\sdpmatrix$-skewed preferences.
In particular, if $\hessian^\adaptivematrix_\infty = \sdpmatrix^{1/2}$, then the $\adaptivematrix$-skewed geometric median is asymptotically strategyproof for this voter.
\end{theorem}

\paragraph{\textbf{Interpretation.}}
Let us provide additional insights into what the theorem says.
Intuitively, the theorem asserts that the strategyproofness of the normal geometric median ($\adaptivematrix = I$) depends on how much an individual cares about polarized dimensions.
More precisely, the more the voter cares about polarized dimensions, the less strategyproof the geometric median is.

Indeed, suppose that the first dimension is both highly polarized and very important to voter $0$. 
The fact that it is polarized would typically correspond to a Hessian matrix of the form $\hessian_\infty = \begin{pmatrix} 1 & 0 \\ 0 & X^2 \end{pmatrix}$, with $X \gg 1$ 
(see the proof of Theorem \ref{th:geometric_median_not_strategyproof}).
The fact that voter $0$ cares a lot about the first dimension would typically correspond to a skewed preference matrix $\sdpmatrix = \begin{pmatrix} Y & 0 \\ 0 & 1 \end{pmatrix}$, with $Y \gg 1$.
We then have $\sdpmatrix^{-1} \hessian_\infty \sdpmatrix^{-1} = \begin{pmatrix} Y^{-2} & 0 \\ 0 & X^2 \end{pmatrix}$.
By Proposition \ref{prop:skewness_lowerbound}, we then have $\skewness(\sdpmatrix^{-1} \hessian_\infty \sdpmatrix^{-1}) = \frac{X^2+Y^{-2}}{2 \sqrt{X^2 Y^{-2}}} - 1 = \Theta(XY)$, which is very large for $X,Y \gg 1$.
In particular, this makes the normal geometric median unsuitable for voting problems where some dimensions are much more polarized and regarded as important by most voters. {Now, interestingly, if we find a skewing matrix $\adaptivematrix$ that weakens the voters' pulls in the first dimensions, making the Hessian matrix approximately  $\hessian^\adaptivematrix_\infty \approx  \begin{pmatrix} 1 & 0 \\ 0 & \frac{1}{Y^2} \end{pmatrix}$, then the resulting geometric median becomes asymptotically strategyproof.}
\paragraph{\textbf{Remarks on the Skewed Hessian Matrix.}}
\label{sec:skewed_hessian_matrix}
In general, $\geometricmedian_\infty^\adaptivematrix \neq \geometricmedian_\infty$ (Proposition~\ref{prop:invariance_linear_transformation} in Appendix~\ref{sec:GM}).
This makes identifying a skewing matrix $\adaptivematrix$ such that $\hessian^\adaptivematrix_\infty = \sdpmatrix^{1/2}$ challenging. In particular, it is hard to determine how such a matrix relates to the statistics of $\paramdistribution$.
We note however the following connection between the Hessian matrix $\nabla^2 \Loss^\adaptivematrix_\infty (\parz)$ of the $\adaptivematrix$-skewed loss and the Hessian matrix $\nabla^2 \Loss_\infty (\parz)$ of the Euclidean loss. The proof is given in Appendix \ref{sec:proof_proposition_skewed_hessian}.

\begin{proposition}
\label{proposition:skewed_hessian}
For any $\parz \in \setR^d$, we have $\nabla^2 \Loss^\adaptivematrix_\infty (\parz) = \adaptivematrix (\nabla^2 \Loss_\infty) (\adaptivematrix \parz, \adaptivematrix \paramdistribution) \adaptivematrix$.
\end{proposition}

Note that in particular, if $\geometricmedian_\infty^{\left(\hessian_\infty^{-1/2}\right)} = \geometricmedian_\infty$ and if $\nabla^2 \Loss_\infty (\hessian_\infty^{-1/2} \parz, \hessian_\infty^{-1/2} \paramdistribution) = \hessian_\infty$, then the $\hessian_\infty^{-1/2}$-skewed geometric median is asymptotically strategyproof.
This will be the case if the support of $\paramdistribution - \geometricmedian_\infty$ lies in the union of the eigenspaces of $\adaptivematrix$, as this implies that, when 
$\param{}$ is drawn from $\paramdistribution$, the vectors $\adaptivematrix \param{} - \adaptivematrix \geometricmedian_\infty$ and $\param{} - \geometricmedian_\infty$ are colinear and point in the same direction with probability 1.
But, in general, these assumptions do not hold. 
This makes the computation of the appropriate skewing challenging.
We thus leave open the problem of proving the existence and uniqueness (up to overall homothety) of such a matrix, as well as the design of algorithms to compute it.

{\color{red}
}

\section{NUMERICAL EXPERIMENT}
\label{sec:exp}

{Strategyproofness is commonly studied purely theoretically,
as empirical strategyproofness evaluation is hard to perform in a meaningful and fair way.
Indeed, it requires identifying optimal attacks against a system,
which often amounts to solving an intractable optimization problem.
In particular, if such an empirical evaluation fails to find an effective attack,
it is unclear if this is because no such attack exists, or because no such attack has been found.
}
Nevertheless, here we provide  a simple experiment to evaluate the effect of the (skewness of the) underlying distribution on the strategic gain  $\alpha$ when using the geometric median to aggregate voters' vectors.
First, we sample $500000$ vectors from a $2$ dimensional Gaussian distribution $\paramdistribution$ with mean $0$ and covariance matrix of $\begin{pmatrix} c & 0 \\ 0 & \frac{1}{c} \end{pmatrix}$ for a parameter $c$. Note that as shown in Proposition~\ref{prop:skewness_lowerbound}, $c$ is closely related to the skewness of distribution $\paramdistribution$. We assume the strategic voters have a $1\%$ voting power, i.e., we simulate $5000$ strategic voters all with the same target vector $t$. Then, to find a vulnerable target vector, we use  a heuristic idea similar to that of Figure~\ref{fig:ellipsoid}. Essentially, in each dimension, we find the extreme achievable geometric median for the strategic voters. The target vector $t$ is then the combination of these extreme values of both dimensions.  Finally, We approximately find the maximum strategic gain by performing a grid search of the best reported vector $s$ in a neighborhood of $t$. Figure~\ref{fig:experiment} shows the dependence of the strategic gain $\alpha$ on parameter $c$ and  validates the intuition that the more skewed the distribution, the less strategyproof geometric median is. This experiment demonstrates that the skewness of the underlying distribution is a crucial factor to consider when assessing the strategyproofness of the geometric median. The code is available at [{\color{blue}\href{https://github.com/sadeghfarhadkhani/GM_Startegyproof}{this link}}].

\begin{figure}%
    \centering
    \includegraphics[width=.38\textwidth]{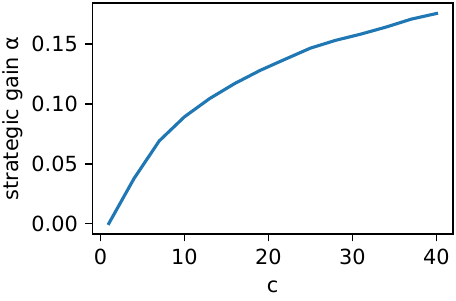}
    \caption{Dependence of the maximum strategic gain $\alpha$ on parameter $c$, where $c$ is the square root of condition number of the underlying distribution's covariance matrix.}
    \label{fig:experiment}
\end{figure}

\color{black}
\vspace{-3mm}

\section{RELATED WORK}
\label{sec:related_work}
\vspace{-2mm}
Strategyproofness in one dimension has been extensively studied~\citep{Moulin80,Procaccia13,AAAIW1510182}. 
It was shown~\citep{Moulin80}  that a generalized form of the median is group strategyproof,
and  that the randomized Condorcet voting system is also group strategyproof for single-peaked preferences~\citep{Hoang17}. 
The one-dimension median was also leveraged
for mechanism design without payment~\citep{Procaccia13}.

However, generalizing the median to higher dimensions is not straightforward~\citep{lopuhaa1989}. A common generalization known as the {\it coordinate-wise median}, was shown to be 
strategyproof, but not 
group strategyproof~\citep{Sui15}. 
The extent to which the generalized coordinate-wise median and the quantile mechanism are $\strategyproofbound$-(group)-strategyproof have been studied by~\citet{Sui15}, though their definition slightly diverges from ours (their error is additive, not multiplicative). 
Remarkably, it was shown by~\citet{KIM198429} that, in dimension 2, the only strategyproof, anonymous and continuous voting system is the (generalized) coordinate-wise median. 

Without restricting the dimension, but assuming the vectors to be taken from compact subsets of Euclidean spaces, strategyproof voting systems were characterized assuming   
all voters have generalized single-peaked preferences~\citep{barbera1998strategy}. This approach built upon~\citet{border1983straightforward} which characterized strategyproof voting systems for Cartesian product ranges. In both cases,  the set of strategyproof voting systems was defined as the class of  \emph{generalized (coordinate-wise) median voter schemes} which were shown in the case of~\cite{barbera1998strategy} to also satisfy the intersection property introduced by\footnote{This property roughly guarantees a certain level of coordination between the decisions taken on each coordinate.}~\cite{barbera1997voting}.

Overall, the coordinate-wise median has more desirable strategyproofness than the geometric median~\citep{FarhadkhaniGH21}. 
It is also important to notice that, as opposed to the coordinate-wise median, the geometric median guarantees that the output vector belongs to the convex hull of voters' vectors (Proposition \ref{prop:convex_hull}).
This makes the coordinate-wise median unsuitable for problems where the space of relevant vectors is the convex hull of the input vectors. This holds, for instance, for the budget allocation problem, whose decision vector $\parz$ must typically satisfy $\parz \geq 0$ and $\sum \parz[i] = 1$. 
In dimension 3, if three voters have preferences $(1,0,0)$, $(0,1,0)$ and $(0,0,1)$, then the coordinate-wise median would output $(0,0,0)$ which may be undesirable. 
On the other hand, the geometric median would output $(1/3,1/3,1/3)$, which seems more desirable.
Similarly, the coordinate-wise median is unfit to aggregate covariant matrices, which must be symmetric and semi-definite positive.

Another line of work focused on bounding the approximation ratio, which is the extent to which social cost is lost by using alternative aggregation rules like coordinate-wise median \citep{goel2020,pinyan09,walsh20}. Several papers also consider other variations of this problem, e.g., choosing $k$ facility locations instead of one \citep{esco11}, assigning different weights to different nodes \citep{zhang2014}, and assuming that the nodes lie on a network represented by a graph \citep{Alon2009}. Others have addressed the computational complexity of the geometric median~\citep{CohenLMPS16}. {Another work \citep{Brady17} shows that for three agents the geometric median is the only rule that satisfies anonymity, neutrality, and Maskin-Monotonicity. }

\section{CONCLUSION}
\label{sec:conclusion}

We analyzed different flavors of strategyproofness for the geometric median, an instrumental component of the secure machine learning toolbox.
First, we showed that, in general, there can be no guarantee of approximate-strategyproofness, by exhibiting worst-case situations.
However, we proved that, assuming that voters' vectors follow some distribution $\paramdistribution$,   asymptotic $\strategyproofbound$-strategyproofness can be ensured.
We then generalized our  results to the case where some dimensions may matter more to the voters than other dimensions. 
In this setting, we proved that the geometric median becomes {\it less} strategyproof, when some dimensions are more polarized and more important than others.
Finally, we showed how the skewed geometric median can improve asymptotic strategyproofness, by providing more voting rights along more consensual dimensions.
Overall, our analysis helps better identify the settings where the geometric median can indeed be a suitable solution to high dimensional voting.  

\section*{Acknowledgements}
We thank Rafael Pinot for the very helpful comments on the introduction of the paper. We thank the anonymous reviewers for their constructive comments. This work has been supported in part by the Swiss National
Science Foundation (SNSF) project $200021{\_}200477$.
\bibliography{references}

\onecolumn
\appendix

\begin{center}
    \LARGE \bf {Appendix}
\end{center}

\section*{Organization}

The appendices are organized as follows:
\begin{itemize}
    \item Appendix~\ref{sec:GM} proves some useful preliminary results about the geometric median that are needed in this paper.
    \item Appendix~\ref{app:non_strategyproof} includes the proofs of the results presented in Section~\ref{sec:non_strategyproofness} (in particular, the proof of Theorem \ref{th:geometric_median_not_strategyproof}).
    \item Appendix~\ref{sec:proof_asymptotic_strategyproofness} includes some proofs and deferred results from Section~\ref{sec:asymptotic_strategyproofness} (in particular, the proof of Theorem \ref{th:asymptotic_strategyproofness}).
    \item Appendix~\ref{sec:proof_skewed_preferences} includes some proofs and deferred results from Section~\ref{sec:skewed_preferences} (in particular, the proof of Theorem \ref{th:skewed_geometric_median}).
    \item Appendix~\ref{sec:alternative} discusses the notion of alternative unit forces and proves auxiliary results on the equivalence between $\ell_p$ penality and $\ell_q$-unit force vote for $\frac{1}{p} + \frac{1}{q} = 1$ and the equivalence between  $\adaptivematrix$-skewed geometric median, and $\adaptivematrix^{-1}$-unit forces.
\end{itemize}

\section{GEOMETRIC MEDIAN: PRELIMINARIES}
\label{sec:GM}

In this section, we characterize a few properties of the geometric median, many of which are useful for our subsequent proofs.
For the sake of exposition, we consider 
in this section a geometric median restricted to the voters $\voter \in [\VOTER]$, in which case, the loss function would be
\begin{equation}
  \Loss (\paramfamily{}, \parz) \triangleq \frac{1}{\VOTER} \sum_{\voter \in [\VOTER]} \norm{\parz - \paramsub{\voter}}{2}.
\end{equation}
The generalization to $1+\VOTER$ voters is straightforward.

\subsection{Unit Forces}
\label{sec:unit_force}
\color{black}
We first show that the geometric median verifies the fairness principle {\it ``one voter, one vote with a unit force''}.
Consider a system in which each voter $\voter$ pulls the output of voting $\parz$ towards their location $\paramsub{\voter}$ with a unit force. 
Voter $\voter$'s force is then given by the unit vector $\unitvector{\parz - \paramsub{\voter}}$ in the direction of $\parz - \paramsub{\voter}$. 
Any equilibrium of this process must then be a point $z$ where all the forces cancel out, i.e., we must essentially have $\sum_{\voter\in\VOTER} \unitvector{\parz - \paramsub{\voter}}=0$. 
Lemma~\ref{lemma:geometric-median_unit-forces} shows that this condition is equivalent to the computation of a geometric median.
But first, let us characterize the gradient of the $\ell_2$-norm.
\color{black}

\begin{lemma}
\label{lemma:unit-force}
The gradient of the Euclidean norm is a unit vector.
More precisely, for all $\parz \in \setR^d$, we have $\nabla \norm{\parz}{2} = \unitvector{\parz}$, where $\unitvector{\parz} \triangleq \parz / \norm{\parz}{2}$ if $\parz \neq 0$, and otherwise $\unitvector{0} \triangleq \ball(0,1)$ is the unit ball centered at the origin.
\end{lemma}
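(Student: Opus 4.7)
The plan is to split on whether $\varz = 0$ or $\varz \neq 0$, since the Euclidean norm is smooth on $\setR^d \setminus \{0\}$ and only subdifferentiable at the origin, so the two cases call for different tools.

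For the case $\varz \neq 0$, I would argue by direct componentwise differentiation. Writing $\norm{\varz}{2} = \bigl(\sum_{i=1}^d \varz_i^2\bigr)^{1/2}$ and applying the chain rule coordinate by coordinate yields $\partial \norm{\varz}{2} / \partial \varz_i = \varz_i / \norm{\varz}{2}$, so $\nabla \norm{\varz}{2} = \varz / \norm{\varz}{2} = \unitvector{\varz}$. A one-line verification that $\norm{\unitvector{\varz}}{2} = 1$ completes the case; alternatively, one may obtain the same result by differentiating both sides of $\norm{\varz}{2}^2 = \varz^T \varz$ and dividing by $2\norm{\varz}{2}$.

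For the case $\varz = 0$, the norm is not differentiable, so I would interpret $\nabla \norm{\cdot}{2}(0)$ as the subdifferential of this convex function at the origin. A vector $\subgradient \in \setR^d$ is a subgradient at $0$ precisely when $\norm{\vary}{2} \geq \subgradient^T \vary$ for every $\vary \in \setR^d$. By Cauchy--Schwarz, any $\subgradient$ with $\norm{\subgradient}{2} \leq 1$ satisfies this inequality; conversely, testing the inequality at $\vary = \subgradient$ forces $\norm{\subgradient}{2} \leq 1$. Hence the subdifferential at $0$ is exactly the closed unit ball, matching the stated $\unitvector{0} = \ball(0,1)$.

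There is no serious obstacle here; the only subtlety is aligning the notation for a ``gradient'' at the singular point with the subdifferential convention used throughout the paper (the closed unit ball), and being explicit that the convex subgradient characterization is the correct interpretation of $\nabla \norm{\cdot}{2}$ at $\varz = 0$. Once that convention is stated, the proof reduces to the two elementary computations above.
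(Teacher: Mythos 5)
Your proposal is correct and takes essentially the same approach as the paper: componentwise differentiation (the paper goes via $\norm{\varz}{2}^2$, which you mention as your alternative) for $\varz \neq 0$, and for $\varz = 0$ the same two-sided subgradient characterization, with Cauchy--Schwarz giving $\ball(0,1) \subset \nabla\norm{\cdot}{2}(0)$ and a test vector aligned with $\subgradient$ (you use $\vary = \subgradient$, the paper uses $\varepsilon \unitvector{\subgradient}$) ruling out $\norm{\subgradient}{2} > 1$. No gaps; nothing further is needed.
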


In the latter case, the $\ell_2$ norm thus actually has a large set of subgradients.

\begin{proof}
Assume $\parz \neq 0$. We have $\nabla \norm{\parz}{2}^2 = 2 \parz$.
As a result, $\nabla \norm{\parz}{2} = \nabla \sqrt{\norm{\parz}{2}^2} = \nabla \norm{\parz}{2}^2 / 2 \sqrt{\norm{\parz}{2}^2} = \parz / \norm{\parz}{2} = \unitvector{\parz}$.

Now consider the case $\parz = 0$. Then note that for all $\parx \in \setR^d$, we have $\norm{\parx}{2} - \norm{\parz}{2} = \parx^T \unitvector{\parx} \geq \parx^T h$ for any vector $h$ of Euclidean norm at most 1.
This proves that $\nabla_{\parz = 0} \norm{\parz}{2} \supset \ball(0,1)$.
On the other hand, if $\norm{h}{2} > 1$, then we have $\norm{\varepsilon \unitvector{h}}{2} = \varepsilon < \varepsilon \norm{h}{2} = (\varepsilon \unitvector{h})^T h$.
Thus $h$ cannot be a subgradient, and thus $\nabla_{\parz = 0} \norm{\parz}{2} = \ball(0,1) = \unitvector{0}$.
\end{proof}

As an immediate corollary, the following condition characterizes the geometric medians.

\begin{lemma}
\label{lemma:geometric-median_unit-forces}
The sum of voters' unit pulls cancel out on $\geometricmedian \triangleq \GeometricMedian (\paramfamily)$, i.e., $0 \in \sum_{\voter \in [\VOTER]} \unitvector{\geometricmedian - \paramsub{\voter}}$.
\end{lemma}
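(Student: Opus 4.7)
The plan is to recognize this as a direct consequence of first-order optimality for convex minimization. The loss $\Loss(\paramfamily, \varz) = \frac{1}{\VOTER}\sum_{\voter \in [\VOTER]} \norm{\varz - \paramsub{\voter}}{2}$ is a finite sum of compositions of the Euclidean norm with affine maps, and each summand is convex. Therefore $\Loss$ is convex in $\varz$, and by definition $\geometricmedian$ is a global minimizer. Standard convex analysis then yields the first-order condition $0 \in \partial_\varz \Loss(\paramfamily, \geometricmedian)$, where $\partial_\varz$ denotes the subdifferential with respect to $\varz$.

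The second step is to push the subdifferential inside the sum. Since each summand $\varz \mapsto \norm{\varz - \paramsub{\voter}}{2}$ is a proper convex function and the domain is all of $\setR^d$ (so Moreau--Rockafellar applies without any qualification issue), we have
\begin{equation}
\partial_\varz \Loss(\paramfamily, \geometricmedian) = \frac{1}{\VOTER} \sum_{\voter \in [\VOTER]} \partial_\varz \norm{\geometricmedian - \paramsub{\voter}}{2}.
\end{equation}
I would then invoke Lemma~\ref{lemma:unit-force}, translated by $\paramsub{\voter}$, to identify each summand's subdifferential at $\geometricmedian$ with $\unitvector{\geometricmedian - \paramsub{\voter}}$ (which is a singleton when $\geometricmedian \neq \paramsub{\voter}$ and the unit ball $\ball(0,1)$ otherwise). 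Combining with the optimality condition $0 \in \partial_\varz \Loss(\paramfamily, \geometricmedian)$ and absorbing the positive scalar $1/\VOTER$ into the set-sum gives exactly $0 \in \sum_{\voter \in [\VOTER]} \unitvector{\geometricmedian - \paramsub{\voter}}$.

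There is essentially no serious obstacle, only a minor bookkeeping point: the geometric median may coincide with one of the voters' preferred vectors, so the relevant subdifferentials are genuine sets rather than singletons. This is precisely why the conclusion is stated with $0 \in$ rather than $0 =$, and why Lemma~\ref{lemma:unit-force} was phrased with $\unitvector{0} = \ball(0,1)$. The set-valued arithmetic is the standard Minkowski sum, so the inclusion is preserved through the chain of equalities and inclusions above. No additional regularity or nondegeneracy assumption on $\paramfamily$ is needed.
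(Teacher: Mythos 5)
Your proof is correct and follows essentially the same route as the paper: first-order optimality for the convex loss, Lemma~\ref{lemma:unit-force} to identify each summand's (sub)gradient with $\unitvector{\geometricmedian - \paramsub{\voter}}$, and summation. You are merely more explicit than the paper about the subdifferential sum rule (Moreau--Rockafellar) and the set-valued case where $\geometricmedian$ coincides with some $\paramsub{\voter}$, which the paper handles implicitly through its set-valued notation.
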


\begin{proof}
By Lemma~\ref{lemma:unit-force}, $\nabla_\parz \norm{\parz - \paramsub{\voter}}{2} = \unitvector{\parz - \paramsub{\voter}}$.
Therefore, $\VOTER \nabla_\parz \Loss(\paramfamily, \parz) = \sum_{\voter \in [\VOTER]} \unitvector{\parz - \paramsub{\voter}}$.
The optimality condition of $\geometricmedian$ then implies $0 \in \nabla_\parz \Loss (\paramfamily, \geometricmedian)$ and hence $0 \in \sum_{\voter \in [\VOTER]} \unitvector{\geometricmedian - \paramsub{\voter}}$.
\end{proof}

Before moving on, we make one last observation about the second derivative of the Euclidean norm, which is very useful for the rest of the paper.

\begin{lemma}
\label{lemma:hessian_Euclidean_norm}
Suppose $\parz \neq 0$. Then $\nabla^2 \norm{\parz}{2} = \frac{1}{\norm{\parz}{2}} \left( I - \unitvector{\parz} \unitvector{\parz}^T \right)$ is a positive semi-definite matrix.
The vector $\parz$ is an eigenvector of the matrix associated with eigenvalue $0$, while the hyperplane orthogonal to $\parz$ is the $(d-1)$-dimensional eigenspace of $\nabla^2 \norm{\parz}{2}$ associated with eigenvalue $1/\norm{\parz}{2}$.
\end{lemma}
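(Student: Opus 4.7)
The plan is to obtain the Hessian by differentiating the gradient identity of Lemma~\ref{lemma:unit-force}, and then to read off the spectral decomposition directly from the resulting expression.

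First I would start from $\nabla \norm{\varz}{2} = \varz / \norm{\varz}{2}$, valid whenever $\varz \neq 0$. Differentiating coordinate-wise using the quotient rule, and noting that $\partial_j \norm{\varz}{2} = \varz[j]/\norm{\varz}{2}$, I would compute
\begin{equation}
  \partial_j \left( \frac{\varz[i]}{\norm{\varz}{2}} \right) = \frac{\delta_{ij}}{\norm{\varz}{2}} - \frac{\varz[i]\,\varz[j]}{\norm{\varz}{2}^3}.
\end{equation}
Assembling these entries into a matrix yields
\begin{equation}
  \nabla^2 \norm{\varz}{2} = \frac{1}{\norm{\varz}{2}} I - \frac{\varz \varz^T}{\norm{\varz}{2}^3} = \frac{1}{\norm{\varz}{2}} \left( I - \unitvector{\varz}\,\unitvector{\varz}^T \right),
\end{equation}
which establishes the claimed identity.

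Next I would analyze the spectrum. Since $\unitvector{\varz}$ is a unit vector, the matrix $\unitvector{\varz}\,\unitvector{\varz}^T$ is the orthogonal projector onto the line $\mathrm{span}(\varz)$. Consequently $I - \unitvector{\varz}\,\unitvector{\varz}^T$ is the orthogonal projector onto the hyperplane orthogonal to $\varz$: it annihilates $\varz$ (confirming the eigenvalue $0$ along $\varz$) and acts as the identity on any vector orthogonal to $\varz$ (giving eigenvalue $1$ on that $(d-1)$-dimensional subspace). Scaling by $1/\norm{\varz}{2}$ then yields eigenvalue $0$ for the eigenvector $\varz$ and eigenvalue $1/\norm{\varz}{2}$ on the orthogonal hyperplane, exactly as claimed. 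Positive semi-definiteness is an immediate consequence, since all eigenvalues are nonnegative.

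There is no real obstacle here: the only subtlety is ensuring that the quotient-rule computation is performed carefully so that the $\norm{\varz}{2}^3$ in the denominator is tracked correctly, and that the final rearrangement makes the rank-one correction $\unitvector{\varz}\,\unitvector{\varz}^T$ manifest. Once that is done, the spectral statements reduce to the standard observation that $I - \mathbf{u}\mathbf{u}^T$ is an orthogonal projector whenever $\mathbf{u}$ is a unit vector.
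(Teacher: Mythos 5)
Your proposal is correct and follows essentially the same route as the paper: differentiate the gradient $\varz/\norm{\varz}{2}$ via the quotient rule to obtain $\frac{1}{\norm{\varz}{2}}\left(I - \unitvector{\varz}\unitvector{\varz}^T\right)$, then verify the two eigenspaces. Your projector framing of $I - \unitvector{\varz}\unitvector{\varz}^T$ is just a compact repackaging of the paper's direct check that $\varz$ is annihilated and that vectors orthogonal to $\varz$ are scaled by $1/\norm{\varz}{2}$.
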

\textbf{Notation:} We denote by $z[i]$, the $i$-th coordinate of vector $z$.
\begin{proof}
 For clarity, let us denote $\ell_2 (\parz) \triangleq \norm{\parz}{2}$. 
  By Lemma~\ref{lemma:unit-force}, we know that $\nabla \ell_2(\parz) = \unitvector{\parz} = \parz / \ell_2(\parz)$.
  We then have
  \begin{equation}
    \partial_{ij}^2 \ell_2(\parz) 
    = \frac{1}{\ell_2(\parz)^2} \left( \ell_2(\parz) \partial_j \parz[i] - \parz[i] \partial_j \ell_2(\parz) \right)
    = \frac{1}{\ell_2(\parz)} \left( \delta_i^j - \frac{\parz[i]}{\ell_2(\parz)} \frac{\parz[j]}{\ell_2(\parz)} \right),
  \end{equation}
  where $\delta_i^j = 1$ if $i=j$, and $0$ if $i \neq j$.
  Combining all coordinates then yields
  \begin{equation}
    \nabla^2 \ell_2(\parz) 
    = \frac{1}{\norm{\parz}{2}} \left( I - \frac{\parz}{\norm{\parz}{2}} \frac{\parz^T}{\norm{\parz}{2}} \right)
    = \frac{1}{\norm{\parz}{2}} \left( I - \unitvector{\parz} \unitvector{\parz}^T \right).
  \end{equation}
  It is then clear that $\nabla^2\ell_2(\parz) \parz = \frac{1}{\norm{\parz}{2}} \left( \parz - \unitvector{\parz} \unitvector{\parz}^T \parz \right) = 0$.
  Meanwhile, if $\parx \perp \parz$, then $\unitvector{\parz}^T \parx= 0$, which then results in $\nabla^2\ell_2(\parz) \parx = \parx / \norm{\parz}{2}$.
  This proves the lemma.
\end{proof}

Intuitively, the lemma says that the pull of $\parz$ on $0$ does not change if we slightly move $\parz$ along the direction $\parz$.
However, this pull is indeed changed if we move $\parz$ in a direction orthogonal to $\parz$.
Moreover, the further away $\parz$ is from $0$, the weaker is this change in direction.

\subsection{Existence and Uniqueness}
\label{sec:GM_uinique}

In dimension one, the definition of the geometric median coincides with the definition of the median. As a result, the geometric median may not be uniquely defined. Fortunately, in higher dimensions, the uniqueness can be guaranteed, under reasonable assumptions. We first prove a few useful lemmas about the strict convexity of convex and  piecewise strictly convex functions.
\label{app:strict_convexity}

\begin{lemma}
\label{lemma:strict_convexity_closed_interval}
  If $f$ is convex on $[0,1]$, and strictly convex on $(0,1)$, then it is strictly convex on $[0,1]$.
\end{lemma}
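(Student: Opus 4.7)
The plan is to reduce everything to the hypothesis of strict convexity on $(0,1)$ via the standard fact that equality in the convexity inequality at an interior point forces a convex function to coincide with the chord throughout. Take distinct $x, y \in [0,1]$, a weight $\lambda \in (0,1)$, and set $z \triangleq \lambda x + (1-\lambda) y$; the goal is to show $f(z) < \lambda f(x) + (1-\lambda) f(y)$.

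If $x, y \in (0,1)$, the hypothesis directly yields the strict inequality, so I may assume (up to relabeling) that $x \in \{0,1\}$ and $y \in [0,1] \setminus \{x\}$. Arguing by contradiction, I would suppose $f(z) = \lambda f(x) + (1-\lambda) f(y)$; equivalently, $f$ meets the chord $L$ through $(x, f(x))$ and $(y, f(y))$ at the interior point $z$.

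The central step is a short sublemma: under this equality, $f$ must agree with $L$ on all of $[x,y]$. For $w \in (x, z)$, I would write $z = \mu w + (1-\mu) y$ for a suitable $\mu \in (0,1)$; convexity then gives
\begin{equation}
L(z) = f(z) \leq \mu f(w) + (1-\mu) f(y) \leq \mu L(w) + (1-\mu) L(y) = L(z),
\end{equation}
where the second inequality uses the chord bound $f \leq L$ on $[x,y]$. Equality throughout forces $f(w) = L(w)$, and the case $w \in (z, y)$ is symmetric, so $f$ coincides with $L$ on the entire segment.

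Finally, since $y \neq x$ and $x \in \{0,1\}$, the open segment $(x,y)$ contains a nondegenerate subinterval $(a,b) \subset (0,1)$, on which $f$ coincides with the affine function $L$. This contradicts the strict convexity of $f$ on $(0,1)$, completing the proof. The main (minor) obstacle is the sublemma itself: it is elementary, but one must pick the convex combination that expresses $z$ in terms of $w$ and $y$ (rather than the other way around) to avoid a circular use of convexity at the equality point.
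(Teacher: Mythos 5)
Your proof is correct, but it takes a genuinely different route from the paper's. The paper argues directly: given $x < y$ in $[0,1]$ and $z = \lambda x + \mu y$, it introduces the midpoints $x' = \frac{x+z}{2}$ and $y' = \frac{z+y}{2}$, which lie in $(0,1)$ and satisfy $\lambda x' + \mu y' = z$; strict convexity on $(0,1)$ applied at $x', y'$, combined with plain convexity to bound $f(x')$ and $f(y')$ by averages involving the endpoints, yields $f(z) < \lambda f(x) + \mu f(y)$ in a single chain, with no contradiction argument. You instead prove (by contradiction) the standard rigidity fact that if a convex function touches its chord at one interior point, it must coincide with the chord on the whole segment, and then observe that this forces $f$ to be affine on a nondegenerate subinterval of $(0,1)$, contradicting strict convexity there. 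Your sublemma is carried out correctly — in particular, your choice to express $z$ as a convex combination of $w$ and $y$ (rather than bounding $f(w)$ from above by the chord through $x$ and $z$, which only reproduces the known inequality $f \leq L$) is exactly the right move, and your WLOG reduction to $x \in \{0,1\}$ ensures the open segment $(x,y)$ sits inside $(0,1)$. The trade-off: the paper's midpoint trick is shorter and constructive, while your argument isolates a reusable general lemma (chord-coincidence rigidity for convex functions) at the cost of a contradiction setup and a case split.
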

\begin{proof}
  Consider $\parx, \pary \in [0,1]$, with $\parx < \pary$, $\lambda \in (0,1)$ and $\mu \triangleq 1-\lambda$.
  Denote $\parz = \lambda \parx + \mu \pary$. 
  It is straightforward to verify that $\parz \in (0,1)$.
  Define $\parx' \triangleq \frac{\parx + \parz}{2}$ and $\pary' \triangleq \frac{\parz + \pary}{2}$.
  Clearly, we have $\parx', \pary' \in (0,1)$.
  Moreover, $\lambda \parx' + \mu \pary' = \frac{1}{2} (\lambda \parx + \mu \pary) + \frac{1}{2} (\lambda + \mu) \parz = \parz$. 
  By strict convexity of $f$ in $(0,1)$, we then have $f(\parz) < \lambda f(\parx') + \mu f(\pary')$.
  Moreover, by convexity of $f$ in $[0,1]$, we also have $f(\parx') \leq \frac{1}{2} f(\parx) + \frac{1}{2} f(\parz)$ and $f(\pary') \leq \frac{1}{2} f(\pary) + \frac{1}{2} f(\parz)$.
  Combining the three inequalities yields $f(\parz) < \frac{1}{2} \left( \lambda f(\parx) + \mu f(\pary) \right) + \frac{1}{2} f(\parz)$, 
  from which we derive $f(\parz) < \lambda f(\parx) + \mu f(\pary)$.
  This allows to conclude.
\end{proof}

\begin{lemma}
\label{lemma:strict_convexity_singularity}
  If $f : [0,1] \rightarrow \setR$ is convex, and if there is $\parw \in (0,1)$ such that $f$ is strictly convex on $(0,\parw)$ and strictly convex on $(\parw,1)$.
  Then, for any $\parx < \parw < \pary$, we have $f(\parw) < \frac{\pary - \parw}{\pary - \parx} f(\parx) + \frac{\parw - \parx}{\pary - \parx} f(\pary)$.
\end{lemma}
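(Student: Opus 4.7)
The plan is to combine Lemma~\ref{lemma:strict_convexity_closed_interval} with a two-sided interpolation argument that squeezes $f(w)$ against the chord from $(x,f(x))$ to $(y,f(y))$ by inserting auxiliary points on both sides of the singularity $w$.

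First, I would promote the hypothesis of strict convexity on the open intervals $(0,w)$ and $(w,1)$ to strict convexity on the closed intervals $[0,w]$ and $[w,1]$. This is just Lemma~\ref{lemma:strict_convexity_closed_interval} applied after the affine rescalings $s \mapsto ws$ and $s \mapsto w + (1-w)s$ of $[0,1]$ onto $[0,w]$ and $[w,1]$ respectively (both of which preserve convexity and strict convexity). In particular, $f$ is strictly convex on the sub-intervals $[x,w]$ and $[w,y]$.

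Next, I would pick arbitrary auxiliary points $x' \in (x,w)$ and $y' \in (w,y)$. Since $x'<w<y'$, there exist unique weights $\alpha,\beta_x,\beta_y\in(0,1)$ with $w = \alpha x' + (1-\alpha)y'$, $x' = \beta_x x + (1-\beta_x)w$ and $y' = \beta_y w + (1-\beta_y)y$, and a short calculation gives $\alpha = (y'-w)/(y'-x')$, $\beta_x = (w-x')/(w-x)$ and $\beta_y = (y-y')/(y-w)$. Applying convexity of $f$ on $[0,1]$ at $w$ and strict convexity on $[x,w]$ and $[w,y]$, I obtain
\begin{equation*}
f(w) \leq \alpha f(x') + (1-\alpha) f(y')
< \alpha\bigl[\beta_x f(x) + (1-\beta_x) f(w)\bigr] + (1-\alpha)\bigl[\beta_y f(w) + (1-\beta_y) f(y)\bigr].
\end{equation*}
Setting $A \triangleq \alpha\beta_x$, $B \triangleq \alpha(1-\beta_x) + (1-\alpha)\beta_y$ and $C \triangleq (1-\alpha)(1-\beta_y)$, the right-hand side rewrites as $Af(x) + Bf(w) + Cf(y)$, with $A,B,C\geq 0$ and $A+B+C=1$.

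Finally, rearranging yields $(A+C)f(w) < Af(x) + Cf(y)$, and dividing by $A+C>0$ places $f(w)$ strictly below the chord value $\frac{A}{A+C}f(x) + \frac{C}{A+C}f(y)$. The last step is to identify these weights with the desired chord weights: substituting the formulas for $\alpha,\beta_x,\beta_y$, both $A$ and $C$ share the common factor $(y'-w)(w-x')/(y'-x')$, which cancels to give $A/C = (y-w)/(w-x)$, hence $A/(A+C) = (y-w)/(y-x)$ and $C/(A+C) = (w-x)/(y-x)$. The main obstacle is bookkeeping: ensuring that $A, B, C$ indeed sum to $1$ and that the ratio $A/C$ collapses to the clean value $(y-w)/(w-x)$ independently of the auxiliary choice of $x'$ and $y'$. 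Once this algebra is verified, the conclusion $f(w) < \frac{y-w}{y-x}f(x) + \frac{w-x}{y-x}f(y)$ follows immediately.
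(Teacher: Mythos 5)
Your proposal is correct and follows essentially the same route as the paper's proof: insert auxiliary points on each side of the singularity $\varw$, express $\varw$ as a convex combination of them, apply strict convexity (promoted to closed subintervals via Lemma~\ref{lemma:strict_convexity_closed_interval}) on $[\varx,\varw]$ and $[\varw,\vary]$, and cancel the resulting $f(\varw)$ terms. The only difference is that the paper fixes the midpoints $\varx' = \frac{\varx+\varw}{2}$, $\vary' = \frac{\varw+\vary}{2}$ so the chord weights $\frac{\vary-\varw}{\vary-\varx}$, $\frac{\varw-\varx}{\vary-\varx}$ appear immediately, whereas you allow arbitrary $\varx', \vary'$ and verify the weights collapse to the same values—a harmless generalization at the cost of extra bookkeeping.
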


\begin{proof}
  Define $\parx' \triangleq \frac{\parx + \parw}{2}$ and $\pary' \triangleq \frac{\parw + \pary}{2}$.
  Since $f$ is strictly convex on $(0,\parw)$, by Lemma \ref{lemma:strict_convexity_closed_interval}, 
  we know that it is strictly convex on $[0,\parw]$.
  As a result, we have $f(\parx') < \frac{1}{2} f(\parx) + \frac{1}{2} f(\parw)$.
  Similarly, we show that $f(\pary') < \frac{1}{2} f(\pary) + \frac{1}{2} f(\parw)$.
  Note now that $\frac{\pary - \parw}{\pary - \parx} \parx' + \frac{\parw - \parx}{\pary - \parx} \pary' 
  = \frac{1}{2} \frac{(\pary - \parw) \parx + (\parw - \parx) \pary}{\pary - \parx}  + \frac{\parw}{2} = \parw$.
  Using the convexity of $f$ over $[0,1]$, we then have 
  $f(\parw) \leq \frac{\pary - \parw}{\pary - \parx} f(\parx') + \frac{\parw - \parx}{\pary - \parx} f(\pary') 
  < \frac{\pary - \parw}{\pary - \parx} \left( \frac{1}{2} f(\parx) + \frac{1}{2} f(\parw) \right) + \frac{\parw - \parx}{\pary - \parx} \left( \frac{1}{2} f(\pary) + \frac{1}{2} f(\parw) \right) 
  = \frac{1}{2} \left( \frac{\pary - \parw}{\pary - \parx} f(\parx) + \frac{\parw - \parx}{\pary - \parx} f(\pary) \right) + \frac{1}{2} f(\parw)$. 
  Rearranging the terms yields the lemma.
\end{proof}

\begin{lemma}
\label{lemma:dich_convexity}
  If $f : [0,1] \rightarrow \setR$ is convex, and if there is $\parw \in [0,1]$ such that $f$ is strictly convex on $(0,\parw)$ and strictly convex on $(\parw,1)$, then $f$ is strictly convex on $[0,1]$.
\end{lemma}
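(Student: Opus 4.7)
The plan is to handle the boundary cases $\varw \in \{0,1\}$ directly via Lemma~\ref{lemma:strict_convexity_closed_interval}: when $\varw=0$ (respectively $\varw=1$), the hypothesis of strict convexity on $(\varw,1)$ (respectively $(0,\varw)$) already covers all of $(0,1)$, and the previous lemma immediately upgrades this to strict convexity on $[0,1]$. I therefore assume $\varw \in (0,1)$ from now on.

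The key preparation is to apply Lemma~\ref{lemma:strict_convexity_closed_interval} twice, once to the restriction of $f$ to $[0,\varw]$ and once to its restriction to $[\varw,1]$. Each restriction is convex on its closed subinterval and strictly convex on the interior, so each is strictly convex on the full closed subinterval. This already settles the case where $\varx, \vary$ lie on the same side of $\varw$ (including the degenerate subcases $\varx = \varw$ or $\vary = \varw$).

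For the remaining case $\varx < \varw < \vary$ with $\varz = \lambda \varx + \mu \vary$, $\lambda \in (0,1)$, $\mu = 1-\lambda$, I plan to introduce the auxiliary function
\begin{equation}
    h(t) \triangleq f(t) - \frac{\vary - t}{\vary - \varx} f(\varx) - \frac{t - \varx}{\vary - \varx} f(\vary),
\end{equation}
which is convex on $[\varx,\vary]$ (being $f$ minus an affine function) and satisfies $h(\varx) = h(\vary) = 0$. By Lemma~\ref{lemma:strict_convexity_singularity} applied at the singular point $\varw$, one gets $h(\varw) < 0$. Then for $\varz \in (\varx,\varw]$, writing $\varz = \alpha \varx + (1-\alpha)\varw$ with $\alpha \in [0,1)$, the convexity of $h$ gives $h(\varz) \leq (1-\alpha) h(\varw) < 0$; the symmetric reasoning on $[\varw, \vary]$ covers $\varz \in [\varw, \vary)$. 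Unpacking $h$ yields precisely $f(\varz) < \lambda f(\varx) + \mu f(\vary)$, which is the strict convexity inequality.

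The only real obstacle is bookkeeping, namely making sure every combination of $\varx, \vary, \varz, \varw$ (in particular the degenerate boundary ones such as $\varx \in \{0,\varw\}$ or $\vary \in \{\varw,1\}$) falls into exactly one of the three regimes above, and that Lemma~\ref{lemma:strict_convexity_singularity} is only invoked when its hypothesis $\varw \in (0,1)$ is satisfied, which is why the $\varw \in \{0,1\}$ cases were dispatched separately at the outset. No additional analytic machinery beyond the two preceding lemmas is required.
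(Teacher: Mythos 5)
Your proof is correct, and it follows the paper's skeleton — the same two supporting lemmas, and the same case split according to whether the chord endpoints $\varx,\vary$ lie on one side of $\varw$ or straddle it — but it resolves the straddling case by a genuinely different mechanism. The paper splits that case further into $\varx < \varz < \varw < \vary$ and $\varx < \varw < \varz < \vary$, and in each subcase chains two strict inequalities: Lemma~\ref{lemma:strict_convexity_singularity} to bound $f(\varw)$ by the chord through $(\varx,f(\varx))$ and $(\vary,f(\vary))$, then strict convexity of $f$ on the closed half-interval (via Lemma~\ref{lemma:strict_convexity_closed_interval}) to relate $f(\varz)$ to $f(\varx)$ and $f(\varw)$. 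Your chord-difference function $h(t) = f(t) - \frac{\vary-t}{\vary-\varx}f(\varx) - \frac{t-\varx}{\vary-\varx}f(\vary)$ handles both subcases (and $\varz=\varw$) in one stroke: $h$ is merely convex with $h(\varx)=h(\vary)=0$ and $h(\varw)<0$, and plain convexity propagates the strict dip to every interior point, since $h(\varz) \leq (1-\alpha)h(\varw) < 0$ with $1-\alpha>0$. This is a bit cleaner — in the crossing case you never need strict convexity of $f$ on the half-intervals, only the single strict inequality at $\varw$ — whereas the paper's chaining uses the strict half-interval convexity. A further small point in your favor: you dispatch $\varw\in\{0,1\}$ explicitly, which is the careful thing to do given that Lemma~\ref{lemma:strict_convexity_singularity} requires $\varw\in(0,1)$; the paper covers these endpoints only implicitly through its first case ($\varx\geq\varw$ or $\vary\leq\varw$). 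The one caveat, which you already flag, is that invoking Lemma~\ref{lemma:strict_convexity_closed_interval} on $[0,\varw]$ and $[\varw,1]$ requires the routine rescaling the paper itself performs later (defining $g(\varx)\triangleq f(\varx\varw)$); this is bookkeeping, not a gap.
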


\begin{proof}
  Consider $\parx, \parz, \pary \in [0,1]$, with $\parx < \parz < \pary$. 
  We denote $\lambda \triangleq \frac{\parz-\parx}{\pary-\parx} \in (0,1)$ and $\mu \triangleq 1-\lambda$.
  We then have $\parz = \lambda \parx + \mu \pary$.   
  If $\parx \geq \parw$ or $\pary \leq \parw$, then by Lemma \ref{lemma:strict_convexity_closed_interval}, we know that $f(\parz) < \lambda f(\parx) + \mu f(\pary)$.
  Moreover, Lemma \ref{lemma:strict_convexity_singularity} yields the same equation for the case $\parx < \parz = \parw < \pary$.
  
  Now assume $\parx < \parz < \parw < \pary$. 
  By Lemma \ref{lemma:strict_convexity_singularity}, we have
  $f(\parw) < \frac{\pary-\parw}{\pary-\parx} f(\parx) + \frac{\parw-\parx}{\pary-\parx}  f(\pary)$.
  We also know that $\parz = \frac{\parw - \parz}{\parw-\parx} \parx + \frac{\parz - \parx}{\parw-\parx} \parw$.
  By strict convexity, we thus have 
  $f(\parz) < \frac{\parw - \parz}{\parw-\parx} f(\parx) + \frac{\parz - \parx}{\parw-\parx} f(\parw) 
  < \frac{\parw - \parz}{\parw-\parx} f(\parx) + \frac{\parz - \parx}{\parw-\parx} \left( \frac{\pary-\parw}{\pary-\parx} f(\parx) + \frac{\parw-\parx}{\pary-\parx} f(\pary) \right)
  = \lambda f(\parx) + \mu f(\pary)$.
  
  The last case $\parx < \parw < \parz < \pary$ is dealt with similarly. 
\end{proof}

\begin{lemma}
\label{lemma:strict_convexity}
  Assume that $f : [0,1] \rightarrow \setR$ is convex, and that there is a finite number of points $\parwsub{0} \triangleq 0 <\parwsub{1} < \ldots < \parwsub{K-1} < \parwsub{K} \triangleq 1$ such that $f$ is strictly convex on $(\parwsub{k-1}, \parwsub{k})$ for $k \in [K]$.
  Then $f$ is strictly convex on $[0,1]$.
\end{lemma}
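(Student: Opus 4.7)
The plan is to induct on $K$. The base case $K = 1$ is a direct application of Lemma \ref{lemma:strict_convexity_closed_interval}: $f$ is convex on $[0,1]$ and strictly convex on $(\varwsub{0}, \varwsub{1}) = (0,1)$, so it is strictly convex on $[0,1]$.

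For the inductive step, assume the claim holds for any convex function on $[0,1]$ with at most $K$ strict-convexity pieces, and take $f$ with $K+1$ pieces separated by $0 = \varwsub{0} < \varwsub{1} < \ldots < \varwsub{K} < \varwsub{K+1} = 1$. The key idea is to absorb the first $K$ pieces into a single strictly convex block by invoking the inductive hypothesis on the restriction $f|_{[0,\varwsub{K}]}$. After the affine reparametrization $t \mapsto t \varwsub{K}$, this restriction becomes a convex function on $[0,1]$ with exactly $K$ strictly convex pieces separated by $\varwsub{1}/\varwsub{K}, \ldots, \varwsub{K-1}/\varwsub{K}$, so the inductive hypothesis applies and yields strict convexity of $f$ on $[0, \varwsub{K}]$, and in particular on the open interval $(0, \varwsub{K})$. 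Combined with the given strict convexity on $(\varwsub{K}, 1)$ and overall convexity of $f$ on $[0,1]$, the single-singularity lemma stated immediately above (with singularity $\varw = \varwsub{K}$) then delivers strict convexity of $f$ on $[0,1]$, closing the induction.

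The main obstacle is essentially cosmetic: one must verify that strict convexity is preserved under the affine change of variables used to rescale $[0, \varwsub{K}]$ to $[0,1]$. This is immediate, since an affine reparametrization sends convex combinations to convex combinations with identical weights, so no real analytical difficulty arises. All the hard work has already been packaged into Lemma \ref{lemma:strict_convexity_closed_interval}, Lemma \ref{lemma:strict_convexity_singularity}, and the preceding single-singularity lemma; the present argument only orchestrates them via induction.
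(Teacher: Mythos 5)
Your proposal is correct and is essentially the paper's own proof: the same induction on $K$, the same affine rescaling to apply the inductive hypothesis on the first block (the paper makes this rigorous via $g(\varx) \triangleq f(\varx \varwsub{K-1})$), and the same appeal to the unnamed single-singularity lemma to merge the two strictly convex pieces. If anything, your citation is slightly more accurate, since the paper's proof nominally invokes Lemma~\ref{lemma:strict_convexity_singularity} where the unnamed lemma preceding Lemma~\ref{lemma:strict_convexity} is what is actually needed.
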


\begin{proof}
  We prove this result by induction on $K$.
  For $K=1$, we simply invoke Lemma \ref{lemma:strict_convexity_closed_interval}.
  Now assume that it holds for $K-1$, and let us use this to derive it for $K$.
  By induction, we know that $f$ is strictly convex on $(0, \parwsub{K-1})$ (we can use the induction hypothesis more rigorously by defining $g(\parx) \triangleq f(\parx \parwsub{K-1})$).
  Yet, by assumption, $f$ is also known to be convex on $(\parwsub{K-1},1)$.
  Lemma \ref{lemma:strict_convexity_singularity} thus applies, and implies the strict convexity of $f$ on $[0,1]$.
\end{proof}

In what follows, we define the dimension of the tuple $\paramfamily$ of preferred vectors as the dimension of the affine space spanned by these vectors, i.e., $\dim \paramfamily \triangleq \dim \set{\paramsub{\voter} - \paramsub{\voterbis} \st \voter, \voterbis \in [\VOTER]}$. 
We then have the following result.

\begin{proposition}
\label{prop:uniqueness}
  $\parz \mapsto \Loss (\paramfamily{}, \parz)$ is infinitely differentiable for all $\parz \notin \set{\paramsub{\voter} \st \voter \in [\VOTER]}$. 
  Moreover, if $\dim \paramfamily{} \geq 2$, then for all such $\parz$, the Hessian matrix of the sum of distances is positive definite, i.e., $\nabla_\parz^2 \Loss (\paramfamily{}, \parz) \succ 0$. 
  In particular, $\Loss$ is then strictly convex on $\setR^d$, and has a unique minimum.
\end{proposition}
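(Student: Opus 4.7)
The plan is to verify the three claims in order, reducing each to facts already established in Lemma \ref{lemma:unit-force}, Lemma \ref{lemma:hessian_Euclidean_norm}, and the strict-convexity toolkit (Lemmas \ref{lemma:strict_convexity_closed_interval}--\ref{lemma:strict_convexity}).

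First, for smoothness: write $\norm{\varz-\paramsub{\voter}}{2} = \sqrt{(\varz-\paramsub{\voter})^T(\varz-\paramsub{\voter})}$. The quantity under the square root is a polynomial in the coordinates of $\varz$, and it is strictly positive whenever $\varz \neq \paramsub{\voter}$. Since $\sqrt{\cdot}$ is $C^\infty$ on $(0,\infty)$, each summand is infinitely differentiable on $\setR^d \setminus \{\paramsub{\voter}\}$, and therefore the finite average $\Loss(\paramfamily,\cdot)$ is $C^\infty$ on $\setR^d \setminus \{\paramsub{\voter} \st \voter \in [\VOTER]\}$.

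Next, for positive definiteness of the Hessian at any such smooth point $\varz$: by Lemma \ref{lemma:hessian_Euclidean_norm}, each summand contributes $\frac{1}{\norm{\varz-\paramsub{\voter}}{2}}(I-\unitvector{\varz-\paramsub{\voter}}\unitvector{\varz-\paramsub{\voter}}^T)$, which is positive semi-definite with kernel exactly $\mathrm{span}(\varz-\paramsub{\voter})$. Thus
\begin{equation}
\ker \nabla_\varz^2 \Loss(\paramfamily,\varz) = \bigcap_{\voter \in [\VOTER]} \mathrm{span}(\varz-\paramsub{\voter}).
\end{equation}
I claim this intersection reduces to $\{0\}$ whenever $\dim \paramfamily \geq 2$. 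Indeed, if all vectors $\varz-\paramsub{\voter}$ lay in a common line $\setR u$, then for every $\voter,\voterbis$ we would have $\paramsub{\voter}-\paramsub{\voterbis} = (\varz-\paramsub{\voterbis})-(\varz-\paramsub{\voter}) \in \setR u$, forcing $\dim \paramfamily \leq 1$, a contradiction. So there exist $\voter,\voterbis$ with $\varz-\paramsub{\voter}$ and $\varz-\paramsub{\voterbis}$ linearly independent, and the intersection of their two distinct lines is $\{0\}$. Hence $\nabla^2_\varz \Loss(\paramfamily,\varz) \succ 0$.

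Finally, for global strict convexity and uniqueness of the minimum: pick any distinct $\varx,\vary \in \setR^d$ and set $f(t) \triangleq \Loss(\paramfamily, \varx + t(\vary-\varx))$ for $t \in [0,1]$. As a sum of one-dimensional compositions of $\norm{\cdot}{2}$ with affine maps, $f$ is convex on $[0,1]$. The segment meets the singular set $\{\paramsub{\voter}\}$ in at most $\VOTER$ parameter values $0 \le t_1 < \cdots < t_K \le 1$; on each open subinterval strictly between consecutive singularities, $f$ is smooth with
\begin{equation}
f''(t) = (\vary-\varx)^T \nabla_\varz^2 \Loss\bigl(\paramfamily,\varx+t(\vary-\varx)\bigr) (\vary-\varx) > 0
\end{equation}
by the positive definiteness just established (note $\vary-\varx \neq 0$). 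Thus $f$ is strictly convex on each such open subinterval, and Lemma \ref{lemma:strict_convexity} upgrades this to strict convexity of $f$ on the whole segment $[0,1]$. Since this holds for every segment, $\Loss(\paramfamily,\cdot)$ is strictly convex on $\setR^d$. Existence of a minimum follows from continuity together with coercivity (the triangle inequality gives $\Loss(\paramfamily,\varz) \geq \norm{\varz}{2} - \frac{1}{\VOTER}\sum_\voter \norm{\paramsub{\voter}}{2} \to \infty$), and strict convexity yields uniqueness.

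The only step that requires any care is the passage from smooth strict convexity off the singular set to global strict convexity through the (finitely many) singular points of the segment; this is precisely what the chain of Lemmas \ref{lemma:strict_convexity_closed_interval}--\ref{lemma:strict_convexity} was set up to handle, so the main obstacle has already been dispatched in the preparatory lemmas.
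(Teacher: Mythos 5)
Your proposal is correct, and its overall skeleton matches the paper's: smoothness of each summand away from the $\paramsub{\voter}$, positive definiteness of the Hessian from $\dim \paramfamily \geq 2$, an upgrade to global strict convexity via the prepared Lemmas \ref{lemma:strict_convexity_closed_interval}--\ref{lemma:strict_convexity} applied along segments, and existence by confining the minimization to a compact ball. The one step where you genuinely diverge is the positive-definiteness argument, and your route is cleaner. The paper discards all but two summands with non-colinear directions $a = \varz - \paramsub{\voter}$, $b = \varz - \paramsub{\voterbis}$, rewrites $2I - \unitvector{a}\unitvector{a}^T - \unitvector{b}\unitvector{b}^T$ in terms of $\unitvector{a}+\unitvector{b}$ and $\unitvector{a}-\unitvector{b}$, and checks that the three resulting eigenvalues $1 \pm \unitvector{a}^T\unitvector{b}$ and $1$ (suitably scaled) are strictly positive. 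You instead exploit that each summand is positive semi-definite with kernel exactly $\mathrm{span}(\varz - \paramsub{\voter})$ (Lemma \ref{lemma:hessian_Euclidean_norm}), so that the kernel of the sum is the intersection of these lines, which is $\{0\}$ once two of the lines differ; the only implicit fact you should make explicit is that for a positive semi-definite $H$, $x^T H x = 0$ forces $Hx = 0$ (e.g., via $x^T H x = \norm{H^{1/2}x}{2}^2$), which is what turns ``the quadratic forms all vanish'' into ``$x$ lies in every kernel.'' Your kernel argument buys brevity and conceptual clarity at no cost in generality; the paper's explicit diagonalization buys a quantitative lower bound on the smallest eigenvalue in terms of $\unitvector{a}^T \unitvector{b}$ and $\max\set{\norm{a}{2},\norm{b}{2}}$, which the paper reuses in spirit elsewhere (e.g., in Proposition \ref{porp:positive difinite}). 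Your coercivity bound $\Loss(\paramfamily, \varz) \geq \norm{\varz}{2} - \frac{1}{\VOTER}\sum_{\voter} \norm{\paramsub{\voter}}{2}$ is a slightly slicker packaging of the paper's ball argument and is equally valid.
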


\begin{proof} 
  Define $\ell_2(\parz) \triangleq \norm{\parz}{2} = \sqrt{\parz^T \parz} = \sqrt{\sum_{i \in [d]} \parz_i^2}$. 
  This function is clearly infinitely differentiable for all points $\parz \neq 0$. 
  Since $\Loss(\paramfamily{}, \parz) = \frac{1}{\VOTER} \sum \ell_2(\parz - \paramsub{\voter})$, it is also infinitely differentiable for $\parz \notin \set{\paramsub{\voter} \st \voter \in [\VOTER]}$.
  
  Moreover, by using triangle inequality and absolute homogeneity, we know that, for any $\lambda \in [0,1]$ and any $\paramsub{\voter} \in \setR^d$, we have 
  \begin{align}
      \ell_2 &\left( \left(\lambda \parz + (1-\lambda) \parz'\right) - \paramsub{\voter} \right) 
      = \ell_2 \left( \lambda (\parz - \paramsub{\voter}) + (1-\lambda) (\parz' - \paramsub{\voter}) \right) \\
      &\leq \ell_2(\lambda (\parz - \paramsub{\voter})) + \ell_2((1-\lambda) (\parz' - \paramsub{\voter})) 
      = \lambda \ell_2(\parz - \paramsub{\voter}) + (1-\lambda) \ell_2 (\parz' - \paramsub{\voter}),
  \end{align}
  which proves the convexity of $\parz \mapsto \ell_2(\parz - \paramsub{\voter})$.
  Since the sum of convex functions is convex, we also know that $\parz \mapsto \Loss(\paramfamily{}, \parz)$ is convex too.

  Now, we know that $\dim(\parz, \paramfamily{}) \geq \dim(\paramfamily{}) \geq 2$. 
  Therefore, there exists $\voter, \voterbis \in [\VOTER]$ such that $a \triangleq \parz - \paramsub{\voter}$ and $b \triangleq \parz - \paramsub{\voterbis}$ are not colinear. 
  This implies that $-1 < \unitvector{a}^T \unitvector{b} < 1$.
  By Lemma~\ref{lemma:hessian_Euclidean_norm}, we then have
  \begin{align}
    \nabla_\parz^2 &\Loss (\paramfamily{}, \parz)
    \succeq \frac{1}{\VOTER \norm{a}{2}} \left( I - \unitvector{a} \unitvector{a}^T \right)
    + \frac{1}{\VOTER \norm{b}{2}} \left( I - \unitvector{b} \unitvector{b}^T \right) \\
    &\succeq \frac{1}{\VOTER \max \set{\norm{a}{2}, \norm{b}{2}} } \left( 2 I - \unitvector{a} \unitvector{a}^T - \unitvector{b} \unitvector{b}^T \right) \\
    &\succeq \frac{1}{\VOTER \max \set{\norm{a}{2}, \norm{b}{2}}} \left( 2 I - \frac{1}{2} (\unitvector{a} + \unitvector{b}) (\unitvector{a} + \unitvector{b})^T - \frac{1}{2} (\unitvector{a} - \unitvector{b}) (\unitvector{a} - \unitvector{b})^T \right) \\
    &= \frac{2}{\VOTER \max \set{\norm{a}{2}, \norm{b}{2}}} \left( I - \frac{1+\unitvector{a}^T\unitvector{b}}{2} \frac{(\unitvector{a} + \unitvector{b}) (\unitvector{a} + \unitvector{b})^T}{\norm{\unitvector{a} + \unitvector{b}}{2}^2} 
    - \frac{1-\unitvector{a}^T\unitvector{b}}{2} \frac{(\unitvector{a} - \unitvector{b}) (\unitvector{a} - \unitvector{b})^T}{\norm{\unitvector{a} - \unitvector{b}}{2}^2} \right),
  \end{align}
  where we used $\norm{\unitvector{a} + \unitvector{b}}{2}^2 = 2+ 2\unitvector{a}^T \unitvector{b}$ and $\norm{\unitvector{a} - \unitvector{b}}{2}^2 = 2 - 2\unitvector{a}^T \unitvector{b}$. 
  This last matrix turns out to have eigenvalues equal to $\frac{1-\unitvector{a}^T \unitvector{b}}{\VOTER \max \set{\norm{a}{2}, \norm{b}{2}}}$ in the direction $\unitvector{a} + \unitvector{b}$, 
  $\frac{1+\unitvector{a}^T \unitvector{b}}{\VOTER \max \set{\norm{a}{2}, \norm{b}{2}}}$ in the direction $\unitvector{a} - \unitvector{b}$, 
  and $\frac{1}{\VOTER \max \set{\norm{a}{2}, \norm{b}{2}}}$ in directions orthogonal to $a$ and $b$. 
  Since $-1 < \unitvector{a}^T \unitvector{b} < 1$, all such quantities are strictly positive.
  Thus all eigenvalues of $\nabla_\parz^2 \Loss (\paramfamily{}, \parz)$ are strictly positive.
  This implies that along any segment $(\parx,\pary)$ that contains no $\paramsub{\voter}$, then $\parz \mapsto \Loss (\paramfamily{}, \parz)$ is strictly convex.
  Given that $\parz \mapsto \Loss (\paramfamily{}, \parz)$ is convex everywhere, and that there is only a finite number of points $\paramsub{\voter}$, 
  Lemma \ref{lemma:strict_convexity} applies, and proves the strict convexity of $\parz \mapsto \Loss (\paramfamily{}, \parz)$ everywhere and along all directions.
  Uniqueness follows immediately from this.
  
  To prove the existence of the geometric median, we observe that $\Loss(\parz) \geq \Loss(0)$, for $\parz$ large enough. More precisely, denote $\Delta \triangleq \max \set{\norm{\paramsub{\voter}}{2} \st \voter \in [\VOTER]}$. Then $\Loss(0) \leq \Delta$. 
  Yet if $\norm{\parz}{2} \geq 3\Delta$, then $\norm{\parz - \paramsub{\voter}}{2} \geq \norm{\parz}{2} - \norm{\paramsub{\voter}}{2} \geq 3\Delta - \Delta = 2\Delta$, which implies $\Loss(\parz) \geq 2\Delta$. 
  Thus $\inf_{\parz \in \setR^d} \Loss(\parz) = \inf_{\parz \in \ball(0,\Delta)} \Loss(\parz)$, where $\ball(0, \Delta)$ is a ball centered on $0$, and of radius $\Delta$. By continuity of $\Loss$ and compactness of $\ball(0,\Delta)$, we know that this infimum is reached by some point in $\ball(0,\Delta)$.
\end{proof}

\subsection{Symmetries}

We contrast here the symmetry properties of the average, the geometric median and the coordinate-wise median. 

\begin{proposition}
\label{prop:anonymity}
  Assuming uniqueness, the ordering of voters does not impact the average, the geometric median and the coordinate-wise median of their votes.
  This is known as the \emph{anonymity} property.
\end{proposition}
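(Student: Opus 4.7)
The plan is to verify anonymity separately for each of the three aggregation rules by exhibiting, in each case, an expression for the output that is manifestly invariant under permutations of the voter indices. Fix a permutation $\sigma$ of $[\VOTER]$ and write $\paramfamily^\sigma \triangleq (\paramsub{\sigma(1)}, \ldots, \paramsub{\sigma(\VOTER)})$. The goal is to show that the output of the rule on $\paramfamily$ equals its output on $\paramfamily^\sigma$.

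For the average, the claim follows directly from commutativity and associativity of vector addition: $\frac{1}{\VOTER} \sum_{\voter \in [\VOTER]} \paramsub{\sigma(\voter)} = \frac{1}{\VOTER} \sum_{\voter \in [\VOTER]} \paramsub{\voter}$, since both sums range over the same multiset of vectors. For the coordinate-wise median, on each coordinate $i \in [d]$, the rule outputs the one-dimensional median of the multiset $\{ \paramsub{\voter}[i] \mid \voter \in [\VOTER] \}$, which depends only on this multiset and not on the order of its elements; hence coordinate by coordinate, and therefore globally, the output is invariant under $\sigma$.

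For the geometric median, the key step is to observe that the loss function itself is symmetric in the voter indices. Indeed, for every $\varz \in \setR^d$,
\begin{equation}
\Loss(\paramfamily^\sigma, \varz) = \frac{1}{\VOTER} \sum_{\voter \in [\VOTER]} \norm{\varz - \paramsub{\sigma(\voter)}}{2} = \frac{1}{\VOTER} \sum_{\voter \in [\VOTER]} \norm{\varz - \paramsub{\voter}}{2} = \Loss(\paramfamily, \varz),
\end{equation}
because a finite sum does not depend on the order of summation. Thus $\varz \mapsto \Loss(\paramfamily^\sigma, \varz)$ and $\varz \mapsto \Loss(\paramfamily, \varz)$ are the same function, and therefore share the same set of minimizers. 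Under the uniqueness assumption (guaranteed for instance by Proposition~\ref{prop:uniqueness} whenever $\dim \paramfamily \geq 2$), this common argmin is a single point, so $\GeometricMedian(\paramfamily^\sigma) = \GeometricMedian(\paramfamily)$.

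There is no real obstacle in this proof; the only mild subtlety is that the geometric median is defined as an argmin rather than by an explicit formula, so one must go through the loss to argue that equal losses yield equal outputs, which is exactly where the uniqueness hypothesis enters. A brief remark noting that, without uniqueness, one still obtains equality of the set-valued geometric median (i.e., of the argmin sets) would be worth including for completeness.
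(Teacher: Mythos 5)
Your proof is correct, and it takes a slightly different route from the paper's. The paper disposes of all three rules at once with a single variational argument: each operator is the minimizer of a permutation-invariant (``anonymous'') loss --- the sum of \emph{squared} distances for the average, the sum of Euclidean distances for the geometric median, and the sum of $\ell_1$ distances for the coordinate-wise median (the latter two characterizations are deferred to the paper's section on alternative unit forces) --- and an invariant loss has an invariant argmin under the uniqueness hypothesis. You apply exactly this argument, but only to the geometric median, where it is unavoidable since no closed form exists; for the average and the coordinate-wise median you instead argue directly from explicit formulas (commutativity of vector addition; order-independence of the one-dimensional median of a multiset, coordinate by coordinate). What each approach buys: the paper's is shorter and uniform, but silently relies on the reader accepting that the average and coordinate-wise median admit those variational characterizations; yours is more elementary and self-contained, requiring no such characterizations, and your closing observation that without uniqueness one still obtains equality of the argmin \emph{sets} is a small genuine strengthening that the paper's one-line proof does not mention. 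Both proofs are complete; the only place uniqueness is truly load-bearing is, as you correctly identify, in passing from equal minimizer sets to equal outputs.
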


\begin{proof}
  All three operators can be regarded as minimizing an anonymous function, namely, the sum of square distances, the sum of distances and the sum of $\ell_1$ distances (see Section~\ref{sec:alternative}). All such functions are clearly invariant under re-ordering of voters' labels.
\end{proof}

\begin{proposition}
\label{prop:orthogonal_invariance}
  Assuming uniqueness, the average, the geometric median and the coordinate-wise median are invariant under translation and homothety.
  The average and the geometric median are also invariant under any orthogonal transformation, but, in general, the coordinate-wise median is not.
\end{proposition}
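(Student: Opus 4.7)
The plan is to handle the three operators separately, exploiting the symmetries that each underlying optimization problem already enjoys.

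For the average, the claim is immediate from linearity: for any affine map $\phi(x) = A x + b$ and any tuple $\paramfamily$, $\frac{1}{\VOTER}\sum_\voter \phi(\paramsub{\voter}) = A \left(\frac{1}{\VOTER}\sum_\voter \paramsub{\voter}\right) + b = \phi(\average(\paramfamily))$. Specializing to $A = I$ (translation), $A = \lambda I$ (homothety), and $A$ orthogonal, one gets all three invariances at once.

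For the geometric median, I would change variables inside the loss $\Loss(\paramfamily, \varz) = \frac{1}{\VOTER}\sum_\voter \norm{\varz - \paramsub{\voter}}{2}$. For a translation $\varz \mapsto \varz + t$, applied to every $\paramsub{\voter}$, the substitution $\varz' = \varz - t$ gives $\Loss(\paramfamily + t, \varz) = \Loss(\paramfamily, \varz - t)$, so the minimizer shifts by $t$. For a homothety $\paramsub{\voter} \mapsto \lambda \paramsub{\voter}$ with $\lambda > 0$, absolute homogeneity of $\norm{\cdot}{2}$ gives $\Loss(\lambda \paramfamily, \varz) = \lambda \, \Loss(\paramfamily, \varz/\lambda)$, so the minimizer is multiplied by $\lambda$ (the case $\lambda<0$ follows by combining with central symmetry, which is itself an orthogonal transformation). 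For an orthogonal $O$, the identity $\norm{O \varx}{2} = \norm{\varx}{2}$ yields $\Loss(O \paramfamily, \varz) = \Loss(\paramfamily, O^T \varz)$, so the minimizer is transformed by $O$. In each case, uniqueness (Proposition~\ref{prop:uniqueness}) lets us conclude that the equivariance of the argmin is an equality.

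For the coordinate-wise median, invariance under translation and homothety follows coordinate-by-coordinate: on each coordinate $i$, the operator returns the one-dimensional median of the $i$th coordinates of the $\paramsub{\voter}$, and the one-dimensional median clearly commutes with $x \mapsto \lambda x + c$. It remains to exhibit one tuple and one orthogonal map for which invariance fails. I would take $d = 2$ with the three preferred vectors $\paramsub{1} = (0,0)$, $\paramsub{2} = (1,0)$, $\paramsub{3} = (0,1)$, whose coordinate-wise median is $(0,0)$, and apply the rotation $O$ by $\pi/4$. Then $O\paramsub{1} = (0,0)$, $O\paramsub{2} = (1/\sqrt{2}, 1/\sqrt{2})$, $O\paramsub{3} = (-1/\sqrt{2}, 1/\sqrt{2})$; the coordinate-wise median of the rotated tuple is $(0, 1/\sqrt{2})$, which is not $O \cdot (0,0) = (0,0)$. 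This concrete counterexample completes the proposition.

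The only mildly delicate point is the $\lambda < 0$ case of homothety for the geometric median: rather than arguing by absolute homogeneity directly (which only covers $\lambda \geq 0$), I would factor a negative homothety as $\lambda = |\lambda| \cdot (-1)$ and note that $-I$ is orthogonal, so that the orthogonal-invariance argument takes care of the sign. Beyond this, no step requires substantial machinery.
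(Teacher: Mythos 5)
Your proposal is correct and follows essentially the same route as the paper: invariance of the average and geometric median is derived from the equivariance of the minimized distance-based objectives, the coordinate-wise median is handled coordinate-by-coordinate, and non-invariance under orthogonal maps is shown by rotating a three-point configuration by $\pi/4$ (the paper uses $\paramsub{1}=(0,0)$, $\paramsub{2}=(1,2)$, $\paramsub{3}=(2,1)$; your simpler triple $(0,0),(1,0),(0,1)$ works just as well). Your explicit treatment of negative homothety via $-I$ being orthogonal is a small completeness bonus; the paper sidesteps this by only stating the claim for $\lambda>0$.
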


\begin{proof}
  The average and the geometric median can both be regarded as minimizing a function that only depends on Euclidean distances. Since any Euclidean isometry $M$ is distance-preserving, if $\average$ and $\GeometricMedian$ are the average and the geometric median of $\paramfamily$, and if $\tau \in \setR^d$ and $\lambda > 0$, it is clear that $\lambda M \average(\paramfamily) + \tau$ and $\lambda M \GeometricMedian(\paramfamily) + \tau$ is the average and the geometric median of the family $\lambda M \paramfamily + \tau$. 
  
  In Section \ref{sec:alternative}, we show that the coordinate-wise median $\CWMedian(.)$ minimizes a function that depends on $\ell_1$ distances. 
  By the same argument as above, this guarantees that the coordinate-wise median of $\lambda \paramfamily + \tau$ is $\lambda \CWMedian(\paramfamily) + \tau$.
  Now consider the vectors $\paramsub{1} \triangleq (0,0)$, $\paramsub{2} \triangleq (1,2)$ and $\paramsub{3} \triangleq (2,1)$. 
  The coordinate-wise median of these vectors is $\CWMedian(\paramfamily) = (1,1)$.
  Now consider the rotation $R = \frac{\sqrt{2}}{2} \begin{pmatrix} 1 & -1 \\ 1 & 1\end{pmatrix}$ of these vectors around $(0,0)$ by an anti-clockwise eighth of a turn.
  We then obtain the vectors $R \paramsub{1} \triangleq (0,0)$, $R \paramsub{2} \triangleq \frac{\sqrt{2}}{2} (-1,3)$ and $R \paramsub{3} \triangleq \frac{\sqrt{2}}{2} (1,3)$. 
  Thus the coordinate-wise median of $R \paramfamily$ is $\CWMedian(R \paramfamily) = \frac{\sqrt{2}}{2} (0,3)$.
  However, $R \CWMedian(\paramfamily) = \frac{\sqrt{2}}{2} (0,2)$. 
  Thus $\CWMedian(R \paramfamily) \neq R \CWMedian(\paramfamily)$.
\end{proof}

\begin{proposition}
\label{prop:center_of_symmetry}
  Assuming uniqueness, if $\parz$ is a center of symmetry of $\paramfamily$, then it is the average, the geometric median and the coordinate-wise median.
\end{proposition}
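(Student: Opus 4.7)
The plan is to exploit the central symmetry directly via the involution $\sigma_\varz : \vary \mapsto 2\varz - \vary$, which fixes $\varz$ as its unique fixed point. The assumption that $\varz$ is a center of symmetry of $\paramfamily$ means there is a permutation $\pi$ of $[\VOTER]$ such that $\paramsub{\pi(\voter)} = 2\varz - \paramsub{\voter}$ for every $\voter \in [\VOTER]$. I would structure the proof as one unified argument per operator, noting that all three minimize an anonymous convex loss of the form $\sum_{\voter} \phi(\vary - \paramsub{\voter})$ for some norm-like function $\phi$ satisfying $\phi(-\varx) = \phi(\varx)$ (respectively $\norm{\cdot}{2}^2$, $\norm{\cdot}{2}$, and $\norm{\cdot}{1}$; see Section~\ref{sec:alternative} for the $\ell_1$ characterization of the coordinate-wise median).

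The average is the most straightforward: summing the symmetry relation over $\voter$ gives $\sum_\voter \paramsub{\voter} = \sum_\voter \paramsub{\pi(\voter)} = 2 \VOTER \varz - \sum_\voter \paramsub{\voter}$, from which $\average(\paramfamily) = \varz$ falls out immediately.

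For the geometric median and coordinate-wise median, I would apply $\sigma_\varz$ inside the loss. Writing $\Loss_\phi(\paramfamily, \vary) \triangleq \sum_\voter \phi(\vary - \paramsub{\voter})$, the change of variable $\vary \mapsto \sigma_\varz(\vary) = 2\varz - \vary$ together with $\phi(-\varx)=\phi(\varx)$ and the permutation $\pi$ gives $\Loss_\phi(\paramfamily, \sigma_\varz(\vary)) = \sum_\voter \phi(\vary - (2\varz - \paramsub{\voter})) = \sum_\voter \phi(\vary - \paramsub{\pi(\voter)}) = \Loss_\phi(\paramfamily, \vary)$. Thus $\vary^\star$ minimizes $\Loss_\phi$ if and only if $\sigma_\varz(\vary^\star)$ does. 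By the assumed uniqueness of the minimizer (for the geometric median, this is granted by Proposition~\ref{prop:uniqueness} provided $\dim \paramfamily \geq 2$), this forces $\vary^\star = \sigma_\varz(\vary^\star)$, and the only fixed point of $\sigma_\varz$ is $\varz$ itself.

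I do not anticipate a genuine obstacle here; the only subtlety worth flagging is the uniqueness hypothesis, which is exactly what the proposition assumes. For the coordinate-wise median, one can alternatively observe that central symmetry of $\paramfamily$ around $\varz$ implies, coordinate-by-coordinate, that each coordinate of the family is symmetric on the real line around the corresponding coordinate of $\varz$, and the one-dimensional median of any centrally symmetric multiset on $\setR$ is its center of symmetry; this gives a second self-contained justification if the $\ell_1$ characterization is to be avoided.
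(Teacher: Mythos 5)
Your proof is correct, but it takes a genuinely different route from the paper's. The paper argues via first-order optimality: it pairs the vectors of $\paramfamily$ that differ from $\varz$ by their symmetry with respect to $\varz$, observes that the two pulls of each pair on $\varz$ cancel, and concludes that the sum of pulls (the (sub)gradient of the relevant loss) vanishes at $\varz$, so $\varz$ is a minimizer, hence the unique one. You instead never touch derivatives: you show the anonymous loss $\sum_{\voter} \phi(\vary - \paramsub{\voter})$, for even $\phi \in \{\norm{\cdot}{2}^2, \norm{\cdot}{2}, \norm{\cdot}{1}\}$, is invariant under the point reflection $\sigma_\varz$, so the minimizer set is $\sigma_\varz$-invariant, and uniqueness forces the minimizer to be the unique fixed point $\varz$. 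Your version buys a uniform treatment of all three operators with no subgradient bookkeeping (in particular, no special handling when some $\paramsub{\voter} = \varz$, which the paper's pairing must quietly set aside), and it generalizes verbatim to any even convex per-voter penalty. The paper's version buys slightly more in a different direction: the cancelling-pulls argument exhibits $0$ in the subgradient at $\varz$ directly, so it shows $\varz$ belongs to the minimizer set even without uniqueness, whereas your invariance argument leans on uniqueness essentially (though one could recover the weaker statement from convexity of the minimizer set, by averaging a minimizer with its reflection). Both correctly discharge the uniqueness hypothesis exactly where the proposition supplies it, and your coordinate-wise fallback for $\CWMedian$ is also sound, since central symmetry of $\paramfamily$ about $\varz$ induces, on each coordinate, a centrally symmetric multiset on $\setR$ whose (assumed unique) median is its center.
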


\begin{proof}
  We can pair all vectors of $\paramfamily$ different from $\parz$ by their symmetry with respect to $\parz$. For any vote, the pull of each pair on $\parz$ cancels out. Thus the sum of pulls vanishes.
\end{proof}

\begin{proposition}
\label{prop:invariance_linear_transformation}
  The average is invariant under any invertible linear transformation, but, even assuming uniqueness, in general, the geometric median and the coordinate-wise median are not.
\end{proposition}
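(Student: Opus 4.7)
The average part is immediate from linearity: for any invertible linear $M$ and any family $\paramfamily$, we have $\average(M \paramfamily) = \frac{1}{\VOTER} \sum_{\voter \in [\VOTER]} M \paramsub{\voter} = M \average(\paramfamily)$, so no work is needed. For the coordinate-wise median, the very example already built in the proof of Proposition \ref{prop:orthogonal_invariance} is a counterexample here too, since the eighth-turn rotation $R$ used there is an invertible linear transformation. So the only genuine content to supply is a counterexample for the geometric median.

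The plan for the geometric median counterexample is to exploit reflection symmetry to reduce the fixed-point equation of Lemma~\ref{lemma:geometric-median_unit-forces} to a one-variable equation, and then show that a diagonal rescaling changes the answer nonlinearly. Concretely, I would take three points in $\setR^2$: $\paramsub{1}=(-1,0)$, $\paramsub{2}=(1,0)$, $\paramsub{3}=(0,h)$ for some $h$ large enough (e.g.\ $h=1$). By reflection symmetry across the $y$-axis and strict convexity (Proposition~\ref{prop:uniqueness}), the geometric median lies on the $y$-axis, so write it as $(0,y^*)$. The unit-force balance in the vertical coordinate reads $\tfrac{2y^*}{\sqrt{1+y^{*2}}}=1$, giving $y^*=1/\sqrt{3}$ (which is in $(0,h)$, so the assumed sign of the pull from $\paramsub{3}$ is consistent). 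Now apply $M=\diagonal{(2,1)}$: the transformed family $(-2,0),(2,0),(0,h)$ still has reflection symmetry, and the same computation replaces $1$ by $2$ in the denominator, yielding $\tilde y = 2/\sqrt{3}$. Thus $\GeometricMedian(M\paramfamily) = (0, 2/\sqrt{3})$ whereas $M\, \GeometricMedian(\paramfamily) = (0, 1/\sqrt{3})$, which are distinct.

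There is essentially no hard obstacle here; the only care needed is to (i) justify that the geometric median really sits on the axis of symmetry, which follows from uniqueness plus the fact that the reflected median minimizes the same (symmetric) loss, and (ii) check that the candidate $y^*$ is indeed in $(0,h)$ so that the sign of the unit pull from $\paramsub{3}$ is $(0,-1)$ rather than $(0,+1)$; both are immediate for $h=1$. The argument is robust: replacing $M$ by any $\diagonal{(\lambda,1)}$ with $\lambda \neq 1$ gives $\tilde y = \lambda/\sqrt{3} \neq y^* = 1/\sqrt{3} = (M\cdot(0,y^*))_2$, confirming that the geometric median genuinely fails to commute with non-orthogonal invertible linear transformations.
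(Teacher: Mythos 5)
Your overall plan is sound and close in spirit to the paper's own proof (the paper uses an equilateral triangle whose geometric median is the origin and a vertical stretch by $\diagonal{1, 2/\sqrt{3}}$), and your treatment of the average and the coordinate-wise median matches the paper exactly. But there is a concrete error in your geometric-median instance: with $h=1$ and $M=\diagonal{2,1}$, the transformed family is $(-2,0),(2,0),(0,1)$, and the interior balance equation $2\tilde y/\sqrt{4+\tilde y^{2}}=1$ gives $\tilde y = 2/\sqrt{3} \approx 1.155 > 1 = h$, so it violates exactly the sign-consistency check (ii) that you yourself flagged as necessary — the pull from $(0,h)$ would have to point downward, and the equation then has no solution in $(0,h)$. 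The true geometric median of the transformed family is the vertex $(0,1)$ itself: at that singular point the two remaining unit pulls sum to $(0,-2/\sqrt{5})$, whose norm $2/\sqrt{5}$ is at most $1$, so $0$ lies in the subgradient of the loss there (Lemma \ref{lemma:unit-force}). Your final conclusion happens to survive, since $(0,1) \neq (0,1/\sqrt{3}) = M\,\GeometricMedian(\paramfamily)$, but the value $\tilde y = 2/\sqrt{3}$ you assert is false, and so is the ``robustness'' claim that $\diagonal{\lambda,1}$ yields $\tilde y = \lambda/\sqrt{3}$ for every $\lambda \neq 1$: that formula is only valid when $\lambda/\sqrt{3} < h$, i.e.\ for $\lambda < \sqrt{3}$ when $h=1$. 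The fix is trivial — take $h=2$ (or any $h > 2/\sqrt{3}$), or restrict to $\lambda \in (1,\sqrt{3})$ — but as written the proof asserts an incorrect intermediate computation.

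It is worth noting how the paper sidesteps this pitfall entirely: rather than locating the geometric median of the transformed family, it only verifies that the image of the original median fails the first-order optimality condition — the horizontal components of the three unit pulls at $M\cdot 0 = 0$ sum to $1 - 2/\sqrt{5} > 0$, so $0$ cannot be the new median. That one-sided argument requires no case analysis on where the new median lands and no consistency check on assumed pull directions, which is precisely where your computation went wrong. If you keep your explicit two-median computation (which does buy a cleaner, quantitative picture of how the median moves), you must verify the interior stationarity condition for \emph{both} configurations, not just the original one.
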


This proposition might appear to be a weakness of the geometric median.
Note that Section~\ref{sec:skewed_geometric_median} actually leverages this to define the {\it skewed geometric median} and improve strategyproofness.

\begin{proof}
  The average is linear. Thus, for any matrix $M \in \setR^{d \times d}$, we have $\average(M \paramfamily) = M \average(\paramfamily)$.
  Moreover, the case of the coordinate-wise median follows from Proposition \ref{prop:orthogonal_invariance}.
  
  To see that the geometric median is not invariant under invertible linear transformation, consider $\paramsub{1} \triangleq (1,0)$, $\paramsub{2} \triangleq (\cos(\tau/3), \sin(\tau/3)) = (-1/2, \sqrt{3}/2)$ and $\paramsub{3} \triangleq (\cos(2\tau/3), \sin(2\tau/3)) = (-1/2, -\sqrt{3}/2)$, where $\tau \approx 6.28$ corresponds to a full turn angle. 
  Then $\GeometricMedian(\paramfamily) = 0$, since the sum of pulls at $0$ cancel out.
  Now let us stretch space in the $y$-axis, using the matrix $M = \begin{pmatrix} 1 & 0 \\ 0 & 2/\sqrt{3} \end{pmatrix}$. Clearly $0$ is invariant under this stretch, as $M0 = 0$. 
  Moreover, we have $M\paramsub{1} = (1,0)$, $M\paramsub{2} = (-1/2, 1)$ and $M\paramsub{3} = (-1/2,-1)$. 
  The unit-force pull on $0$ by voter 2 is then $M \paramsub{2} / \norm{M \paramsub{2}}{2} = 2/\sqrt{5} (-1/2,1)$, while that of voter 3 is $M \paramsub{3} / \norm{M \paramsub{3}}{2} = 2/\sqrt{5} (-1/2,-1)$. Finally, voter 1 still pulls with a unit force towards the right. The sum of forces along the horizontal axis is then equal to $1-2/\sqrt{5} >0$. Thus despite being invariant, $0$ is no longer the geometric median.
  
  The case of the coordinate-wise median follows from Proposition~\ref{prop:orthogonal_invariance}.
\end{proof}

\begin{proposition}
\label{prop:convex_hull}
  The average and the geometric median of a tuple of vectors belong to the convex hull of the vectors. In general, the coordinate-wise median does not.
\end{proposition}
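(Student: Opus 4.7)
The plan is to handle each of the three aggregators separately. The average case is trivial: $\average(\paramfamily) = \frac{1}{\VOTER} \sum_{\voter \in [\VOTER]} \paramsub{\voter}$ is by definition a convex combination of the $\paramsub{\voter}$'s with weights $1/\VOTER$, hence lies in the convex hull. So the real content is the geometric median.

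For the geometric median, I would argue by contradiction using the orthogonal projection onto the convex hull. Let $\convexset \triangleq \operatorname{conv}(\paramsub{1}, \ldots, \paramsub{\VOTER})$ and let $\geometricmedian \triangleq \GeometricMedian(\paramfamily)$. Suppose $\geometricmedian \notin \convexset$. Since $\convexset$ is closed and convex, we can let $\pi$ denote the orthogonal projection of $\geometricmedian$ onto $\convexset$; by the standard projection inequality, for every $\paramsub{\voter} \in \convexset$ we have $(\geometricmedian - \pi)^T (\pi - \paramsub{\voter}) \geq 0$. Expanding
\begin{equation}
\norm{\geometricmedian - \paramsub{\voter}}{2}^2 = \norm{\geometricmedian - \pi}{2}^2 + 2 (\geometricmedian - \pi)^T(\pi - \paramsub{\voter}) + \norm{\pi - \paramsub{\voter}}{2}^2 \geq \norm{\geometricmedian - \pi}{2}^2 + \norm{\pi - \paramsub{\voter}}{2}^2,
\end{equation}
and since $\norm{\geometricmedian - \pi}{2} > 0$ (as $\geometricmedian \notin \convexset \ni \pi$), we get $\norm{\geometricmedian - \paramsub{\voter}}{2} > \norm{\pi - \paramsub{\voter}}{2}$ for every $\voter \in [\VOTER]$. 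Summing and dividing by $\VOTER$ yields $\Loss(\paramfamily, \geometricmedian) > \Loss(\paramfamily, \pi)$, contradicting the defining minimality of $\geometricmedian$. Hence $\geometricmedian \in \convexset$.

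For the coordinate-wise median, it suffices to exhibit the counterexample already flagged in the introduction. Take the three vectors $\paramsub{1} = (1,0,0)$, $\paramsub{2} = (0,1,0)$, $\paramsub{3} = (0,0,1)$ in $\setR^3$. Their convex hull is the simplex $\set{\varz \in \setR_{\geq 0}^3 \st \varz[1] + \varz[2] + \varz[3] = 1}$. Along each coordinate, two of the three entries are $0$ and one is $1$, so the median on each coordinate is $0$; hence $\CWMedian(\paramfamily) = (0,0,0)$, whose coordinates sum to $0 \neq 1$, so it lies outside the convex hull.

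The only subtlety is the strictness in the projection step for the geometric median, i.e. ruling out the degenerate possibility that $\geometricmedian - \pi = 0$, but this is automatic from the supposition $\geometricmedian \notin \convexset$ together with $\pi \in \convexset$, so no separate case analysis is required. No non-trivial obstacle is anticipated.
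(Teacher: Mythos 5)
Your proof is correct, but the geometric-median half takes a genuinely different route from the paper. The paper argues via first-order conditions: if $\varz$ lies outside the convex hull, a separating hyperplane with normal $h$ exists, every voter's unit pull $\unitvector{\paramsub{\voter} - \varz}$ then has a strictly negative component along $h$, so the sum of forces cannot vanish and $\varz$ fails the equilibrium (subgradient-optimality) condition of Lemma~\ref{lemma:geometric-median_unit-forces}. You instead give a zeroth-order variational argument: project $\geometricmedian$ onto the hull and show, via the projection inequality $(\geometricmedian - \pi)^T(\pi - \paramsub{\voter}) \geq 0$, that the projection strictly decreases the distance to every input point simultaneously, hence strictly decreases $\Loss$, contradicting minimality. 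Your version buys something: it never touches gradients or subgradients, so it applies verbatim to any minimizer (uniqueness is irrelevant) and to any loss built from increasing functions of the distances — in particular it covers the average (sum of squared distances) by the same projection step, though your trivial convex-combination observation for the average is of course also fine. The paper's force-balance phrasing, by contrast, fits its ``one voter, one unit force'' narrative and characterizes \emph{all} equilibria of pull-type dynamics in one stroke. Your coordinate-wise median counterexample, $(1,0,0)$, $(0,1,0)$, $(0,0,1)$ with $\CWMedian = (0,0,0)$, is exactly the paper's. One pedantic point you implicitly use and could state: the convex hull of finitely many points is compact, hence closed, so the projection $\pi$ exists — but this is standard and does not constitute a gap.
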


\begin{proof}
  Consider $\parz$ not in the convex hull. Then there must exist a separating hyperplane, with a normal vector $h$, which goes from the convex hull to $\parz$. But then all vectors pull $\parz$ in the direction of $-h$. The projection of the sum of forces on $h$ thus cannot be nil, which shows that $\parz$ cannot be an equilibrium.
  
  Now, to show that the coordinate-wise median may not lie within the convex hull of voters' vote, consider $\paramsub{1} = (1,0,0)$, $\paramsub{2} = (0,1,0)$ and $\paramsub{3} = (0,0,1)$. 
  Then the coordinate-wise median is $(0,0,0)$. This clearly does not belong to the convex hull of $\paramfamily{}$.
\end{proof}

\begin{proposition}
\label{prop:continuity}
  The geometric median is continuous on all points of $\paramfamily$, if $\dim \paramfamily \geq 2$. 
\end{proposition}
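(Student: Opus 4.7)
\textbf{Proof plan for Proposition~\ref{prop:continuity}.} The plan is a standard argmin-continuity argument: joint continuity of the loss in $(\paramfamily, \varz)$, uniform boundedness of the minimizers along a convergent sequence of inputs, and uniqueness of the limiting minimizer together imply continuity of the argmin. Concretely, I fix $\paramfamily$ with $\dim \paramfamily \geq 2$, take an arbitrary sequence $\paramfamily^{(n)} \to \paramfamily$, and aim to show $\geometricmedian^{(n)} \triangleq \GeometricMedian(\paramfamily^{(n)}) \to \GeometricMedian(\paramfamily) \triangleq \geometricmedian$.

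First, I would verify that $\dim \paramfamily^{(n)} \geq 2$ eventually, so that the $\geometricmedian^{(n)}$ are well-defined by Proposition~\ref{prop:uniqueness}. Indeed, $\dim \paramfamily \geq 2$ means that some triple $\paramsub{i}, \paramsub{j}, \paramsub{k}$ is affinely independent, which is an open condition, so the same triple remains affinely independent in $\paramfamily^{(n)}$ for $n$ large. Next, I would establish uniform boundedness: rerunning the coercivity argument at the end of the proof of Proposition~\ref{prop:uniqueness} shows that $\geometricmedian^{(n)} \in \ball(0, 3\Delta^{(n)})$, where $\Delta^{(n)} \triangleq \max_{\voter \in [\VOTER]} \norm{\paramsub{\voter}^{(n)}}{2}$. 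Since $\paramfamily^{(n)} \to \paramfamily$, the scalars $\Delta^{(n)}$ are eventually bounded, and hence so is the sequence $(\geometricmedian^{(n)})_n$.

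From here, I would conclude via a subsequence extraction. By Bolzano--Weierstrass, any subsequence of $(\geometricmedian^{(n)})_n$ admits a further subsequence converging to some $\geometricmedian^* \in \setR^d$. For every $\varz \in \setR^d$, the definition of the geometric median gives $\Loss(\paramfamily^{(n_k)}, \geometricmedian^{(n_k)}) \leq \Loss(\paramfamily^{(n_k)}, \varz)$, and since $\Loss$ is jointly continuous in $(\paramfamily, \varz)$ (a direct consequence of continuity of the Euclidean norm), letting $k \to \infty$ yields $\Loss(\paramfamily, \geometricmedian^*) \leq \Loss(\paramfamily, \varz)$. Thus $\geometricmedian^*$ is a minimizer of $\Loss(\paramfamily, \cdot)$, and Proposition~\ref{prop:uniqueness} forces $\geometricmedian^* = \geometricmedian$. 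Since every subsequence of $(\geometricmedian^{(n)})_n$ has a further subsequence converging to the same limit $\geometricmedian$, the full sequence converges to $\geometricmedian$, establishing continuity at $\paramfamily$.

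There is no real obstacle in this proof: the assumption $\dim \paramfamily \geq 2$ is doing all the work by guaranteeing that uniqueness (and hence the identification of the subsequential limit) kicks in uniformly near $\paramfamily$. The only mild care needed is in the transfer of the coercivity bound from $\paramfamily$ to nearby $\paramfamily^{(n)}$, which is immediate from $\paramfamily^{(n)} \to \paramfamily$.
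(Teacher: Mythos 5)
Your proposal is correct and follows essentially the same route as the paper: show $\dim \paramfamily^{(n)} \geq 2$ eventually, obtain boundedness of the minimizers, identify every subsequential limit as $\geometricmedian$ via the minimizing inequality and uniqueness (Proposition~\ref{prop:uniqueness}), and conclude by the subsequence principle. Your cosmetic variations (affine independence of a triple as the open condition instead of closedness of rank-$\leq 1$ families, and coercivity instead of the convex-hull property of Proposition~\ref{prop:convex_hull} for boundedness) are both sound.
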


\begin{proof}
  Consider $\paramfamily \in \setR^{d\times{\VOTER}}$ with $\dim \paramfamily \geq 2$.
  By Proposition \ref{prop:uniqueness}, there is a unique geometric median $\geometricmedian \triangleq \GeometricMedian(\paramfamily)$.
  
  To prove the continuity of $\GeometricMedian$, let us consider a sequence of families $\paramfamily^{(n)}$ such that $\paramfamily^{(n)} \rightarrow \paramfamily$, 
  and let us prove that this family eventually has a unique geometric median $\geometricmedian^{(n)}$, which converges to $\geometricmedian$ as $n \rightarrow \infty$.
  
  First note that the set of families $\parxfamily \in \setR^{d\times{\VOTER}}$ for which $\dim \parxfamily \leq 1$ is isomorphic to the set of matrices of $\setR^{d \times \VOTER}$ of rank at most 1. 
  It is well-known that this set is closed for all norms in $\setR^{d \times \VOTER}$ (this can be verified by considering the determinants of all $2 \times 2$ submatrices, which are all continuous functions).
  Thus the set of families $\parxfamily$ such that $\dim \parxfamily \geq 2$ is open.
  In particular, there is a ball centered on $\paramfamily$ whose points $\parxfamily$ all satisfy $\dim \parxfamily \geq 2$.
  Since, for $n \geq N_0$ large enough, $\paramfamily^{(n)}$ must belong to this ball, it must eventually satisfy $\dim \paramfamily^{(n)} \geq 2$.
  This guarantees the uniqueness of $\geometricmedian^{(n)} \triangleq \GeometricMedian(\paramfamily^{(n)})$ for $n \geq N_0$.

  Now consider any convergent subsequence $\geometricmedian^{(n_k)} \rightarrow g^*$.
  Since the geometric median minimizes the loss $\Loss$, for any $n_k\in\setN$, we know that
  $\Loss(\geometricmedian^{(n_k)}, \paramfamily^{(n_k)}) \leq \Loss(\geometricmedian, \paramfamily^{(n_k)})$.
  Taking the limit then yields $\Loss(\geometricmedian^{*}, \paramfamily) \leq \Loss(\geometricmedian, \paramfamily)$.
  Since $\geometricmedian$ is the geometric median of $\paramfamily$, we thus actually have $\Loss(\geometricmedian^{*}, \paramfamily) = \Loss(\geometricmedian, \paramfamily)$. 
  But Proposition \ref{prop:uniqueness} guarantees the uniqueness of the geometric median. 
  Therefore, we actually have $\geometricmedian^* = \geometricmedian$. 
  Put differently, any convergent subsequence of $\geometricmedian^{(n)}$ converges to $\geometricmedian$. 
    
  Now by contradiction, assume $\geometricmedian^{(n)}$ does not converge to $\geometricmedian$. 
  Thus, for any $\varepsilon > 0$, there is an infinite subsequence $\geometricmedian^{(n_i)}$ of $g^{(n)}$ lies outside the open ball $\ball(\geometricmedian,\varepsilon)$. 
  But since the geometric median belongs to the convex hull of the vectors (Proposition \ref{prop:convex_hull}), for $n \geq N_0$, $\geometricmedian^{(n_i)}$ is clearly also bounded. 
  Thus, by the Bolzano-Weierstrass theorem, the subsequence $\geometricmedian^{(n_i)}$ must have at least one converging subsequence, whose limit $\geometricmedian^\dagger$ lies outside the open ball $\ball(\geometricmedian,\varepsilon)$.
  But this contradicts the fact that every convergent subsequence of $\geometricmedian^{(n)}$ converges to $\geometricmedian$. 
  Therefore, $\geometricmedian^{(n)}$ must converge to $\geometricmedian$. 
  This proves that the geometric median is continuous with respect to $\paramfamily$. 
\end{proof}

\subsection{Approximation of the Average}

One interesting feature of the geometric median and of the coordinate-wise median is that they are provably a good approximation of the average. 
Note that the uniqueness of the geometric median or of the coordinate-wise median is not needed for the following well-known proposition.

\begin{proposition}[\cite{STANISLAV15}]
Denote by $\Sigma(\paramfamily)$ the covariance matrix of $\paramfamily$ defined by 
\begin{equation}
  \Sigma_{ij}(\paramfamily) \triangleq \frac{1}{\VOTER} \sum_{\voter \in [\VOTER]} (\paramsub{\voter}[i] - \average(\paramfamily)[i]) (\paramsub{\voter}[j] - \average(\paramfamily)[j]).
\end{equation} 
Then $\norm{\average(\paramfamily) - \GeometricMedian(\paramfamily)}{2} \leq \sqrt{\trace{\Sigma(\paramfamily)}}$ and $\norm{\average(\paramfamily) - \CWMedian(\paramfamily)}{2} \leq \sqrt{\trace{\Sigma(\paramfamily)}}$.
\end{proposition}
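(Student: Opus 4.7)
The plan is to combine three elementary ingredients: (i) the defining optimality of each operator, (ii) the vanishing-sum identity $\sum_\voter (\paramsub{\voter} - \average(\paramfamily)) = 0$, which lets one rewrite $\VOTER(\average(\paramfamily) - \geometricmedian)$ as $\sum_\voter (\paramsub{\voter} - \geometricmedian)$, and (iii) the Cauchy--Schwarz inequality, which converts a sum of absolute or Euclidean deviations into a root-sum-of-squares expressible through $\trace{\Sigma(\paramfamily)}$.

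For the geometric median $\geometricmedian \triangleq \GeometricMedian(\paramfamily)$, I would first use (ii) to write $\VOTER \norm{\average(\paramfamily) - \geometricmedian}{2} = \norm{\sum_\voter (\paramsub{\voter} - \geometricmedian)}{2}$, then apply the triangle inequality to upper-bound this by $\sum_\voter \norm{\paramsub{\voter} - \geometricmedian}{2}$. Invoking (i) with $\average(\paramfamily)$ as an alternative candidate for the minimizer replaces the right-hand side by $\sum_\voter \norm{\paramsub{\voter} - \average(\paramfamily)}{2}$, and one Cauchy--Schwarz step bounds this sum by $\sqrt{\VOTER \sum_\voter \norm{\paramsub{\voter} - \average(\paramfamily)}{2}^2} = \VOTER \sqrt{\trace{\Sigma(\paramfamily)}}$, where the last equality is just the definition of the trace of the covariance matrix. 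Dividing by $\VOTER$ closes the first bound. For $c \triangleq \CWMedian(\paramfamily)$, the same recipe runs one coordinate at a time: the coordinate-wise median's optimality holds per coordinate against the $\ell_1$-loss, which yields $\VOTER |\average(\paramfamily)[j] - c[j]| \leq \sum_\voter |\paramsub{\voter}[j] - \average(\paramfamily)[j]| \leq \VOTER \sqrt{\Sigma_{jj}(\paramfamily)}$; squaring and summing over $j \in [d]$ then gives $\norm{\average(\paramfamily) - c}{2}^2 \leq \trace{\Sigma(\paramfamily)}$.

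There is no genuine obstacle here; the whole proof is a short assembly of the three ingredients above in that order. The only point worth an explicit remark is that uniqueness of $\GeometricMedian(\paramfamily)$ or $\CWMedian(\paramfamily)$ is not needed: the derivation only uses that $\geometricmedian$ and $c$ are elements of the respective argmin sets, so the inequality applies uniformly to any choice of minimizer, matching the fact that the statement itself does not assume uniqueness.
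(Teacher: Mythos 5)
Your proof is correct and follows essentially the same route as the paper's: triangle inequality (the paper's Jensen step for $\norm{\cdot}{2}$, applied to the average), then the minimizer's optimality tested against $\average(\paramfamily)$, then Cauchy--Schwarz (the paper's Jensen step for $t \mapsto t^2$), with the coordinate-wise median handled per coordinate exactly as the paper does. Your closing remark that only membership in the argmin set is used, so uniqueness is not required, also matches the paper's explicit note preceding the proposition.
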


\begin{proof}
We start with the geometric median. Recall that $\GeometricMedian(\paramfamily)$ minimizes $\parz \mapsto \expectVariable{\voter}{\norm{\paramsub{\voter} - \parz}{2}}$, where $\voter$ is drawn uniformly randomly from $[\VOTER$].
It thus does better to minimize this term than $\average(\paramfamily)$. 
We then have 
\begin{align}
  \norm{\average(\paramfamily) - \GeometricMedian(\paramfamily)}{2}
  &= \norm{\expectVariable{\voter}{\paramsub{\voter}} - \GeometricMedian(\paramfamily{})}{2} 
  \leq \expectVariable{\voter}{\norm{\paramsub{\voter} - \GeometricMedian(\paramfamily{})}{2}} \\
  &\leq \expectVariable{\voter}{\norm{\paramsub{\voter} - \average(\paramfamily)}{2}} 
  \leq \sqrt{\expectVariable{\voter}{\norm{\paramsub{\voter} - \average(\paramfamily)}{2}^2}}
  = \sqrt{\trace{\Sigma(\paramfamily)}},
\end{align}
where we also used Jensen's inequality twice for the function $x \mapsto \norm{x}{2}$ and $t \mapsto t^2$.

We now address the case of the coordinate-wise median. 
On dimension $i$, using similar arguments as in the proof above, this square of the discrepancy can be upper-bounded by the variance of $\param$ along dimension $i$.
In other words, we have $\absv{\average(\paramfamily)[i] - \CWMedian(\paramfamily)[i]} \leq \sqrt{\Sigma_{ii}(\paramfamily{})}$.
Squaring this inequality, and summing over all coordinates then yields $\norm{\average(\paramfamily) - \CWMedian(\paramfamily)}{2}^2 \leq \sum \Sigma_{ii}(\paramfamily{}) = \trace{\Sigma(\paramfamily)}$.
Taking the square root yields the second inequality of the proposition.
\end{proof}

\section{PROOFS OF SECTION \ref{sec:non_strategyproofness}}
\label{app:non_strategyproof}

\begin{figure}%
    \centering
    \includegraphics[width=.39\textwidth]{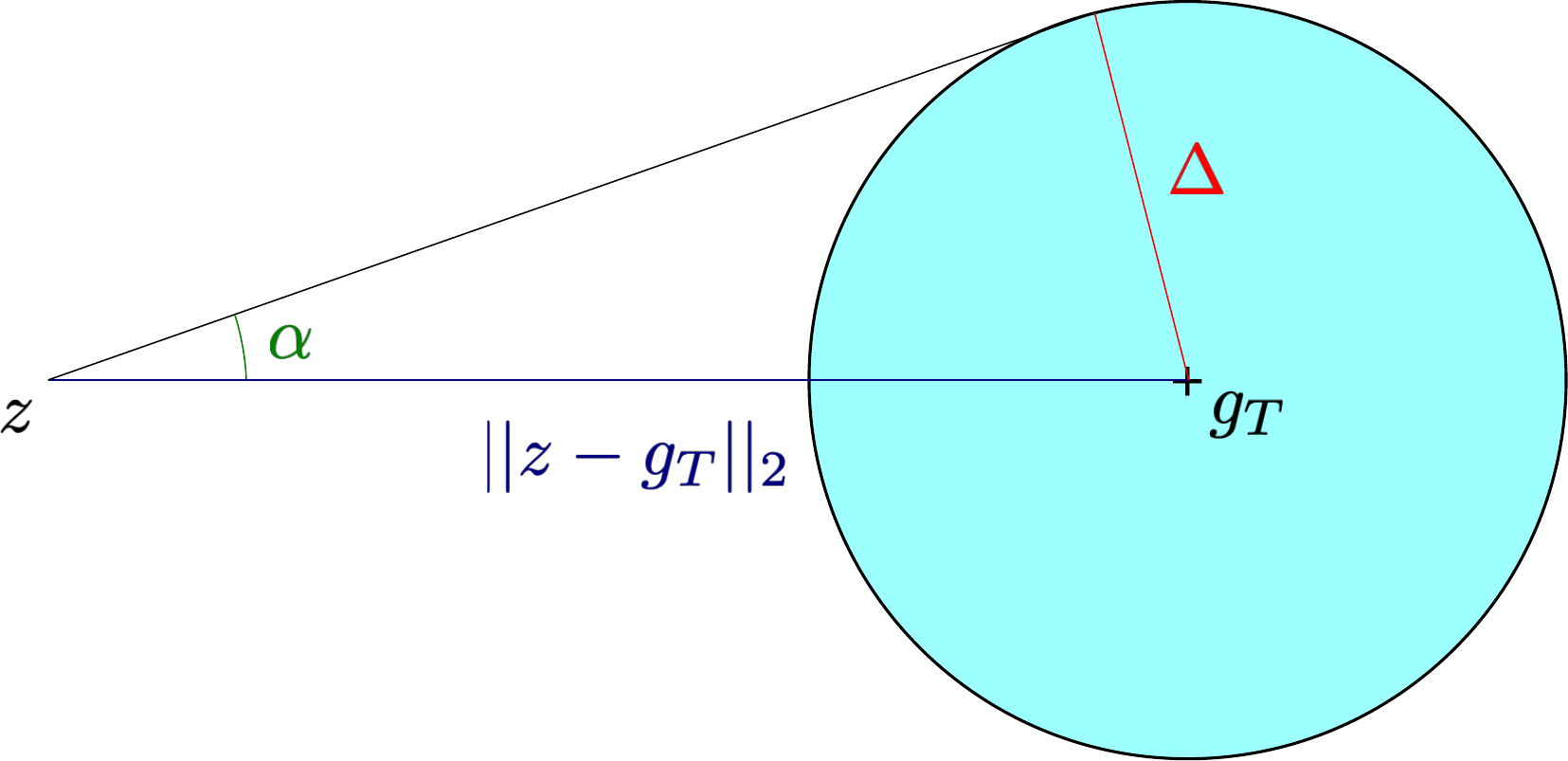}
    \caption{Resilience of the geometric median against coordinated attacks by a minority of strategic voters $\strategists$, who pull on $\parz$ in the opposite direction from a strict majority of truthful voters $\truthful$, whose vectors are all in the ball centered on $\geometricmedian_\truthful$ and of radius $\Delta$.}
    \label{fig:byzantine}
\end{figure}

\subsection{Proof of Proposition \ref{th:byzantine}}
\label{sec:proof_nonmanipulabe}
\begin{proof}
  Let us denote $[\VOTER] = \truthful \cup \strategists$ a decomposition of the voters into two disjoint subsets of truthful and strategic voters. We assume a strict majority of truthful voters, i.e., $\card{\truthful} > \card{\strategists}$. 
  Denote $\geometricmedian_\truthful \triangleq \GeometricMedian(\paramfamily_\truthful)$ the geometric median of truthful voters' preferred vectors, and $\Delta \triangleq \max \set{\norm{\paramsub{t} - \geometricmedian_\truthful}{2} \st t \in \truthful}$ the maximum distance between a truthful voters' preferred vector and the geometric median $\geometricmedian_\truthful$.
  
  Now consider any point $\parz \notin \ball(\geometricmedian_\truthful, \Delta)$. The sum of forces on $\parz$ by truthful voters has a norm equal to
  \begin{align}
    \norm{\sum_{t \in \truthful} \unitvector{\paramsub{t} - \parz}}{2} 
    &\geq \left(\sum_{t \in \truthful} \unitvector{\paramsub{t} - \parz} \right)^T \unitvector{\geometricmedian_\truthful - \parz} 
    = \sum_{t \in \truthful} \unitvector{\paramsub{t} - \parz}^T \unitvector{\geometricmedian_\truthful - \parz} \\
    &\geq \card{\truthful} \cos \alpha 
    = \card{\truthful} \sqrt{1-\sin^2 \alpha} 
    = \card{\truthful} \sqrt{1 - \frac{\Delta^2}{\norm{\parz - \geometricmedian_\truthful}{2}^2}},
  \end{align}
  where $\alpha$ is defined in Figure \ref{fig:byzantine} as the angle between $\geometricmedian_\truthful - \parz$ and a tangent to $\ball(\geometricmedian_\truthful)$ that goes through $\parz$. But then the sum of all forces at $\parz$ must be at least
  \begin{align}
    &\norm{\sum_{t \in \truthful} \unitvector{\paramsub{t} - \parz} + \sum_{\voter \in \strategists} \unitvector{\strategicvote{\voter} - \parz}}{2}
    \geq \norm{\sum_{t \in \truthful} \unitvector{\paramsub{t} - \parz}}{2} - \norm{\sum_{\voter \in \strategists} \unitvector{\strategicvote{\voter} - \parz}}{2} \\
    &\qquad \geq \card{\truthful} \sqrt{1 - \frac{\Delta^2}{\norm{\parz}{2}^2}} - \sum_{\voter \in \strategists} \norm{\unitvector{\strategicvote{\voter} - \parz}}{2}
    = \card{\truthful} \sqrt{1 - \frac{\Delta^2}{\norm{\parz - \geometricmedian_\truthful}{2}^2}} - \card{\strategists} > 0,
  \end{align}
  as long as we have $\norm{\parz - \geometricmedian_\truthful}{2} > \frac{\card{\truthful} \Delta}{ \sqrt{\card{\truthful}^2 - \card{\strategists}^2}}$. A value of $\parz$ that satisfies this strict inequality can thus not be a geometric median. Put differently, no matter what strategic voters do, we have 
  \begin{equation}
    \GeometricMedian(\paramfamily_{\truthful}, \strategicvote{\strategists}) \in 
    \ball\left( \GeometricMedian(\paramfamily_\truthful), \left( 1- \frac{\card{\strategists}^2}{\card{\truthful}^2} \right)^{-1/2} \max_{t \in \truthful} \norm{\paramsub{t} - \GeometricMedian(\paramfamily_\truthful)}{2} \right).
  \end{equation}
  This concludes the proof.
\end{proof}

\subsection{Proof of Theorem \ref{th:geometric_median_not_strategyproof}}
\label{sec:proof_th_geometric_median_not_strategyproof}

To obtain Theorem~\ref{th:geometric_median_not_strategyproof}, we make use of a technical lemma that characterizes the achievable set for the the strategic voter.
Consider the set 
    $\setAchieve \triangleq \set{\parz \in \setR^d \st \exists \subgradient \in \nabla_\parz \Loss_{}(\paramfamily,\parz) \mathsep \norm{\subgradient}{2}\leq 1/\VOTER }$,
of points $\parz$ where the loss restricted to other voters $\voter \in [\VOTER]$ has a subgradient of norm at most $1/\VOTER$.
We now observe that, by behaving strategically, voter $0$ can choose any value for the geometric median within $\setAchieve$.

\begin{lemma}
\label{lemma:achievable_set}
  For any $\strategicvote{0} \in \setR^d$, 
  $\GeometricMedian (\strategicvote{0}, \paramfamily) \in \setAchieve$.
  Moreover, for $\dim \paramfamily \geq 2$ and $\strategicvote{0} \in \setAchieve$,  we have
  $\GeometricMedian (\strategicvote{0}, \paramfamily) = \strategicvote{0}$.
\end{lemma}
\begin{proof}
   Define $\ell_2(\parz) \triangleq \norm{\parz}{2}$. Now note that $(1+\VOTER) \nabla_\parz \Loss_{} \left( \strategicvote{0}, \paramfamily,\parz \right) = \nabla_\parz \ell_2 (\parz - \strategicvote{0}) + \VOTER \nabla_\parz \Loss_{} (\paramfamily, \parz)$.
  In other words, for any subgradient $\subgradient_{0:\VOTER} \in \nabla \Loss_{} \left( \strategicvote{0}, \paramfamily,\parz \right)$, 
  there exists $\subgradient_{0} \in \nabla \ell_2(\parz - \strategicvote{0})$ and $\subgradient_{1:\VOTER} \in \nabla \Loss_{} \left(\paramfamily,\parz \right) $
  such that $(1+\VOTER) \subgradient_{0:\VOTER} = \subgradient_{0} + \VOTER \subgradient_{1:\VOTER}$.
  Note that any subgradient of $\ell_2$ has at most a unit $\ell_2$-norm (Lemma \ref{lemma:unit-force}).
  Thus, $\norm{\subgradient_{0}}{2} \leq 1$.

  Now, assume $\parz \notin \setAchieve$. 
  Then for any $\subgradient_{1:\VOTER} \in \nabla \Loss_{} \left(\paramfamily,\parz \right) $, we must have $\norm{\subgradient_{1:\VOTER}}{2} > 1 / \VOTER$. 
  As a result, 
  \begin{align}
    (1+\VOTER) \norm{\subgradient_{0:\VOTER}}{2} 
    \geq \norm{ \VOTER \subgradient_{1:\VOTER} + \subgradient_{0} }{2}
    \geq \VOTER \norm{\subgradient_{1:\VOTER}}{2} - 1 > 0.
  \end{align}
  Thus, $0 \notin \nabla \Loss_{} \left( \strategicvote{0}, \paramfamily,\parz \right)$, which means that $\parz$ cannot be a geometric median.
  For any $\strategicvote{0} \in \setR^d$, we thus necessarily have $\GeometricMedian (\strategicvote{0}, \paramfamily) \in \setAchieve$.

  Now assume that $\strategicvote{0} \in \setAchieve$.
  Then there must exist $\subgradient_{1:\VOTER} \in \nabla \Loss_{} \left(\paramfamily,\strategicvote{0} \right) $ such that $\VOTER \norm{\subgradient_{1:\VOTER}}{2} \leq 1$. 
  Thus $\subgradient_{0} \triangleq - \VOTER \subgradient_{1:\VOTER} \in \nabla \ell_2 (\parz - \strategicvote{0})$, for $\parz = \strategicvote{0}$, since the set of subgradients of $\ell_2$ at $0$ is the unit closed ball.
  We then have $\subgradient_{0} + \VOTER \subgradient_{1:\VOTER} = 0 \in \nabla \Loss_{} \left( \strategicvote{0}, \paramfamily,\strategicvote{0} \right)$.
  Thus $\strategicvote{0}$ minimizes $\Loss_{} \left( \strategicvote{0}, \paramfamily,\cdot \right)$.
  The uniqueness of the geometric median for $\dim (\strategicvote{0}, \paramfamily) \geq \dim \paramfamily \geq 2$ (Proposition~\ref{prop:uniqueness}) then implies that $\GeometricMedian (\strategicvote{0}, \paramfamily) = \strategicvote{0}$.
\end{proof}
We now provide the detailed proof of Theorem \ref{th:geometric_median_not_strategyproof} by formalizing the example of Figure \ref{fig:ellipsoid}.

\begin{proof}[Proof of Theorem \ref{th:geometric_median_not_strategyproof}]
  Define $\paramsub{1} = (-X,-1)$, $\paramsub{2} = (-X,1)$, $\paramsub{3} = (X,-1)$ and $\paramsub{4} = (X,1)$, with $X \geq 8$. We define the sum of distance restricted to these four inputs as
  \begin{equation}
    \Loss_0 (\parz) \triangleq \frac{1}{4}\sum_{\voter = 1}^4 \norm{\paramsub{\voter} - \parz}{2}.
  \end{equation}
  Since $0$ is a center of symmetry of the four inputs, it is the geometric median. Moreover, it can then be shown that the Hessian matrix at this optimum is
  \begin{equation}
    \hessian \triangleq \nabla_\parz^2 \Loss_0 (0) 
    = \frac{1}{4} (1+X^2)^{-3/2} 
    \begin{pmatrix}
      1 & 0 \\
      0 & X^2
    \end{pmatrix}.
  \end{equation}
  Note that the ratio between the largest and smallest eigenvalues of this Hessian matrix $\hessian$ is $X^2$, which can take arbitrarily large values. This observation turns out to be at the core of our proof. The eigenvalues also yield bounds on the norm of a vector to which $\hessian$ was applied. Using the inequality $X \geq 1$, 
  \begin{equation}
  \label{eq:H_inequality}
    \frac{1}{32 X^3} \norm{\parz}{2} \leq \norm{\hessian \parz}{2} \leq \frac{1}{4X} \norm{\parz}{2} \leq \norm{\parz}{2}.
  \end{equation}
  In the vicinity of $0$, since $\nabla_\parz \Loss_0 (0) = 0$ and since $\Loss_0$ is infinitely differentiable in $0$, we then have
  \begin{align}
    \nabla_\parz \Loss_0 (\parz)
    &= \hessian \parz + \varepsilon(\parz), 
  \end{align}
  where $\norm{\varepsilon(\parz)}{2} = O(\norm{\parz}{2}^2)$ when $\parz \rightarrow 0$. 
  In fact, for $X \geq 1$, we know that there exists $A$ such that, for all $\parz \in \ball(0,1)$, where $\ball(0,1)$ is the unit Euclidean ball centered on $0$, we have $\norm{\varepsilon(\parz)}{2} \leq A\norm{\parz}{2}^2$. 
  We also define
  \begin{equation}
    \lambda \triangleq \inf_{\parz \in \ball(0,1)} \min \spectrum(\nabla_\parz^2 \Loss_0(\parz)) \quad\text{and}\quad \mu \triangleq \sup_{\parz \in \ball(0,1)} \max \spectrum(\nabla_\parz^2 \Loss_0(\parz))
  \end{equation}
  the minimal and maximal eigenvalues of the Hessian matrix of $\Loss_0$ over the ball $\ball(0,1)$. By continuity (Lemma~\ref{lemma:continuous_min_spectrum}) and strong convexity, we know that $\mu \geq \lambda > 0$.
  We then have $\lambda I \preceq \nabla_\parz^2 \Loss_0(\parz) \preceq \mu I$ over $\ball(0,1)$.  
  From this, it follows that
  \begin{equation}
    \lambda \norm{\parz}{2} \leq \norm{\nabla_\parz \Loss_0(\parz)}{2} \leq \mu \norm{\parz}{2},
  \end{equation}
  for all $\parz \in \ball(0,1)$. Now, since $\nabla_\parz^2 \Loss_0(\parz) \succeq 0$ for all $\parz \in \setR^d$, from this we also deduce that $\norm{\nabla_\parz \Loss_0(\parz)}{2} \geq \lambda$ if $\parz \notin \ball(0,1)$.
  
  Now consider $\VOTER$ honest voters such that $\VOTER/4 \in \setN$ and $\paramsub{4k+j} = \paramsub{j}$, for $j \in [4]$ and $k \in [\VOTER/4-1]$. We denote by $\paramfamily_\VOTER$ this vector family. For any voter $0$'s strategic vote $\strategicvote{0}$, we then have
  \begin{equation}
    (1+\VOTER) \Loss (\strategicvote{0}, \paramfamily_\VOTER, \parz) = \norm{\strategicvote{0} - \parz}{2} + \VOTER \Loss_0(\parz).
  \end{equation}
  Note that we then have 
  \begin{equation}
   (1+\VOTER) \nabla_\parz \Loss (\strategicvote{0}, \paramfamily_\VOTER, \parz) = \unitvector{\parz-\strategicvote{0}} + \VOTER \nabla \Loss_0 (\parz),
   \label{eq:not_strategyproof_gradient}
  \end{equation}
  where $\unitvector{\parx} \triangleq \frac{\parx}{\norm{\parx}{2}}$ is the unit vector in the same direction as $\parx$. For all $\parz \notin \ball(0,1)$, we then have 
  \begin{equation}
    \norm{\nabla_\parz \Loss (\strategicvote{0}, \paramfamily_\VOTER, \parz)}{2} \geq \frac{\VOTER \norm{\nabla \Loss_0 (\parz)}{2} - \norm{\unitvector{\parz-\strategicvote{0}}}{2}}{1+\VOTER} \geq \frac{\VOTER \lambda - 1}{1+\VOTER} > 0,
  \end{equation}
  for $\VOTER > 1/\lambda$.
  Thus, for $\VOTER > 1/\lambda$, we know that, for any $\strategicvote{0}$, we have $\GeometricMedian(\strategicvote{0}, \paramfamily_\VOTER) \in \ball(0,1)$, where the inequality $\norm{\varepsilon(\parz)}{2} \leq A\norm{\parz}{2}^2$ holds.
  
  Since $\Loss_0$ is strictly convex, there exists a unique $\alpha_\VOTER >0$ such that $\norm{\nabla_\parz \Loss_0 (\alpha_\VOTER (X^3,1))}{2} =1/\VOTER$. Denote $\geometricmedian_\VOTER \triangleq \alpha_\VOTER (X^3,1)$. 
  Now define 
  \begin{equation}
    \targetvector = \targetvector (\VOTER)\triangleq \geometricmedian_\VOTER + \frac{1}{\sqrt{\VOTER}} \nabla_\parz \Loss_0 (\geometricmedian_\VOTER).
  \end{equation}
  The force of $\targetvector$ on $\geometricmedian_\VOTER$ is then the unit force with direction $\targetvector - \geometricmedian_\VOTER = \frac{1}{\sqrt{\VOTER}} \nabla_\parz \Loss_0(\geometricmedian_\VOTER)$.
  Since $\norm{\nabla_\parz \Loss_0(\geometricmedian_\VOTER)}{2} = 1/\VOTER$, this unit vector must be $\VOTER \nabla_\parz \Loss_0(\geometricmedian_\VOTER)$. 
  Plugging this into the gradient of $\Loss$ (Equation~(\ref{eq:not_strategyproof_gradient})) shows that $\nabla_\parz \Loss (\targetvector, \paramfamily_\VOTER, \geometricmedian_\VOTER) = 0$. 
  Therefore, Lemma \ref{lemma:geometric-median_unit-forces} and the uniqueness of the geometric median (Proposition \ref{prop:uniqueness}) allow us to conclude that $\geometricmedian_\VOTER$ is the geometric median of the true preferred vectors, i.e., $\geometricmedian_\VOTER = \GeometricMedian (\targetvector, \paramfamily_\VOTER)$. 
  Also, we have 
  \begin{equation}
    \norm{\targetvector - \GeometricMedian(\targetvector, \paramfamily_\VOTER)}{2} = \frac{1}{\sqrt{\VOTER}} \norm{\nabla_\parz \Loss_0(\geometricmedian_\VOTER)}{2} = \VOTER^{-3/2}.
  \end{equation}
  Since $\geometricmedian_\VOTER$ is a geometric median of $\targetvector$ and $\paramfamily_\VOTER$, we know that, for $\VOTER > 1/\lambda$, we have $\geometricmedian_\VOTER \in \ball(0,1)$. 
  As a result, we have $\lambda \norm{\geometricmedian_\VOTER}{2} \leq 1/\VOTER = \norm{\nabla_\parz \Loss_0 (\geometricmedian_\VOTER)}{2} \leq \mu \norm{\geometricmedian_\VOTER}{2}$, and thus
  \begin{equation}
  \label{eq:g_bound}
      1/\mu \VOTER \leq \norm{\geometricmedian_\VOTER}{2} \leq 1/\lambda \VOTER.
  \end{equation}
  
  Now, suppose that, instead of reporting $\targetvector$, voter $0$ reports $\strategicvote{0}$, which is approximately the orthogonal projection of $\targetvector$ on the ellipsoid $\set{\parz \st \norm{\hessian \parz}{2} \leq 1/\VOTER}$. More precisely, voter $0$'s strategic vote is defined as
  \begin{align}
    \strategicvote{0} = \strategicvote{0}(\VOTER) &\triangleq \targetvector - \frac{2}{\sqrt{\VOTER}} \frac{\geometricmedian_\VOTER^T \hessian \hessian \hessian \geometricmedian_\VOTER}{\norm{\hessian \hessian \geometricmedian_\VOTER}{2}^2} \hessian \hessian \geometricmedian_\VOTER \\
    &= \geometricmedian_\VOTER + \frac{1}{\sqrt{\VOTER}} \nabla_\parz \Loss_0(\geometricmedian_\VOTER) - \frac{2}{\sqrt{\VOTER}} \frac{\geometricmedian_\VOTER^T \hessian \hessian \hessian \geometricmedian_\VOTER}{\norm{\hessian \hessian \geometricmedian_\VOTER}{2}^2} \hessian \hessian \geometricmedian_\VOTER.
    \label{eq:strategic_vote_def}
  \end{align}
  Given the inequalities $\norm{\hessian \parz}{2} \leq \norm{\parz}{2}$ (Equation (\ref{eq:H_inequality})) and $\norm{\geometricmedian_\VOTER}{2} \leq 1/\lambda \VOTER$, the norm of $\strategicvote{0}$ can be upper-bounded by
  \begin{align}
    \norm{\strategicvote{0}}{2} &\leq \norm{\geometricmedian_\VOTER}{2} + \frac{1}{\sqrt{\VOTER}} \norm{\nabla_\parz \Loss_0(\geometricmedian_\VOTER)}{2} + \frac{2}{\sqrt{\VOTER}} \frac{\norm{\hessian \geometricmedian_\VOTER}{2} \norm{\hessian \hessian \geometricmedian_\VOTER}{2}}{\norm{\hessian \hessian \geometricmedian_\VOTER}{2}} 
    \leq \frac{1 + (2+\lambda) \VOTER^{-1/2}}{\lambda \VOTER}.
  \end{align}
  Assuming $\VOTER \geq 1+3/\lambda$ then implies $\norm{\strategicvote{0}}{2} \leq (3+\lambda)/\lambda \VOTER \leq 1$ and $\norm{\strategicvote{0}}{2} = \mathcal O(1/\VOTER)$.
  As a result, $\norm{\varepsilon(\strategicvote{0})}{2} \leq A \norm{\strategicvote{0}}{2}^2 = \mathcal O(1/\VOTER^2)$. Thus
  \begin{align}
    \norm{\nabla_\parz \Loss_0 (\strategicvote{0})}{2}^2 
    &= \norm{\hessian \strategicvote{0} + \varepsilon(\strategicvote{0})}{2}^2 \\
    &\leq \norm{\hessian \strategicvote{0}}{2}^2 + 2 \norm{\strategicvote{0}}{2} \norm{\varepsilon(\strategicvote{0})}{2} + \norm{\varepsilon(\strategicvote{0})}{2}^2 \\ 
    &\leq \norm{\hessian \strategicvote{0}}{2}^2 + \mathcal O(1/\VOTER^3),
  \end{align}
  where the hidden constant in $\mathcal O(1/\VOTER^3)$ depends on $\lambda$ and $A$. 
  Moreover, given that $\norm{\geometricmedian_\VOTER}{2} = \mathcal{O}(1/\VOTER)$ (Equation (\ref{eq:g_bound})) and $\nabla_\parz \Loss_0(\geometricmedian_\VOTER) = \hessian \geometricmedian_\VOTER + \varepsilon(\geometricmedian_\VOTER)$, by Equation (\ref{eq:strategic_vote_def}), we have
  \begin{align}
    \norm{\hessian \strategicvote{0}}{2}^2 
    &= \norm{\hessian \geometricmedian_\VOTER}{2}^2 + \frac{2}{\sqrt{\VOTER}} (\hessian \geometricmedian_\VOTER)^T \hessian \left( \hessian \geometricmedian_\VOTER + \varepsilon(\geometricmedian_\VOTER) - 2 \frac{\geometricmedian_\VOTER^T \hessian \hessian \hessian \geometricmedian_\VOTER}{\norm{\hessian \hessian \geometricmedian_\VOTER}{2}^2} \hessian \hessian \geometricmedian_\VOTER \right) + \mathcal O(1/\VOTER^3) \\
    &= \norm{\nabla_\parz \Loss_0(\geometricmedian_\VOTER) - \varepsilon(\geometricmedian_\VOTER)}{2}^2 + \frac{2}{\sqrt{\VOTER}} \geometricmedian_\VOTER^T \hessian \hessian \hessian \geometricmedian_\VOTER - \frac{4}{\sqrt{\VOTER}} \geometricmedian_\VOTER^T \hessian \hessian \hessian \geometricmedian_\VOTER + \mathcal O(1/\VOTER^3) \\
    &\leq \norm{\nabla_\parz \Loss_0(\geometricmedian_\VOTER)}{2}^2 - \frac{2}{\sqrt{\VOTER}} \geometricmedian_\VOTER^T \hessian \hessian \hessian \geometricmedian_\VOTER + \mathcal O(1/\VOTER^3) \\
    &\leq \frac{1}{\VOTER^2} - \frac{2}{\sqrt{\VOTER}} \geometricmedian_\VOTER^T \hessian \hessian \hessian \geometricmedian_\VOTER + \mathcal O(1/\VOTER^3).
  \end{align}
  The hidden constants in $\mathcal O(1/\VOTER^3)$ depend on $\lambda$, $A$, $\hessian$ and $X$. 
  Since $\hessian$ has strictly positive eigenvalues and does not depend on $\VOTER$, we know that $\geometricmedian_\VOTER^T \hessian \hessian \hessian \geometricmedian_\VOTER = \Theta(\norm{\geometricmedian_\VOTER}{2}^2) = \Theta(1/\VOTER^2)$. 
  In particular, for $\VOTER$ large enough $\frac{2}{\sqrt{\VOTER}} \geometricmedian_\VOTER^T \hessian \hessian \hessian \geometricmedian_\VOTER = \Theta(1/\VOTER^{2.5})$ takes larger values than $\mathcal O(1/\VOTER^3)$.
  We then have $\norm{\nabla_\parz \Loss_0 (\strategicvote{0})}{2} < 1/\VOTER$, which means that $\strategicvote{0}$ lies inside the achievable set $\setAchieve$.
  Therefore, Lemma \ref{lemma:achievable_set} implies that for $\VOTER$ large enough, by reporting $\strategicvote{0}$ instead of $\targetvector$, voter $0$ can move the geometric median from $\geometricmedian_\VOTER$ to $\strategicvote{0}$, i.e., we have $\GeometricMedian(\strategicvote{0}, \paramfamily_\VOTER) = \strategicvote{0}$.
  But then, the distance between voter $0$'s preferred vector $\targetvector$ and the manipulated geometric median is given by
  \begin{align}
    \norm{\GeometricMedian(\strategicvote{0}, \paramfamily_\VOTER) - \targetvector}{2} 
    = \norm{\frac{2}{\sqrt{\VOTER}} \frac{\geometricmedian_\VOTER^T \hessian \hessian \hessian \geometricmedian_\VOTER}{\norm{\hessian \hessian \geometricmedian_\VOTER}{2}^2} \hessian \hessian \geometricmedian_\VOTER}{2} 
    = \frac{2}{\sqrt{\VOTER}} \frac{(\hessian \geometricmedian_\VOTER)^T (\hessian \hessian \geometricmedian_\VOTER)}{\norm{\hessian \hessian \geometricmedian_\VOTER}{2}}.
  \end{align}
  Now recall that $\geometricmedian_\VOTER = \alpha_\VOTER (X^3, 1)$. Moreover, $\alpha_\VOTER \norm{\hessian (X^3,1)}{2} = \norm{\hessian \geometricmedian_\VOTER}{2} = \norm{\nabla_\parz \Loss_0 (\geometricmedian_\VOTER) - \varepsilon(\geometricmedian_\VOTER)}{2} = 1/\VOTER + \mathcal O(1/\VOTER^2)$. 
  Since $\hessian(X^3,1) = \frac{1}{4}(1+X^2)^{-3/2} (X^3, X^2) = \frac{1}{4} X^2 (1+X^2)^{-3/2} (X,1)$, we have $\norm{\hessian(X^3,1)}{2} =\frac{1}{4} X^2 (1+X^2)^{-1}$. Thus, $\alpha_\VOTER = 4 X^{-2} (1+X^2)/\VOTER + \mathcal O(1/\VOTER^2)$.  
  As a result, we have $\hessian \hessian \geometricmedian_\VOTER = \frac{1}{16}\alpha_\VOTER X^3 (1+X^2)^{-3} (1,X)$.
  The norm of this vector is then $\norm{\hessian \hessian \geometricmedian_\VOTER}{2} = \frac{1}{16} \alpha_\VOTER X^3 (1+X^2)^{-5/2}$. 
  Moreover, its scalar product with $\hessian \geometricmedian_\VOTER$ yields $(\hessian \geometricmedian_\VOTER)^T (\hessian \hessian \geometricmedian_\VOTER) = \frac{1}{32} \alpha_\VOTER^2 X^6 (1+X^2)^{-9/2}$. We thus have 
  \begin{align}
    \norm{\GeometricMedian(\strategicvote{0}, \paramfamily_\VOTER) - \targetvector}{2} 
    &= \frac{ \alpha_\VOTER}{\sqrt{\VOTER}} \frac{X^6 (1+X^2)^{-9/2}}{X^3 (1+X^2)^{-5/2}} \\
    &= \frac{4X}{(1+X^2) \VOTER^{3/2}} + \mathcal O(\VOTER^{-5/2}) \\
    &= \frac{4X}{1+X^2} \norm{\targetvector - \GeometricMedian(\targetvector, \paramfamily_\VOTER)}{2} + \mathcal O(\VOTER^{-5/2}).
  \end{align}
  In particular, for $\VOTER$ large enough, we can then guarantee that
  \begin{align}
    \norm{\targetvector - \GeometricMedian(\targetvector, \paramfamily_\VOTER)}{2} 
    &> \frac{1+X^2}{8X} \norm{\GeometricMedian(\strategicvote{0}, \paramfamily_\VOTER) - \targetvector}{2} \\
    &= \left( 1+\frac{X^2-8X+1}{8X} \right) \norm{\GeometricMedian(\strategicvote{0}, \paramfamily_\VOTER) - \targetvector}{2}.
  \end{align}
  This proves that the geometric median fails to be $\frac{X^2-8X+1}{8X}$-strategyproof. But our proof holds for any value of $X$, and $\frac{X^2-8X+1}{8X} \rightarrow \infty$ as $X \rightarrow \infty$. Thus, there is no value of $\strategyproofbound$ such that the geometric median is $\strategyproofbound$-strategyproof.
\end{proof}

\section{PROOFS AND DIFFERENT RESULTS FROM SECTION \ref{sec:asymptotic_strategyproofness}}
\label{sec:proof_asymptotic_strategyproofness}
In this section we provide a formal proof for the main result of our paper which is Theorem~\ref{th:asymptotic_strategyproofness}. We start by proving a few useful facts about the infinite geometric median $\geometricmedian_\infty$ defined on the distribution of the reported vectors.
\subsection{Preliminary Results for the Infinite Limit Case}
\begin{lemma}
\label{lemma:dimension}
Under Assumption \ref{ass:pdf}, with probability 1, we have $\dim \paramfamily_\VOTER = \min \set{\VOTER-1, d}$.
\end{lemma}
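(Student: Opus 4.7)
The plan is to prove the equality by induction on the number of voters, using the fact that $\paramdistribution$ is absolutely continuous with respect to Lebesgue measure on $\setR^d$ (by Assumption~\ref{ass:pdf}), so any fixed proper affine subspace of $\setR^d$ has $\paramdistribution$-measure zero. Specifically, I would denote by $A_j$ the affine hull of $\set{\paramsub{1}, \ldots, \paramsub{j}}$, so that $\dim \paramfamily_\VOTER = \dim A_\VOTER$, and prove the stronger statement that almost surely $\dim A_j = \min\set{j-1, d}$ for every $j \in [\VOTER]$.

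The base case $j=1$ is trivial since $A_1$ is a single point. For the inductive step, I would condition on $\paramsub{1}, \ldots, \paramsub{j-1}$ and distinguish two cases. If $\dim A_{j-1} = d$ (which only arises when $j-1 \geq d+1$), then $A_{j-1} = \setR^d$ and $A_j = \setR^d$ trivially, giving $\dim A_j = d = \min\set{j-1, d}$. Otherwise, $A_{j-1}$ is a proper affine subspace of $\setR^d$ of dimension $\min\set{j-2, d}< d$, so it has Lebesgue measure zero in $\setR^d$. Since $\paramsub{j}$ is independent of $\paramsub{1}, \ldots, \paramsub{j-1}$ and its distribution $\paramdistribution$ admits a density $p$ with respect to Lebesgue measure on $\PARAM$ (and charges no mass outside $\PARAM$), the conditional probability $\probability{\paramsub{j} \in A_{j-1} \st \paramsub{1}, \ldots, \paramsub{j-1}}$ equals $\int_{A_{j-1} \cap \PARAM} p(\param) d\param = 0$. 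Hence almost surely $\paramsub{j} \notin A_{j-1}$, which forces $\dim A_j = \dim A_{j-1} + 1 = \min\set{j-1, d}$.

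Integrating over the conditioning, each inductive step fails on a null set, and since we only need to iterate finitely many times (at most $\VOTER$), a union bound preserves the ``probability one'' conclusion. Applying the induction up to $j = \VOTER$ gives $\dim \paramfamily_\VOTER = \dim A_\VOTER = \min\set{\VOTER-1, d}$ almost surely, as desired.

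I do not expect any serious obstacle here: the only subtlety is the measure-theoretic justification that a proper affine subspace of $\setR^d$ is $\paramdistribution$-null, which follows immediately from the existence of a density on $\PARAM$ together with $\probability{\param \in \PARAM} = 1$. The hypothesis $d \geq 5$ from Assumption~\ref{ass:pdf} is not needed for this lemma (it will be used later for smoothness of $\Loss_\infty$), and the condition $\expect{\norm{\param}{2}} < \infty$ is also not required.
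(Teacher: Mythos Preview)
Your proposal is correct and follows essentially the same approach as the paper: induction on the number of voters, using that any proper affine subspace has Lebesgue measure zero and hence $\paramdistribution$-measure zero by absolute continuity, so the next independently drawn point almost surely increases the affine dimension until it reaches $d$. Your write-up is in fact a bit more careful than the paper's (explicitly handling the conditioning and the union bound, and noting which parts of Assumption~\ref{ass:pdf} are actually used), but the argument is the same.
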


\begin{proof}
  We prove this by induction over $\VOTER$. 
  For $\VOTER = 1$, the lemma is obvious.
  
  Assume now that the lemma holds for $\VOTER \leq d$. 
  Then $\dim \paramfamily_\VOTER = \VOTER - 1$ with probability 1.
  The affine space generated by $\paramfamily_\VOTER$ is thus a hyperplane, whose Lebesgue measure is zero.
  Assumption~\ref{ass:pdf} then implies that the probability of drawing a point on this hyperplane is zero.
  In other words, with probability 1, $\paramsub{\VOTER +1}$ does not belong to the hyperplane, which implies that $\dim \paramfamily_{\VOTER + 1} = \dim \paramfamily_\VOTER + 1 = (\VOTER +1) -1 \leq d$, which proves the induction.
  
  Now assume that the lemma holds for $\VOTER \geq d+1$. 
  Then $\dim \paramfamily_\VOTER = d$ with probability 1.
  We then have $\dim \paramfamily_{\VOTER+1} \geq \dim \paramfamily_\VOTER = d$.
  Since this dimension cannot be strictly larger than $d$, we must then have $\dim \paramfamily_{\VOTER+1} = d$.
  This concludes the proof.
\end{proof}
Combing Lemma~\ref{lemma:dimension} with Proposition~\ref{prop:uniqueness} guarantees the uniqueness of the geometric median for $\VOTER \geq 3$ under Assumption~\ref{ass:pdf}.

\begin{lemma}
  \label{lemma: bounded integral}
  If $d \geq k+1$, then $\parx \mapsto \norm{\parx}{2}^{-k}$ is integrable in $\ball(0,1)$, and $\int_{\ball(0, \varepsilon)} \norm{\parx}{2}^{-k} dx = \mathcal O(\varepsilon)$ as $\varepsilon \rightarrow 0$.
\end{lemma}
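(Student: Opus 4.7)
The plan is to reduce the integral to a one-dimensional radial integral by switching to spherical coordinates. In $\setR^d$, writing $\varx = r u$ with $r \geq 0$ and $u \in S^{d-1}$, the Lebesgue measure decomposes as $dx = r^{d-1} \, dr \, d\sigma(u)$, where $d\sigma$ is the surface measure on the unit sphere. Since the integrand $\norm{\varx}{2}^{-k} = r^{-k}$ depends only on $r$, I can integrate out the angular variable to obtain
\begin{equation}
  \int_{\ball(0,\varepsilon)} \norm{\varx}{2}^{-k} \, dx = \omega_{d-1} \int_0^\varepsilon r^{d-1-k} \, dr,
\end{equation}
where $\omega_{d-1} \triangleq \sigma(S^{d-1})$ is the surface area of the unit sphere (a finite constant depending only on $d$).

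Next, I handle the radial integral. The exponent is $d-1-k$, and the hypothesis $d \geq k+1$ gives $d-1-k \geq 0$, so the integrand $r^{d-1-k}$ is bounded on $[0,1]$ (in particular continuous, hence integrable near $0$). Evaluating directly yields
\begin{equation}
  \int_0^\varepsilon r^{d-1-k} \, dr = \frac{\varepsilon^{d-k}}{d-k}.
\end{equation}
Taking $\varepsilon = 1$ establishes integrability of $\varx \mapsto \norm{\varx}{2}^{-k}$ on $\ball(0,1)$.

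For the asymptotic bound, I observe that $d-k \geq 1$ by hypothesis, so for $\varepsilon \in (0,1]$ we have $\varepsilon^{d-k} \leq \varepsilon$. Combining with the closed-form expression,
\begin{equation}
  \int_{\ball(0,\varepsilon)} \norm{\varx}{2}^{-k} \, dx = \frac{\omega_{d-1}}{d-k} \varepsilon^{d-k} \leq \frac{\omega_{d-1}}{d-k} \varepsilon,
\end{equation}
which is exactly the claimed $\mathcal O(\varepsilon)$ estimate as $\varepsilon \rightarrow 0$. The computation is entirely routine; there is no genuine obstacle, only the verification that the condition $d \geq k+1$ is precisely what guarantees both the absolute integrability of the radial integral at the origin and the linear-in-$\varepsilon$ decay of its truncation.
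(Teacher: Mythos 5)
Your proof is correct and follows essentially the same route as the paper's: both pass to spherical coordinates, integrate out the angular variables into a dimensional constant, and reduce the claim to the radial integral $\int_0^\varepsilon r^{d-1-k}\,dr \propto \varepsilon^{d-k} = \mathcal O(\varepsilon)$ under $d-k \geq 1$. Your version is if anything slightly more explicit, giving the closed form $\varepsilon^{d-k}/(d-k)$ and the bound $\varepsilon^{d-k} \leq \varepsilon$ rather than the paper's proportionality statement.
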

\begin{proof}
  Consider the hyperspherical coordinates $(r, \varphi_1, \ldots, \varphi_{d-1})$, where $\parx_j = r \left( \prod_{i=1}^{j-1} \cos \varphi_i \right) \sin \varphi_j$. We then have $d \parx = r^{d-1} dr \left( \prod_{i=1}^{d-1} \cos^{d-i-1} \varphi_i d\varphi_i \right)$. The integral becomes
  \begin{equation}
    \int_{\ball(0,1)} \norm{\parx}{2}^{-k} = C(d) \int_{0}^1 r^{-k} r^{d-1} dr = C(d) \int_0^1 r^{d-1-k} dr,
  \end{equation}
  where $C(d)$ is obtained by integrating appropriately all the angles of the hyperspherical coordinates, which are clearly integrable. But $\int_0^1 r^{d-1-k} dr$ is also integrable when $d-1-k \geq 0$. We conclude by noting that we then have $\int_0^\varepsilon r^{d-1-k} dr \propto \varepsilon^{d-k} = \mathcal O(\varepsilon)$ for $d-k \geq 1$.
\end{proof}

\label{sec:proof_positive_difinite}
\begin{proposition}
\label{porp:positive difinite}
  Under Assumption \ref{ass:pdf}, $\Loss_\infty$ is five-times continuously differentiable with a strictly positive definite Hessian matrix on $\PARAM$. 
  As a corollary, the geometric median $\geometricmedian_\infty$ is unique and lies in $\PARAM$.
\end{proposition}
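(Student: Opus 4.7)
The plan is to prove the four parts of the proposition in order: (i) five-times continuous differentiability of $\Loss_\infty$ on $\PARAM$; (ii) strict positive-definiteness of $\nabla^2 \Loss_\infty$ on $\PARAM$; (iii) uniqueness of $\geometricmedian_\infty$; (iv) $\geometricmedian_\infty \in \PARAM$. For (i), I fix $\varz_0 \in \PARAM$, pick $\varepsilon > 0$ with $\ball(\varz_0, 3\varepsilon) \subset \PARAM$, and split $\int \norm{\varz - \param}{2} p(\param) d\param$ according to whether $\param \in \ball(\varz_0, 2\varepsilon)$ or not. Outside the ball the integrand is smooth in $\varz \in \ball(\varz_0, \varepsilon)$ with all derivatives uniformly bounded in $\param$, so repeated differentiation under the integral is immediate from dominated convergence together with $\expect{\norm{\param}{2}} < \infty$. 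Inside the ball, I use the generalization of Lemmas~\ref{lemma:unit-force} and~\ref{lemma:hessian_Euclidean_norm} that the $k$-th derivative tensor of $\norm{\cdot}{2}$ has norm at most $C_k \norm{\cdot}{2}^{1 - k}$, combined with local boundedness of $p$ by continuity; Lemma~\ref{lemma: bounded integral} then makes this integrable precisely when $d \geq k$, which is exactly why the assumption $d \geq 5$ appears.

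The swap of derivative and integral, and the attendant continuity of the resulting derivative in $\varz$, is the main technical obstacle, since no single pointwise dominator works uniformly in $\varz$ near $\varz_0$ (the supremum of $\norm{\varz' - \param}{2}^{1-k}$ over $\varz'$ near $\varz$ blows up at $\param$ near $\varz$). I would resolve it via Vitali's convergence theorem rather than plain DCT: the family $\{ \nabla^k_\varz \norm{\varz - \param}{2}\, p(\param) \}_{\varz \in \ball(\varz_0, \varepsilon)}$ is uniformly $L^1(d\param)$-bounded by the computation above, and it is equi-integrable because for any measurable $A$, $\int_A \norm{\varz - \param}{2}^{1-k} d\param \leq \int_{\ball(0, R)} \norm{\vary}{2}^{1-k} d\vary$ with $R$ chosen so that $\mathrm{vol}(\ball(0, R)) = \mathrm{Leb}(A)$, which tends to $0$ uniformly in $\varz$ as $\mathrm{Leb}(A) \to 0$ by Lemma~\ref{lemma: bounded integral}.

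For (ii), Lemma~\ref{lemma:hessian_Euclidean_norm} yields
\begin{equation*}
    \varx^T \nabla^2 \Loss_\infty(\varz) \varx = \expect{\frac{\norm{\varx}{2}^2 - (\varx^T \unitvector{\varz - \param})^2}{\norm{\varz - \param}{2}}},
\end{equation*}
and the integrand is nonnegative, vanishing only on the one-dimensional affine line through $\varz$ in direction $\varx$; since $d \geq 2$ this line is Lebesgue-null, hence $\paramdistribution$-null, so the expectation is strictly positive whenever $\varx \neq 0$. For (iii), the same collinearity observation applied to the triangle inequality shows that $\Loss_\infty$ is strictly convex on all of $\setR^d$ (for $\varx \neq \vary$, the set of $\param$ where equality holds in $\norm{t(\varx - \param) + (1-t)(\vary - \param)}{2} \leq t\norm{\varx - \param}{2} + (1-t)\norm{\vary - \param}{2}$ lies on the line through $\varx$ and $\vary$, hence is Lebesgue-null); together with coercivity $\Loss_\infty(\varz) \geq \norm{\varz}{2} - \expect{\norm{\param}{2}}$, this yields existence and uniqueness of $\geometricmedian_\infty$. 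For (iv), since $\paramdistribution$ has a density the gradient $\nabla \Loss_\infty(\varz) = \expect{\unitvector{\varz - \param}}$ is well-defined at every $\varz \in \setR^d$, and optimality forces $\nabla \Loss_\infty(\geometricmedian_\infty) = 0$. If $\geometricmedian_\infty \notin \PARAM$, applying the separation theorem to the open convex set $\PARAM$ gives $n \neq 0$ with $n^T(\param - \geometricmedian_\infty) \geq 0$ for all $\param \in \PARAM$, with equality only on a Lebesgue-null supporting hyperplane; since $\probability{\param \in \PARAM} = 1$, this forces $n^T(\param - \geometricmedian_\infty) > 0$ almost surely, whence $n^T \expect{\unitvector{\geometricmedian_\infty - \param}} < 0$, contradicting the first-order condition.
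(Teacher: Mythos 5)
Your proposal is correct, and on the two substantive parts it takes a genuinely different route from the paper. For differentiability the skeleton is the same (split off a small ball around the singularity, bound $p$ there by continuity, invoke Lemma~\ref{lemma: bounded integral}, dominate outside), but where the paper simply invokes Leibniz's rule and notes the ball's contribution is negligible, you justify the derivative--integral swap via Vitali with equi-integrability from a rearrangement bound; that is more careful than the paper, with one routine step left implicit: Vitali as you state it gives continuity of the candidate derivative $\varz \mapsto \int \nabla^k_\varz \norm{\varz - \param}{2}\, p(\param)\, d\param$, and to identify it as the actual derivative you should apply the same uniform-integrability bound to the difference quotients (which, after averaging $\norm{\varz + s u - \param}{2}^{1-k}$ along the segment, your rearrangement estimate covers), or use an FTC-plus-Fubini argument along segments. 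For positive definiteness your argument is cleaner and strictly stronger than the paper's: the paper picks a point with $p(\varz) > 0$, extracts three nearby non-collinear regions of positive mass, and reuses the finite-voter eigenvalue computation of Proposition~\ref{prop:uniqueness}, whereas you write $\varx^T \nabla^2 \Loss_\infty(\varz) \varx$ as the expectation of a nonnegative integrand vanishing only on a line, which is Lebesgue-null and hence $\paramdistribution$-null; this works uniformly at every $\varz \in \PARAM$ with no positivity assumption on $p$ at the evaluation point. Your uniqueness argument also bypasses the Hessian entirely, getting strict convexity of $\Loss_\infty$ on all of $\setR^d$ from the almost-everywhere strict triangle inequality (the paper instead passes through Hessian positivity and its piecewise-strict-convexity lemmas), and your reverse-triangle coercivity $\Loss_\infty(\varz) \geq \norm{\varz}{2} - \expect{\norm{\param}{2}}$ is simpler than the paper's probability-$1/2$ ball for existence. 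Part (iv) coincides with the paper's approach, which reuses the separating-hyperplane unit-pull argument of Proposition~\ref{prop:convex_hull}; your observation that openness of $\PARAM$ upgrades the separation to a strict inequality, making the expected pull strictly negative along $n$, is exactly the point the paper leaves to the reader.
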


\begin{proof}
  Let $\parz \in \PARAM$ and $\delta >0$ such that $\ball(\parz, \delta) \subset \PARAM$. By Leibniz's integral rule, we obtain
  \begin{equation}
    \nabla \Loss_\infty (\parz)
    = \int_\PARAM \nabla_\parz \norm{\parz - \param}{2} p(\param) d\param
    = \int_\PARAM \unitvector{\parz-\param} p(\param) d\param.
  \end{equation}
  To deal with the singularity at $\param = \parz$, 
  we first isolate the integral in the ball $\ball(\parz, \varepsilon)$, for some $0 < \varepsilon \leq \delta$. 
  On this compact set, $p$ is continuous and thus upper-bounded. We can then apply the previous lemma for $k=0$ to show that this singularity is negligible as $\varepsilon \rightarrow 0$.
  Moreover, Leibniz's integral rule does apply, since $\unitvector{\parz-\param} p(\param)$ can be upper-bounded by $p(\param)$ outside of $\ball(\parz,\delta)$, which is integrable by Assumption \ref{ass:pdf}.
  This shows that $\Loss_\infty$ is continuously differentiable. To prove that it is twice-differentiable, we note that Leibniz's integral rule applies again. Indeed, we have
  \begin{equation}
    \nabla^2 \Loss_\infty (\parz)
    = \int_\PARAM \nabla_\parz^2 \norm{\parz - \param}{2} p(\param) d\param
    = \int_\PARAM \frac{I-\unitvector{\parz-\param} \unitvector{\parz-\param}^T}{\norm{{\parz-\param}}{2}} p(\param) d\param,
  \end{equation}
  But note that each coordinate of the matrix $\frac{I-\unitvector{\parz-\param} \unitvector{\parz-\param}^T}{\norm{{\parz-\param}}{2}}$ is at most
  $\frac{1}{\norm{\parz-\param}{2}}$.
  By virtue of the previous lemma, for $d \geq 2$, this is integrable in $\parz$.
  Moreover, by isolating the integration in the ball $\ball(\parz, \varepsilon)$, we show that the impact of the integration in this ball is negligible as $\varepsilon \rightarrow 0$.
  Finally, the rest of the integration is integrable, as $\frac{1}{\norm{\parz-\param}{2}} p(\param)$ can be upper-bounded by $\frac{1}{\delta} p(\param)$ outside of $\ball(\parz,\delta)$, which is integrable by Assumption \ref{ass:pdf}.

  The cases of the third, fourth, and fifth derivatives are handled similarly, 
  with now the bounds $\absv{\partial_{ijk}^3 \norm{\parz - \param}{2}} \leq 6/\norm{\parz-\param}{2}^2$, 
  $\absv{\partial_{ijkl}^4 \norm{\parz - \param}{2}} \leq 36/\norm{\parz-\param}{2}^3$ and
  $\absv{\partial_{ijklm}^5 \norm{\parz - \param}{2}} \leq 300/\norm{\parz-\param}{2}^4$,
  and using $d \geq 5$.

  To prove the strict convexity, consider a point $\parz \in \PARAM$ such that $p(\parz) > 0$. By continuity of $p$, for any two orthogonal unit vectors $\unitvector{1}$ and $\unitvector{d}$ and $\eta >0$ small enough, we must have $p(\parz + \eta \unitvector{1}) >0$ and $p(\parz + \eta \unitvector{d}) >0$.
  For any $\varepsilon >0$, there must then be a strictly positive probability to draw a point in $\ball(\parz, \varepsilon)$, a point in $\ball(\parz + \eta \unitvector{1}, \varepsilon)$, and a point in $\ball(\parz + \eta \unitvector{d}, \varepsilon)$.
  Moreover, for $\varepsilon$ much smaller than $\eta$, then the three points thereby drawn cannot be colinear.
  We then obtain a situation akin to the proof of Proposition \ref{prop:uniqueness}.
  By the same argument, this suffices to prove that the Hessian matrix must be positive definite.
  Therefore, $\Loss_\infty$ is strictly convex.

  It follows straightforwardly from this that the geometric median is unique. Its existence can be derived by considering a ball $\ball(0,A)$ of probability at least $1/2$ according to $\paramdistribution$. If $\norm{\parz}{2} \geq A + 2 \expect{\norm{\param}{2}}$, then
  \begin{equation}
    \Loss_\infty(\parz) \geq \frac{1}{2} \left( A + 2 \expect{\norm{\param}{2}} - A \right) \geq \expect{\norm{\param}{2}} = \Loss_\infty(0).
  \end{equation}
  Thus $\Loss_\infty$ must reach a minimum in $\ball(0, A + 2 \expect{\norm{\param}{2}})$.
  Finally, we conclude that the geometric median must belong to $\PARAM$, by re-using the argument of Proposition \ref{prop:convex_hull}.
\end{proof}

\subsection{Proof Steps for Theorem~\ref{th:asymptotic_strategyproofness}}
\label{sec:proof_steps_main}
In this section, we provide the full proof of Theorem~\ref{th:asymptotic_strategyproofness} that consists of the following steps. First, in Section~\ref{sec:global_lemmas}, we find the sufficient conditions under which for a given function $F$ the set $\set{z : \norm{\nabla F(\parz)}{2} \leq 1}$ is convex. We then use this result to find sufficient conditions for the geometric median to become $\strategyproofbound$-strategyproof in Section~\ref{sec:sufficient_strategyproofness}. Then in Section~\ref{sec:finite} we show that these conditions are satisfied with high probability when the number of voters is large enough. Next, Section~\ref{sec:proof_skewness_continuous} proves that the $\skewness$ function is continuous which is necessary for the proof of our theorem.  Finally, Section \ref{sec:proof_th_asymptotic_strategyproofness} combines
these steps (lemmas \ref{lemma:convex_projection_strategyproof}, \ref{lemma:no_voter},  \ref{lemma:close_median},\ref{lemma:bounded_third_derivative}, and \ref{lemma:close_hessian}) to prove Theorem \ref{th:asymptotic_strategyproofness} .%

\subsubsection{Higher Derivatives and Unit-norm Gradients}
\label{sec:global_lemmas}
Note that our analysis involves the third derivative tensor to guarantee the convexity of the achievable set (defined in~\eqref{eq:achivable_set}).
 Therefore, here we provide a discussion about higher-order derivatives.
We consider here a three-times continuously differentiable convex function $F$, and we study the set of points $\parz$ such that $\norm{\nabla F(\parz)}{2} \leq 1$. 
In particular, we provide a sufficient condition for the convexity of this set, based on the study of the first three derivatives of $F$.
This convexity guarantee then allows us to derive a sufficient condition on $\Loss_{1:\VOTER}$ to guarantee $\strategyproofbound$-strategyproofness.

To obtain such guarantee, let us recall a few facts about higher derivatives. 
In general, the $n$-th derivative of a function $F : \setR^d \rightarrow \setR$ at a point $\parz$ is a (symmetric) tensor $\nabla^n F(\parz) : \underbrace{\setR^d \otimes \ldots \otimes \setR^d}_{n \text{ times}} \rightarrow \setR$, which inputs $n$ vectors and outputs a scalar. 
This tensor $\nabla^n F(\parz)$ is linear in each of its $n$ input vectors.
More precisely, its value for input $[\parx_1 \otimes \ldots \otimes \parx_n]$ is
\begin{equation}
  \nabla^n F(\parz) [\parx_1 \otimes \ldots \otimes \parx_n] 
  = \sum_{i_1 \in [d]} \ldots \sum_{i_n \in [d]} \left( \parx_1 [i_1] \, \parx_2 [i_2] \ldots \, \parx_{n-1} [i_{n-1}] \, \parx_n [i_n] \right) \partial_{i_1 \ldots i_n}^n F(\parz),
\end{equation}
where $\partial_{i_1 \ldots i_n}^n F(\parz)$ is the $n$-th partial derivative of $F$ with respect to the coordinates $i_1$, $i_2$, \ldots, $i_n$ (by the symmetry of derivation, the order in which $F$ is derived along the different coordinates does not matter).

For $n=1$, we see that $\nabla F(\parz)$ is simply a linear form $\setR^d \rightarrow \setR$.
By Euclidean duality, $\nabla F(\parz)$ can thus be regarded as a vector, called the \emph{gradient}, such that $\nabla F(\parz) [\parx] = \parx^T \nabla F(\parz)$.
Note that if $F$ is assumed to be convex, but not differentiable, $\nabla F(\parz)$ represents its set of subgradients at point $\parz$, i.e., $\subgradient \in \nabla F(\parz)$ if and only if $F(\parz + \delta) \geq F(\parz) + \subgradient^T \delta$ for all $\delta \in \setR^d$.
From this definition, it follows straightforwardly that $\parz$ minimizes $F$ if and only if $0 \in \nabla F(\parz)$.

For $n=2$, $\nabla^2 F(\parz)$ is now a bilinear form $\setR^d \otimes \setR^d \rightarrow \setR$.
By isomorphism between (symmetric) bilinear forms and (symmetric) matrices, $\nabla^2 F(\parz)$ can equivalently be regarded as a (symmetric) matrix, called the \emph{Hessian matrix}, such that $\nabla^2 F(\parz) [\parx \otimes \pary] = \parx^T \left( \nabla^2 F(\parz) \right) \pary$.

A bilinear form $B : \setR^d \otimes \setR^d \rightarrow \setR$ is said to be \emph{positive semi-definite} (respectively, \emph{positive definite}), if $B [\parx \otimes \parx] \geq 0$ for all $\parx \in \setR^d$ (respectively, $B [\parx \otimes \parx] > 0$ for all $\parx \neq 0$).
If so, we write $B \succeq 0$ (respectively, $B \succ 0$).
Moreover, given any $\parx \in \setR^d$, the function $\pary \mapsto B[\parx \otimes \pary]$ becomes a linear form, which we denote $B[\parx]$.
When the context is clear, $B[\parx]$ can equivalently be regarded as a vector.
Finally, given two bilinear form $A, B : \setR^d \otimes \setR^d \rightarrow \setR$, we can define their composition $A \cdot B : \setR^d \otimes \setR^d \rightarrow \setR$ by $A \cdot B [\parx \otimes \pary] \triangleq A[\parx \otimes B[\pary]] = \parx^T AB \pary$, where, in the last equation, $A$ and $B$ are regarded as matrices.

We also need to analyze the third derivative of $F$, which can thus be regarded as a 3-linear form $\nabla^3 F(\parz) : \setR^d \otimes \setR^d \otimes \setR^d \rightarrow \setR$.
Note as well that, for any 3-linear form $W$ and any fixed first input $\parw \in \setR^d$, the function $(\parx \otimes \pary) \mapsto W [\parw \otimes \parx \otimes \pary]$ is now a bilinear (symmetric) form $\setR^d \otimes \setR^d \rightarrow \setR$.
This (symmetric) bilinear form will be written $W [\parw]$ or $W \cdot \parw$, which can thus equivalently be regarded as a (symmetric) matrix.
Similarly, $W[\parx \otimes \pary]$ can be regarded as a linear form $\setR^d \rightarrow \setR$, or, by Euclidean duality, as a vector in $\setR^d$.

Finally, we can state the following lemma, which provides a sufficient condition for the convexity of the sets of $\parz \in \setR^d$ with a unit-norm $F$-gradient.

\begin{lemma}
\label{lemma:convexity}
  Assume that $\convexset \subset \setR^d$ is convex 
  and that $\nabla^2 F(z) \cdot \nabla^2 F(z) + \nabla^3 F(z) \cdot \nabla F(z) \succeq 0$ for all $\parz \in \convexset$. 
  Then $\parz \mapsto \norm{\nabla F(\parz)}{2}^2$ is convex on $\convexset$.
\end{lemma}

\begin{proof}
  Fix $i \in [d]$. By Taylor approximation of $\partial_i F$ around $\parz$, for $\delta \rightarrow 0$, we have
  \begin{equation}
    \partial_i F(z+\delta) = \partial_i F(z) + \sum_{j \in [d]} \delta_j \partial_{ij}^2 F(z) + \frac{1}{2} \sum_{j,k \in [d]} \delta_j \delta_k \partial_{ijk}^3 F(z) + o(\norm{\delta}{2}^2).
  \end{equation}
  This equation can equivalently be written:
  \begin{equation}
    \nabla F(z+\delta) = \nabla F(z) + \nabla^2 F(z) [\delta] + \frac{1}{2} \nabla^3 F(z) [\delta \otimes \delta] + o(\delta^2).
  \end{equation}
  Plugging this into the computation of the square norm of the gradient yields:
  \begin{align}
    &\norm{\nabla F(z+\delta)}{2}^2
    = \norm{\nabla F(z) + \nabla^2 F(z) [\delta] + \frac{1}{2} \nabla^3 F(z) [\delta \otimes \delta] + o(\delta^2)}{2}^2 \\
    &= \norm{\nabla F(z)}{2}^2
    + 2 \nabla^2 F(z) \left[\nabla F(z) \otimes \delta\right]
    + \norm{\nabla^2 F(z) [\delta]}{2}^2
    + \nabla^3 F(z) \left[\nabla F(z) \otimes \delta \otimes \delta \right]
    + o(\norm{\delta}{2}^2) \\
    &= \norm{\nabla F(z)}{2}^2
    + 2 \nabla^2 F(z) \left[\nabla F(z) \otimes \delta\right]
    + \left( \nabla^2 F(z) \cdot \nabla^2 F(z) + \nabla^3 F(z) \cdot \nabla F(z) \right) [\delta \otimes \delta]
    + o(\norm{\delta}{2}^2).
  \end{align}
  Therefore, matrix $2 \left(\nabla^2 F(z) \cdot \nabla^2 F(z) + \nabla^3 F(z) \cdot \nabla F(z) \right)$ is the Hessian matrix of $z \mapsto \norm{\nabla F(z)}{2}^2 = \nabla F(\parz)^T \nabla F(\parz)$. Yet a twice differentiable function with a positive semi-definite Hessian matrix is convex.
\end{proof}

\begin{lemma}
\label{lemma:convex_achievable_set}
  Assume that $F$ is convex, and that there exists $\parz^* \in \setR^d$ and $\beta > 0$ such that, 
  for any unit vector $\unitvector{}$, there exists a subgradient $\subgradient \in \nabla F(\parz^* + \beta \unitvector{})$ such that $\unitvector{}^T \subgradient > 1$.
  Then the set $ \mathcal{A} \triangleq \set{\parz \in \setR^d \st \exists \subgradient \in \nabla F(\parz) \mathsep \norm{\subgradient}{2} \leq 1}$ of points where $\nabla F$ has a subgradient of at most a unit norm is included in the ball $\ball (\parz^*, \beta)$.
\end{lemma}

\begin{proof}
  Let $\parz \notin \ball(\parz^*, \beta)$. 
  Then there must exist $\gamma \geq \beta$ and a unit vector $\unitvector{}$ such that $\parz - \parz^* = \gamma \unitvector{}$.
  Denote $\parz_{\unitvector{}} \triangleq \parz^* + \beta \unitvector{}$.
  We then have $\parz - \parz_{\unitvector{}} = (\gamma - \beta) \unitvector{}$. 
  Moreover, we know that there exists $\subgradient_{\parz_{\unitvector{}}} \in \nabla F(\parz_{\unitvector{}})$ such that $\unitvector{}^T \subgradient_{\parz_{\unitvector{}}} > 1$.
  By convexity of $F$, for any $\subgradient_\parz \in \nabla F(\parz)$, we then have
  \begin{align}
    \left( \parz - \parz_{\unitvector{}} \right)^T 
    \left( \subgradient_\parz - \subgradient_{\parz_{\unitvector{}}} \right) 
    = (\gamma-\beta) \unitvector{}^T \left( \subgradient_\parz - \subgradient_{\parz_{\unitvector{}}} \right) \geq 0.
  \end{align}
  From this, it follows that $\norm{\subgradient_\parz}{2} \geq \unitvector{}^T \subgradient_\parz \geq \unitvector{}^T \subgradient_{\parz_{\unitvector{}}} > 1$.
  Thus $\parz \notin \mathcal{A} $. 
\end{proof}

\subsubsection{Sufficient Conditions for \texorpdfstring{$\strategyproofbound$}{a}-Strategyproofness}
\label{sec:sufficient_strategyproofness}

\begin{figure}[t]
    \centering
    \includegraphics[width=.3\textwidth]{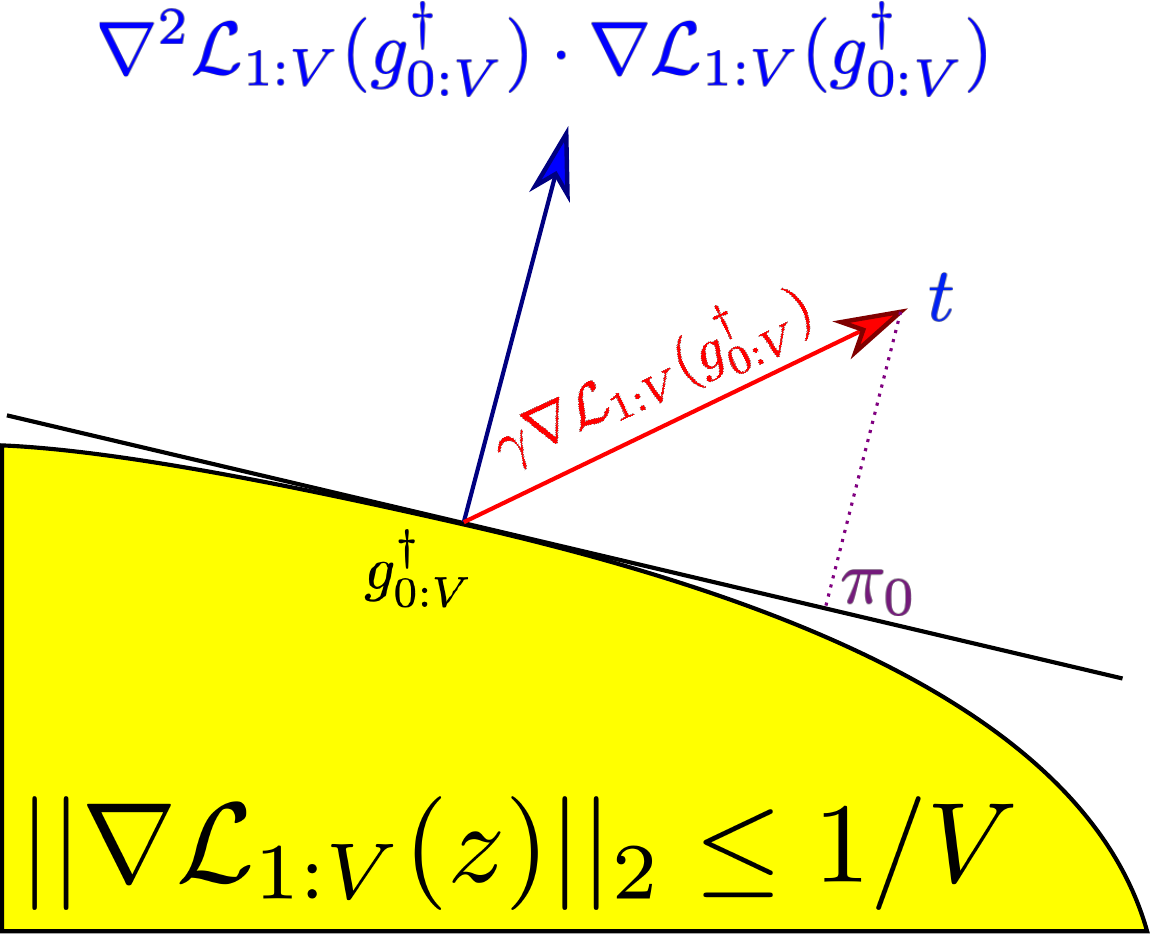}
    \caption{Illustration of what can be gained for target vector $\targetvector \triangleq \geometricmedian_{0:\VOTER}^\dagger + \gamma \nabla \Loss_{1:\VOTER}(\geometricmedian_{0:\VOTER}^\dagger)$. 
    The orthogonal projection $\pi_0$ of $\targetvector$ on the tangent hyperplane of the achievable set going through $\geometricmedian_{0:\VOTER}^\dagger$ yields a lower bound on what can be achieved by voter $0$ through their strategic vote $\strategicvote{0}$.
    This lower bound depends on the angle between $\nabla \Loss_{1:\VOTER}(\geometricmedian_{0:\VOTER}^\dagger)$ and the normal to the hyperplane $\nabla^2 \Loss_{1:\VOTER}(\geometricmedian_{0:\VOTER}^\dagger) \cdot \nabla \Loss_{1:\VOTER}(\geometricmedian_{0:\VOTER}^\dagger)$.}
    \label{fig:projection}
\end{figure}

 Recall from~\eqref{eq:achivable_set} that the achievable set $\setAchieve$ consists of the points $\parz$ such that 
there exists a subgradient $\subgradient \in \nabla_\parz \Loss_{1:\VOTER} (\parz)$ such that 
$\norm{\subgradient}{2}\leq 1/\VOTER$.
 Below, we identify a sufficient condition on $\setAchieve$ to guarantee $\strategyproofbound$-strategyproofness. 
 Note that as explained in the previous section, this analysis involves the third derivative tensor to guarantee the convexity of the achievable set, 
 so that the proof ideas illustrated in Figure~\ref{fig:projection} are applicable.

\begin{lemma}
\label{lemma:convex_projection_strategyproof}
  Assume that $\dim \paramfamily_\VOTER \geq 2$ and that the following conditions hold for some $\beta > 0$:
  \begin{itemize}
    \item {\bf Smoothness:} $\Loss_{1:\VOTER}$ is three-times continuously differentiable on  $\ball(\geometricmedian_{1:\VOTER}, 2 \beta)$.
    \item {\bf Contains $\setAchieve$:} For all unit vectors $\unitvector{}$, $\unitvector{}^T \nabla \Loss_{1:\VOTER} (\geometricmedian_{1:\VOTER} + \beta \unitvector{}) > 1/\VOTER$.
    \item {\bf Convex $\setAchieve$:} $\forall \parz \in \ball(\geometricmedian_{1:\VOTER}, \beta) \mathsep \nabla^2 \Loss_{1:\VOTER}(\parz) \cdot \nabla^2 \Loss_{1:\VOTER}(\parz) + \nabla^3 \Loss_{1:\VOTER}(\parz) \cdot \nabla \Loss_{1:\VOTER}(\parz) \succeq 0$.
    \item {\bf Bounded skewness:} $\forall \parz \in \ball(\geometricmedian_{1:\VOTER}, \beta) \mathsep \skewness (\nabla^2 \Loss_{1:\VOTER}(\parz)) \leq \alpha$.
  \end{itemize}
  Then the geometric median is $\strategyproofbound$-strategyproof for voter $0$.
\end{lemma}
\begin{proof}
  Given Lemma \ref{lemma:achievable_set}, we know that, for $\targetvector \in \setAchieve$, we have $\norm{\GeometricMedian(\targetvector, \paramfamily_\VOTER) - \targetvector}{2} = 0$, which guarantees $\strategyproofbound$-strategyproofness for such voters.
  
  Now assume $\targetvector \notin \setAchieve$, and recall that we defined $\geometricmedian^\dagger_{0:\VOTER} \triangleq \GeometricMedian(\targetvector, \paramfamily_\VOTER)$ as the truthful geometric median.
  By Lemma \ref{lemma:achievable_set}, we know that $\geometricmedian^\dagger_{0:\VOTER} \in \setAchieve$.
  Thus $\targetvector \neq \geometricmedian^\dagger_{0:\VOTER}$.
  Moreover, applying Lemma \ref{lemma:convex_achievable_set} to $F \triangleq \VOTER \Loss_{1:\VOTER}$ guarantees that $\setAchieve \subset \ball(\geometricmedian_{1:\VOTER},  \beta)$. The first condition shows that $\Loss_{1:\VOTER}$ is 3-times differentiable in a neighborhood of $\geometricmedian^\dagger_{0:\VOTER}$. Plus, given the third condition, by Lemma \ref{lemma:convexity}, we know that $\setAchieve$ is a convex set.

  Now, by definition, $\geometricmedian^\dagger_{0:\VOTER}$ must minimize the loss $\Loss_{0:\VOTER} (\targetvector, \cdot)$, i.e., we must have 
  \begin{align}
    0 = \nabla \Loss_{0:\VOTER} (\targetvector, \geometricmedian^\dagger_{0:\VOTER})
    = \unitvector{\geometricmedian^\dagger_{0:\VOTER} - \targetvector} + \VOTER \nabla \Loss_{1:\VOTER} (\geometricmedian^\dagger_{0:\VOTER}).
  \end{align}
  Equivalently, we have $\unitvector{\targetvector - \geometricmedian^\dagger_{0:\VOTER}} = \VOTER \nabla \Loss_{1:\VOTER} (\geometricmedian^\dagger_{0:\VOTER})$.
  This means that $\norm{\nabla \Loss_{1:\VOTER} (\geometricmedian^\dagger_{0:\VOTER})}{2} = 1/\VOTER$, and that there must exist $\gamma > 0$ such that $\targetvector = \geometricmedian^\dagger_{0:\VOTER} + \gamma \nabla \Loss_{1:\VOTER} (\geometricmedian^\dagger_{0:\VOTER})$.
  
  For $\delta \in \setR^d$ small enough, Taylor approximation then yields
  \begin{align}
    \norm{\nabla F(\geometricmedian^\dagger_{0:\VOTER} + \delta)}{2}^2
    &= \norm{\nabla F(\geometricmedian^\dagger_{0:\VOTER}) + \nabla^2 F(\geometricmedian^\dagger_{0:\VOTER}) [\delta] + o(\norm{\delta}{2})}{2}^2 \\
    &= \norm{\nabla F(\geometricmedian^\dagger_{0:\VOTER})}{2}^2 
      + 2 \nabla^2 F(\geometricmedian^\dagger_{0:\VOTER}) \left[ \nabla F(\geometricmedian^\dagger_{0:\VOTER}) \otimes \delta \right] 
      + o(\norm{\delta}{2}) \\
    &= 1 + 2 \subgradient^T \delta + o(\norm{\delta}{2}),
  \end{align}
  where $\subgradient \triangleq \nabla^2 F(\geometricmedian^\dagger_{0:\VOTER}) \cdot \nabla F(\geometricmedian^\dagger_{0:\VOTER})$.
  
  Since $\parz \mapsto \norm{\nabla F(\parz)}{2}^2$ is convex on $\ball(\geometricmedian_{1:\VOTER}, \beta)$, 
  we know that, in this ball, 
  $2 \subgradient$ is thus a subgradient of $\parz \mapsto \norm{\nabla F(\parz)}{2}^2$ at $\geometricmedian^\dagger_{0:\VOTER}$.
  Thus, in fact, for all $\delta \in \ball(\geometricmedian_{1:\VOTER}-\geometricmedian_{0:\VOTER}, \beta)$, we have $\norm{\nabla F(\geometricmedian^\dagger_{0:\VOTER} + \delta)}{2}^2 \geq 1 + 2 \subgradient^T \delta$.
  Now assume that $\geometricmedian^\dagger_{0:\VOTER} + \delta \in \setAchieve$.
  Then we must have $2 \subgradient^T \delta \leq \norm{\nabla F(\geometricmedian^\dagger_{0:\VOTER} + \delta)}{2}^2 - 1 \leq 0$.
  In other words, we must have $\setAchieve \subset \mathcal H$,
  where $\mathcal H \triangleq \set{\parz \in \setR^d \st \subgradient^T \parz \leq \subgradient^T \geometricmedian^\dagger_{0:\VOTER}}$ is the half space of the hyperplane 
  that goes through the truthful geometric median $\geometricmedian^\dagger_{0:\VOTER}$,
  and whose normal direction is $\subgradient$.
  
  Using Lemma \ref{lemma:achievable_set} and the inclusion $\setAchieve \subset \mathcal H$ then yields
  \begin{align}
    \inf_{\strategicvote{0} \in \setR^d} \norm{\GeometricMedian(\strategicvote{0}, \paramfamily_\VOTER) - \targetvector}{2}
    &= \inf_{\parz \in \setAchieve} \norm{\parz - \targetvector}{2}
    \geq \inf_{\parz \in \mathcal H} \norm{\parz - \targetvector}{2}.
  \end{align}
  Yet the minimal distance between a point $\targetvector$ and a half space $\mathcal H$ is reached by the orthogonal projection $\pi_0$ of $\targetvector$ onto $\mathcal H$, as depicted in Figure \ref{fig:projection}.
  We then have
  \begin{align}
    \norm{\targetvector - \pi_0}{2}
    &= \left( \gamma \nabla \Loss_{1:\VOTER} (\geometricmedian_{0:\VOTER}^\dagger) \right)^T \frac{\subgradient}{\norm{\subgradient}{2}}
    = \frac{\gamma \nabla^2 \Loss_{1:\VOTER}(\geometricmedian_{0:\VOTER}^\dagger) \left[ \nabla \Loss_{1:\VOTER}(\geometricmedian_{0:\VOTER}^\dagger) \otimes \nabla \Loss_{1:\VOTER}(\geometricmedian_{0:\VOTER}^\dagger) \right]}
    { \norm{\nabla^2 \Loss_{1:\VOTER}(\geometricmedian_{0:\VOTER}^\dagger) \cdot \nabla \Loss_{1:\VOTER}(\geometricmedian_{0:\VOTER}^\dagger)}{2}} \\
    &\geq \frac{\gamma \norm{\nabla \Loss_{1:\VOTER}(\geometricmedian_{0:\VOTER}^\dagger)}{2}}{1 + \skewness(\nabla^2 \Loss_{1:\VOTER} (\geometricmedian_{0:\VOTER}^\dagger))} 
    \geq \frac{\norm{\gamma \nabla \Loss_{1:\VOTER}(\geometricmedian_{0:\VOTER}^\dagger)}{2}}{1 + \strategyproofbound},
  \end{align}
  using our fourth assumption.
  Yet note that $\norm{\geometricmedian^\dagger_{0:\VOTER} - \targetvector}{2} = \norm{\gamma \nabla \Loss_{1:\VOTER} (\geometricmedian^\dagger_{0:\VOTER})}{2}$.
  We thus obtain $\norm{\geometricmedian^\dagger_{0:\VOTER} - \targetvector}{2} \leq (1+\strategyproofbound) \norm{\targetvector - \pi_0}{2} \leq (1+\strategyproofbound) \inf_{\strategicvote{0} \in \setR^d} \norm{\GeometricMedian(\strategicvote{0}, \paramfamily_\VOTER) - \targetvector}{2}$, which is the lemma.
\end{proof}

\subsubsection{Finite-voter Guarantees}
\label{sec:finite}

We show here that for a large enough number of voters and with high probability, finite-voter approximations are well-behaved and, in some critical regards, approximate correctly the infinite case. 
The global idea of the proof is illustrated in Figure \ref{fig:proof_strategy}.
In particular, we aim to show that, when $\VOTER$ is large, the achievable set $\setAchieve$ is approximately an ellipsoid within a region where $\Loss_{1:\VOTER}$ is infinitely differentiable.
In particular, we show that, with arbitrarily high probability under the drawing of other voters' vectors, for $\VOTER$ large enough, the conditions of Lemma \ref{lemma:convex_projection_strategyproof} are satisfied for $\beta \triangleq \Theta(\VOTER^{-1})$ and $\strategyproofbound \triangleq \skewness(\hessian_\infty) + \varepsilon$.

\paragraph{\textbf{An Infinitely-differentiable Region.}}
Now, in order to apply Lemma \ref{lemma:convex_projection_strategyproof}, we need to identify a region near $\geometricmedian_\infty$ where, with high probability, the loss function $\Loss_{1:\VOTER}$ is infinitely differentiable.
To do this, we rely on the observation that, in high dimensions, random points are very distant from one another.
More precisely, the probability of randomly drawing a point $\varepsilon$-close to the geometric median $\geometricmedian_\infty$ is approximately proportional to $\varepsilon^d$, which is exponentially small in $d$.
This allows us to prove that, with high probability, none of the first $\VOTER$ voters will be $\VOTER^{-\ra}$-close to the geometric median, where $\ra > 1/d$ is a positive constant.

\begin{lemma}
\label{lemma:no_voter}
  Under Assumption \ref{ass:pdf}, for any $\delta_1 >0$, and $\ra> 1/d$, there exists $\VOTER_1(\delta_1) \in \setN$ such that, for $\VOTER \geq \VOTER_1(\delta_1)$, with probability at least $1-\delta_1$, 
  we have $\norm{\paramsub{\voter} - \geometricmedian_\infty}{2} \geq \VOTER^{-\ra}$ for all voters $\voter \in [\VOTER]$.
  In particular, in such a case, $\Loss_{1:\VOTER}$ is then infinitely differentiable in $\ball(\geometricmedian_\infty, \VOTER^{-\ra})$.
\end{lemma}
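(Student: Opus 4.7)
The plan is to control the probability that a single voter lands near $\geometricmedian_\infty$ via a volume bound on the ball, then apply a union bound over the $\VOTER$ voters, and finally appeal to smoothness of $\ell_2$ away from the origin.

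First I would use Proposition~\ref{porp:positive difinite} to note that $\geometricmedian_\infty \in \PARAM$, and $\PARAM$ is open. Hence there exists $\eta > 0$ such that $\closedball(\geometricmedian_\infty, \eta) \subset \PARAM$. By continuity of $p$ on the compact set $\closedball(\geometricmedian_\infty, \eta)$, there is a constant $p_{\max} < \infty$ bounding $p$ on this set. For $\VOTER$ large enough that $\VOTER^{-\ra} \leq \eta$, a single voter's preferred vector satisfies
\begin{equation}
\probability{\norm{\paramsub{\voter} - \geometricmedian_\infty}{2} < \VOTER^{-\ra}} \;=\; \int_{\ball(\geometricmedian_\infty, \VOTER^{-\ra})} p(\param)\, d\param \;\leq\; p_{\max} \, V_d \, \VOTER^{-\ra d},
\end{equation}
where $V_d$ is the volume of the unit Euclidean ball in $\setR^d$.

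Next I would apply the union bound over the $\VOTER$ i.i.d.\ voters:
\begin{equation}
\probability{\exists \voter \in [\VOTER] \mathsep \norm{\paramsub{\voter} - \geometricmedian_\infty}{2} < \VOTER^{-\ra}} \;\leq\; p_{\max} \, V_d \, \VOTER^{1 - \ra d}.
\end{equation}
Since $\ra > 1/d$, we have $1 - \ra d < 0$, so the right-hand side tends to $0$ as $\VOTER \to \infty$. Consequently, there exists $\VOTER_1(\delta_1)$ large enough that this bound is at most $\delta_1$ for all $\VOTER \geq \VOTER_1(\delta_1)$, which establishes the probabilistic statement.

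For the last sentence of the lemma, observe that $\varz \mapsto \norm{\varz - \paramsub{\voter}}{2}$ is infinitely differentiable on $\setR^d \setminus \{\paramsub{\voter}\}$ (as already noted in Proposition~\ref{prop:uniqueness}). On the event that no $\paramsub{\voter}$ lies in $\ball(\geometricmedian_\infty, \VOTER^{-\ra})$, every summand in $\Loss_{1:\VOTER} = \frac{1}{\VOTER} \sum_{\voter \in [\VOTER]} \norm{\varz - \paramsub{\voter}}{2}$ is smooth on this ball, and so is their finite average. The only delicate step is the initial localization of $p_{\max}$, which hinges on $\geometricmedian_\infty \in \PARAM$ (openness of $\PARAM$) together with continuity of $p$—both already granted by Assumption~\ref{ass:pdf} and Proposition~\ref{porp:positive difinite}—so no substantial obstacle arises here.
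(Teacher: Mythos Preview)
Your proposal is correct and follows essentially the same approach as the paper: bound the density locally by continuity, use the volume of a small ball to control the single-voter probability, apply a union bound over the $\VOTER$ voters, and invoke $\ra d > 1$ so that $\VOTER^{1-\ra d}\to 0$. The only cosmetic differences are that the paper uses the explicit constant $8\pi^2/15$ for the ball volume and writes the threshold $\VOTER_1(\delta_1)$ explicitly, whereas you leave these implicit.
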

\begin{proof}
  Denote $p_\infty \triangleq 1+p(\geometricmedian_\infty)$ the probability density at $\geometricmedian_\infty$.
  Since $p$ is continuous, we know that there exists $\varepsilon_0 >0$ such that $p(\parz) \leq p_\infty$ for all $\parz \in \ball(\geometricmedian_\infty, \varepsilon_0)$.
  Thus, for any $0 < \varepsilon \leq \varepsilon_0$, we know that $\probability{\param \in \ball(\geometricmedian_\infty, \varepsilon)} \leq  volume_d(\varepsilon) p_\infty$,
  where $volume_d(\varepsilon)$ is the volume of Euclidean $d$-dimensional ball with radius $\varepsilon$. 
  Yet this volume is known to be upper-bounded by ${8\pi^2\varepsilon^d}/15$ \citep{David89}.
  Thus for $\VOTER \geq \varepsilon_0^{-1/\ra}$ (and thus $\VOTER^{-\ra} \leq \varepsilon_0$), we have $\probability{\param \in \ball(\geometricmedian_\infty, \VOTER^{-\ra})} \leq \frac{8\pi^2}{15} p_\infty \VOTER^{-\ra d}$.
  Now note that
  \begin{align}
    \probability{\forall \voter \in [\VOTER] \mathsep \paramsub{\voter} \notin \ball(\geometricmedian_\infty, \VOTER^{-\ra})}
    &= 1- \probability{\exists \voter \in [\VOTER] \mathsep \paramsub{\voter} \in \ball(\geometricmedian_\infty,  \VOTER^{-\ra})} \\
    &\geq 1- \sum_{\voter \in \VOTER} \probability{\paramsub{\voter} \in \ball(\geometricmedian_\infty,  \VOTER^{-\ra})}
    \geq 1 - \frac{8\pi^2}{15} p_\infty \VOTER^{1-\ra d}.
  \end{align}
  Now recall that $\ra > \frac{1}{d}$. We thus have $\frac{8\pi^2}{15} p_\infty \VOTER^{1-\ra d} \rightarrow 0$ as $\VOTER \rightarrow \infty$.
  But now taking $\VOTER \geq \VOTER_1(\delta_1) \triangleq \max\set{\varepsilon_0^{-1/\ra}, (8 \pi^2 p_\infty/15\delta_1)^{1/(\ra d-1)} }$, we see that, with probability at least $1-\delta_1$, no voter $\voter \in [\VOTER]$ is $\VOTER^{-\ra}$-close to $\geometricmedian_\infty$.
  Given the absence of singularity in $\ball(\geometricmedian_\infty, \VOTER^{-\ra})$ in such a case, $\Loss_{1:\VOTER}$ is infinitely differentiable in this region.
\end{proof}

\paragraph{\textbf{Approximation of the Infinite Geometric Median.}}
The following lemma shows that as $\VOTER$ grows, $\geometricmedian_{1:\VOTER}$ gets closer to $\geometricmedian_\infty$ with high probability.

\begin{lemma}
\label{lemma:close_median}
  Under Assumption \ref{ass:pdf}, for any $\delta_2 >0$, and $0 < \rc < 1/2$, there exists $\VOTER_2(\delta_2) \in \setN$ such that, 
  for all $\VOTER \geq \VOTER_2(\delta_2)$, with probability at least $1-\delta_2$, 
  we have $\norm{\geometricmedian_{1:\VOTER} - \geometricmedian_\infty}{2} \leq \VOTER^{- \rc}$.
\end{lemma}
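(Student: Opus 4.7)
My plan is to run a classical M-estimator convergence argument: couple local strong convexity of the population loss $\Loss_\infty$ at $\geometricmedian_\infty$ with exponential concentration of the empirical loss $\Loss_{1:\VOTER}$ around it, then invoke convexity of $\Loss_{1:\VOTER}$ to confine its unique minimizer $\geometricmedian_{1:\VOTER}$ inside a shrinking ball around $\geometricmedian_\infty$.

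The first step establishes local quadratic growth of $\Loss_\infty$. Proposition~\ref{porp:positive difinite} gives $\hessian_\infty = \nabla^2 \Loss_\infty(\geometricmedian_\infty) \succ 0$ together with continuity of $\nabla^2 \Loss_\infty$ on $\PARAM$, so there exist constants $\mu > 0$ and $r_0 > 0$ with $\nabla^2 \Loss_\infty(\varz) \succeq \mu I$ throughout $\ball(\geometricmedian_\infty, r_0) \subset \PARAM$. Combined with the first-order condition $\nabla \Loss_\infty(\geometricmedian_\infty) = 0$, a Taylor expansion yields $\Loss_\infty(\geometricmedian_\infty + u) - \Loss_\infty(\geometricmedian_\infty) \geq (\mu/2) \norm{u}{2}^2$ for all $\norm{u}{2} \leq r_0$.

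The second step controls the centered fluctuation
\begin{equation*}
R_\VOTER(u) \triangleq \left[ \Loss_{1:\VOTER}(\geometricmedian_\infty + u) - \Loss_{1:\VOTER}(\geometricmedian_\infty) \right] - \left[ \Loss_\infty(\geometricmedian_\infty + u) - \Loss_\infty(\geometricmedian_\infty) \right].
\end{equation*}
Each summand $f_\voter(u) \triangleq \norm{\geometricmedian_\infty + u - \paramsub{\voter}}{2} - \norm{\geometricmedian_\infty - \paramsub{\voter}}{2}$ is bounded by $\norm{u}{2}$ via the reverse triangle inequality, so $R_\VOTER(u)$ is an average of $\VOTER$ i.i.d. mean-zero terms lying in $[-2 \norm{u}{2}, 2 \norm{u}{2}]$. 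Hoeffding's inequality then gives $\probability{\absv{R_\VOTER(u)} \geq t} \leq 2 \exp\bigl( -\VOTER t^2 / (8 \norm{u}{2}^2) \bigr)$, and $u \mapsto R_\VOTER(u)$ is clearly $2$-Lipschitz.

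The third step combines these ingredients on a sphere of radius $r \triangleq \VOTER^{-\rc'}$ for a chosen $\rc' \in (\rc, 1/2)$. Local strong convexity gives $\Loss_\infty(\geometricmedian_\infty + u) - \Loss_\infty(\geometricmedian_\infty) \geq (\mu/2) r^2$ everywhere on this sphere. Covering the sphere by an $\epsilon$-net of cardinality $\mathcal{O}(r^{-(d-1)})$ with $\epsilon = \mu r^2/16$, applying Hoeffding at $t = \mu r^2/8$ at each net point, and exploiting the $2$-Lipschitz property to pass to the full sphere, a union bound yields $\sup_{\norm{u}{2} = r} \absv{R_\VOTER(u)} < \mu r^2/4$ with probability at least $1 - \mathcal{O}\bigl( r^{-(d-1)} \exp(-c \VOTER r^2) \bigr)$, which tends to $1$ since $\VOTER r^2 = \VOTER^{1-2\rc'} \to \infty$. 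On this event, $\Loss_{1:\VOTER}(\geometricmedian_\infty + u) > \Loss_{1:\VOTER}(\geometricmedian_\infty)$ holds uniformly on the sphere, and convexity of $\Loss_{1:\VOTER}$ (Proposition~\ref{prop:uniqueness}) forces the minimizer $\geometricmedian_{1:\VOTER}$ into the interior of $\ball(\geometricmedian_\infty, r) \subset \ball(\geometricmedian_\infty, \VOTER^{-\rc})$ for $\VOTER$ large enough. The main obstacle is achieving uniform control on $R_\VOTER$ with a rate accommodating any $\rc < 1/2$: this is why exponential (Hoeffding) rather than polynomial (Chebyshev) concentration is essential, since only exponential tails can absorb the polynomial $r^{-(d-1)}$ covering cost while still beating the $\mu r^2$ quadratic growth.
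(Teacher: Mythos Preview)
Your proof is correct and takes a genuinely different route from the paper's.

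The paper works at the \emph{gradient} level: it evaluates $\nabla\Loss_{1:\VOTER}$ only at the $2d$ points $\geometricmedian_\infty \pm \varepsilon\,\unitvector{i}$ (with $\unitvector{i}$ the eigenvectors of $\hessian_\infty$), applies a coordinate-wise Chernoff bound to show each such gradient is close to $\lambda_i(\pm\varepsilon\,\unitvector{i})$, and then plugs this into the first-order inequality $(\geometricmedian_{1:\VOTER}-z)^T\nabla\Loss_{1:\VOTER}(z)\leq 0$ to bound $\norm{\geometricmedian_{1:\VOTER}-\geometricmedian_\infty}{\infty}$. Only $2d^2$ scalar concentration events are needed, so no covering argument appears.

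You instead work at the \emph{function-value} level with the classical M-estimator ``sphere argument'': local quadratic growth of $\Loss_\infty$, Hoeffding on the centered increment $R_\VOTER(u)$, an $\epsilon$-net on the sphere of radius $\VOTER^{-\rc'}$, and the $2$-Lipschitz property to pass from net to full sphere. Your union bound incurs a polynomial $\VOTER^{\rc'(d-1)}$ covering cost, which the exponential tail $\exp(-c\,\VOTER^{1-2\rc'})$ easily absorbs---exactly the point you emphasize. What this buys you is a more modular and textbook-transparent argument (it would transfer verbatim to any $1$-Lipschitz loss with locally strongly convex population risk), whereas the paper's eigenvector trick is more economical in the number of concentration events but more tailored to this specific loss. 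Both are equally valid here; your use of the auxiliary exponent $\rc'\in(\rc,1/2)$ is harmless but in fact unnecessary, since $\VOTER^{1-2\rc}\to\infty$ already suffices with $\rc'=\rc$.
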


\begin{proof}
  Since $\nabla \Loss_\infty (\geometricmedian_\infty) = 0$ and $\Loss_\infty$ is three times differentiable, using Taylor's theorem around $\geometricmedian_\infty$, for any $\parz \in \ball(0,1)$, we have $\nabla \Loss_\infty (\geometricmedian_\infty + \parz) = \hessian_\infty \parz + O(\norm{\parz}{2}^2)$.
  In particular, there exist a constant $A$
  such that for any $\parz\in \ball(0,1)$, 
  we have $\norm{\nabla \Loss_\infty (\geometricmedian_\infty + \parz) - \hessian_\infty \parz}{2} \leq A\norm{\parz}{2}^2$.

  Now consider an orthonormal eigenvector basis $\unitvector{1}, \ldots, \unitvector{d}$ of $H_\infty$, with respective eigenvalues $\eigenvalue_1, \ldots, \eigenvalue_d$. Note that since $H_\infty$ is symmetric, we know that such a basis exists. We then define
  \begin{equation}
    \lambda_{min} \triangleq \inf_{\parz \in \ball(\geometricmedian_\infty,1)} \min Sp(\nabla^2 \Loss_\infty(\parz)),
  \end{equation}
  the minimum eigenvalue of the Hessian matrix $\nabla^2 \Loss_\infty (\parz)$ over the closed ball $\ball(\geometricmedian_\infty,1)$. 
  Note that using the same argument as Proposition \ref{porp:positive difinite}, we can say $\nabla^2 \Loss_\infty (\parz)$ is continuous and positive definite for all $\parz \in \ball(\geometricmedian_\infty,1)$, therefore, $\lambda_{min}$ is strictly positive.
  Now for any $i \in [d]$, $j \in \set{-1,1}$, and $0<\varepsilon < 1 $, we know that
  \begin{equation}
      \norm{\nabla \Loss_\infty (\geometricmedian_\infty + j\varepsilon \unitvector{i}) - \lambda_i j \varepsilon \unitvector{i}}{2} = \norm{\nabla \Loss_\infty (\geometricmedian_\infty + j\varepsilon \unitvector{i}) - \hessian_\infty j\varepsilon \unitvector{i}}{2}
      \leq A \norm{j\varepsilon \unitvector{i}}{2}^2
      = A \varepsilon^2.
  \end{equation}
  Now define $\eta \triangleq \min \set{\frac{1-2\rc}{4\rc}, 1}$.
  Since $0 < \rc < 1/2$, we clearly have $\eta > 0$.
  Moreover, for $\varepsilon <1$, since $2 \geq 1+\eta$, we have $\varepsilon^2 \leq \varepsilon^{1+\eta}$. 
  Therefore, $\norm{\nabla \Loss_\infty (\geometricmedian_\infty + j\varepsilon \unitvector{i}) - \hessian_\infty j\varepsilon \unitvector{i}}{2} \leq A \varepsilon^{1+\eta}$.
  For any voter $\voter \in [\VOTER]$, we then define the random unit vector
  \begin{equation}
    \randomvar_{ij \voter} \triangleq \frac{\paramsub{\voter} - \geometricmedian_\infty - j\varepsilon \unitvector{i}}{\norm{\paramsub{\voter} - \geometricmedian_\infty - j\varepsilon \unitvector{i}}{2}}.
  \end{equation}
  Note that, since $\paramdistribution$ is absolutely continuous with respect to the Lebesgue measure (Assumption \ref{ass:pdf}), all vectors $\randomvar_{ij \voter}$'s are well-defined with probability 1.
  By the definition of $\Loss_{1:V}$ and $\Loss_\infty$, we then have
  \begin{equation}
       \nabla \Loss_{1:V} (\geometricmedian_\infty + j\varepsilon \unitvector{i}) = \frac{1}{V}\sum_{\voter = 1}^{V} \randomvar_{ij \voter} 
       \quad\text{and}\quad \nabla \Loss_\infty(\geometricmedian_\infty + j\varepsilon \unitvector{i}) = \expect_{\paramsub{\voter}}{[\randomvar_{ij \voter}]}.
  \end{equation}
  Thus, for all $k \in [d]$, $\oneDelement{\nabla \Loss_{1:V} (\geometricmedian_\infty + j\varepsilon \unitvector{i})}{k}$ is just the average of $\VOTER$ i.i.d. random variables within the range $[-1,1]$, 
  and whose expectation is equal to $\oneDelement{\nabla \Loss_\infty(\geometricmedian_\infty + j\varepsilon \unitvector{i})}{k}$. 
  Therefore, by Chernoff bound, defining the event $\event_{ijk} (t) \triangleq \set{\absv{\oneDelement{\nabla \Loss_{1:V} (\geometricmedian_\infty + j\varepsilon \unitvector{i})}{k}- \oneDelement{\nabla \Loss_\infty(\geometricmedian_\infty + j\varepsilon \unitvector{i})}{k}}\leq t}$ for every $t>0$, 
  we obtain $\probability{\event_{ijk} (t)} \geq 1- 2 \exp{(-t^2\VOTER/2)}$.
  Defining $\error_{ij} = \nabla \Loss_{1:V} (\geometricmedian_\infty + j\varepsilon \unitvector{i}) - \lambda_ij\varepsilon \unitvector{i}$, 
  under event $\event_{ijk} \left( A \varepsilon^{1+\eta} \right)$, 
  by triangle inequality, we obtain 
  \begin{align}
      \absv{\oneDelement{\error_{ij}}{k}} 
      &\leq \absv{\nabla \Loss_{1:V} (\geometricmedian_\infty + j\varepsilon \unitvector{i})[k] - \nabla \Loss_{\infty} (\geometricmedian_\infty + j\varepsilon \unitvector{i})[k]}
      + \absv{\nabla \Loss_{\infty} (\geometricmedian_\infty + j\varepsilon \unitvector{i})[k] - \lambda_ij\varepsilon \unitvector{i}[k]} \\
      &\leq A \varepsilon^{1+\eta} 
      + \norm{\nabla \Loss_{\infty} (\geometricmedian_\infty + j\varepsilon \unitvector{i}) - \lambda_ij\varepsilon \unitvector{i}}{2} 
      \leq 2 A \varepsilon^{1+\eta}.
  \end{align}
  Denoting $\event^*$ the event where such inequalities hold for all $i,k \in [d]$ and $j \in \set{-1,1}$, and using union bound, we have
  \begin{align}
    \probability{\event^*}
    &\geq \probability{ \bigcap_{i \in [d]} \bigcap_{j \in \set{-1,1}} \bigcap_{k \in [d]} \event_{ijk} \left( A \varepsilon^2 \right)} 
    = \probability{ \neg \bigcup_{i \in [d]} \bigcup_{j \in \set{-1,1}} \bigcup_{k \in [d]} \neg \event_{ijk} \left( A \varepsilon^2 \right)}  \\
    &\geq 1 - \sum_{i \in [d]} \sum_{j \in \set{-1,1}} \sum_{k \in [d]} \probability{\neg \event_{ijk} \left( A \varepsilon^{1+\eta} \right) }   
    \geq 1-4d^2\exp{ \left( - A^2 \varepsilon^{2+2\eta} \VOTER/2 \right)}.
  \end{align}
  Now note that by Proposition \ref{prop:uniqueness}, we know that $\Loss_{1:\VOTER}$ is convex. Therefore, for any $i \in [d]$ and $j\in\set{-1,1}$, using the fact that $\geometricmedian_{1:\VOTER}$ minimizes $\Loss_{1:\VOTER}$, we have
  \begin{equation}
       (\geometricmedian_{1:V}-\geometricmedian_\infty - j\varepsilon \unitvector{i})^T\nabla \Loss_{1:V} (\geometricmedian_\infty + j\varepsilon \unitvector{i}) = (\geometricmedian_{1:V}-\geometricmedian_\infty - j\varepsilon \unitvector{i})^T(\lambda_i j \varepsilon \unitvector{i}+\error_{ij})  \leq 0.
  \end{equation}
  Rearranging the terms and noting that $\lambda_i>0$ then yields
  \begin{equation}
      (\geometricmedian_{1:V}-\geometricmedian_\infty)^T \left(j\unitvector{i}+\frac{\error_{ij}}{\varepsilon \lambda_i}\right) 
      \leq (j\varepsilon\unitvector{i})^T (j\unitvector{i}+\frac{\error_{ij}}{\varepsilon \lambda_i})
      = \varepsilon + \frac{j}{\lambda_i} \unitvector{i}^T\error_{ij}
      = \varepsilon + \frac{j}{\lambda_i} \oneDelement{\error_{ij}}{i}.
  \end{equation}
  Now define $\varepsilon_0 \triangleq \left(\lambda_{min} / 4dA\right)^{1/\eta}$.
  Under $\event^*$, for $\varepsilon \leq \varepsilon_0$, this then implies 
  $\norm{\error_{ij}}{\infty} \leq 2A \varepsilon^{1+\eta} \leq 2A \varepsilon \varepsilon_0^\eta = \frac{\varepsilon \lambda_{min}}{2d} \leq \frac{\varepsilon \lambda_{i}}{2d}$,
  For every $i \in [d]$ and $j\in\set{-1,1}$, we then have
  \begin{equation}
  \label{equ:condition}
      (\geometricmedian_{1:V}-\geometricmedian_\infty)^T \left( j\unitvector{i}+\frac{\error_{ij}}{\varepsilon \lambda_i} \right) 
      \leq \varepsilon + \frac{1}{\lambda_i} \norm{\error_{ij}}{\infty}
      \leq \varepsilon \left( 1 + \frac{1}{2d} \right)
      \leq \frac{3\varepsilon}{2}.
  \end{equation}
  Now denote $C \triangleq \norm{\geometricmedian_{1:V}-\geometricmedian_\infty}{\infty}$. 
  Thus, there exist $i \in [d]$ and $j \in \set{-1,1}$ such that $(\geometricmedian_{1:V}-\geometricmedian_\infty)[i] = jC$.
  We then obtain the lower bound
  \begin{align}
     (\geometricmedian_{1:V}-\geometricmedian_\infty)^T & \left( j\unitvector{i}+\frac{\error_{ij}}{\varepsilon \lambda_i} \right) 
     = C + \frac{(\geometricmedian_{1:V}-\geometricmedian_\infty)^T \error_{ij}}{\varepsilon \lambda_i}
     \geq C - \frac{ \norm{\geometricmedian_{1:V}-\geometricmedian_\infty}{2} \norm{\error_{ij}}{2}}{\varepsilon \lambda_i} \\
     \label{eq:lower_bound_close_median_2}
     &\geq C - \frac{ d \norm{\geometricmedian_{1:V}-\geometricmedian_\infty}{\infty} \norm{\error_{ij}}{\infty}}{\varepsilon \lambda_i} 
     \geq C - \frac{C}{2} 
     = \frac{\norm{\geometricmedian_{1:V}-\geometricmedian_\infty}{\infty}}{2},
  \end{align}
  where we used $\norm{\parx}{2} = \sqrt{\sum \parx[k]^2} \leq \sqrt{d \norm{\parx}{\infty}^2} = \sqrt{d} \norm{\parx}{\infty}$ 
  and the fact that $\norm{\error_{ij}}{\infty} \leq \frac{\varepsilon \lambda_{i}}{2d}$. 
  Combining Equations (\ref{equ:condition}) and (\ref{eq:lower_bound_close_median_2}) then yields, under $\event^*$, 
  the bound $\norm{\geometricmedian_{1:V}-\geometricmedian_\infty}{2} \leq \sqrt{d} \norm{\geometricmedian_{1:V}-\geometricmedian_\infty}{\infty} \leq 3 \varepsilon \sqrt{d}$.

  Now note that if $\VOTER \geq (3\sqrt{d}\varepsilon_0)^{-1/\rc}$, then we have $\varepsilon_\VOTER \triangleq \VOTER^{-\rc} / 3 \sqrt{d} \leq \varepsilon_0$.
  Thus, under $\event^*$ defined with $\varepsilon_\VOTER$, 
  the previous argument applies, which implies $\norm{\geometricmedian_{1:V}-\geometricmedian_\infty}{2} \leq \VOTER^{-\rc}$, as required by the lemma.
  
  Now take $\VOTER \geq V_2(\delta_2) \triangleq \max \left\{
  \left( \frac{ 2 (9d)^{1+\eta} }{A^2} \ln{ \frac{4d^2}{\delta_2} }\right)^{\frac{1}{1-2\rc -2\eta \rc}},
  (3\sqrt{d}\varepsilon_0)^{-1/\rc} \right\}$.
  By definition of $\eta$, we have $\eta \leq \frac{1-2\rc}{4\rc}$. 
  As a result, using also the assumption $\rc < 1/2$,
  we then have $1-2\rc-2\eta \rc \geq \frac{1-2\rc}{2} > 0$.
  It then follows that $\VOTER^{1-2\rc-2\eta} \geq \frac{ 2 (9d)^{1+\eta} }{A^2} \ln{ \frac{4d^2}{\delta_2} }$.
  We then have 
  \begin{align}
    \probability{\event^*} 
    &\geq 1 - 4d^2\exp{ \left( -A^2 \varepsilon_\VOTER^{2+2\eta} \VOTER/2 \right)} 
    = 1 - 4d^2\exp{ \left( - \frac{A^2 \VOTER^{1-2\rc -2\eta \rc}}{ 2 (9d)^{1+\eta} } \right)} 
    \geq 1-\delta_2,
  \end{align}
  which is what was needed for the lemma.
\end{proof}

\paragraph{\textbf{Approximation of the Infinite Hessian Matrix.}}
To apply Lemma \ref{lemma:convex_projection_strategyproof}, we need to control the values of the Hessian matrix of $\Loss_{1:\VOTER}$.
In this section, we show that, similar to the finite-voter geometric median $\geometricmedian_{1:\VOTER}$, which is now known to be close to the infinite geometric median $\geometricmedian_\infty$, the Hessian matrix is close to the infinite Hessian matrix $\hessian_\infty$ at the infinite geometric median $\geometricmedian_\infty$.

\begin{lemma}
\label{lemma:close_hessian}
  Under Assumption \ref{ass:pdf}, 
  for $0 < 2\ra < \rb$, 
  for any $\varepsilon_3, \delta_3 >0$, 
  there exists $\VOTER_3(\varepsilon_3, \delta_3) \in \setN$ such that, 
  for all $\VOTER \geq \VOTER_3(\varepsilon_3, \delta_3)$, 
  with probability at least $1-\delta_3$, 
  there is no vote in the ball $\ball(\geometricmedian_{\infty}, \VOTER^{-\ra})$ and, 
  for all $\parz \in \ball(\geometricmedian_{\infty}, \VOTER^{-\rb})$, 
  we have $\norm{\nabla^2 \Loss_{1:\VOTER} (\parz) - \hessian_\infty}{\infty} \leq \varepsilon_3$.
\end{lemma}

Before proving Lemma \ref{lemma:close_hessian}, we first start with an observation about unit vectors.

\begin{lemma}
\label{lemma:unit_vector_bound}
  For any $0<\ra<\rb$, if $\norm{\parz}{2} \geq \VOTER^{-\ra}$ and $\norm{\rho}{2} \leq \VOTER^{-\rb}$, then for any $i\in[d]$, we have
  \begin{equation}
    \absv{\oneDelement{\unitvector{z}}{i}-\oneDelement{\unitvector{z+\rho}}{i}} = \mathcal{O}( \VOTER^{\ra-\rb})
  \end{equation}
\end{lemma}

\begin{proof}
  We have the inequalities
  \begin{align}
    \absv{\oneDelement{\unitvector{z}}{i}-\oneDelement{\unitvector{z+\rho}}{i}} 
    &= \absv{\frac{z[i]}{\norm{z}{2}} - \frac{(z+\rho)[i]}{\norm{z+\rho}{2}}} 
    = \absv{\frac{\norm{z+\rho}{2}\oneDelement{z}{i}-\norm{z}{2}\oneDelement{(z+\rho)}{i}}{\norm{z}{2}\norm{z+\rho}{2}}} \\
    &\leq \absv{\frac{(\norm{z+\rho}{2}-\norm{z}{2}) \absv{\oneDelement{z}{i}} -\norm{z}{2}\oneDelement{\rho}{i}}{\norm{z}{2} (\norm{z}{2}-\norm{\rho}{2})}} \\
    &\leq \absv{\frac{\norm{z+\rho}{2}-\norm{z}{2}}{\norm{z}{2}-\norm{\rho}{2}}} + \absv{\frac{\oneDelement{\rho}{i}}{\norm{z}{2}-\norm{\rho}{2}}} \\
    &\leq \frac{2\norm{\rho}{2}}{\norm{z}{2}-\norm{\rho}{2}},
  \end{align}
  where we used the fact that $\norm{z+\rho}{2}-\norm{z}{2}\leq\norm{\rho}{2}$. We then have
  \begin{equation}
    \absv{\oneDelement{\unitvector{z}}{i}-\oneDelement{\unitvector{z+\rho}}{i}} \leq \frac{2\VOTER^{-\rb}}{\VOTER^{-\ra}-\VOTER^{-\rb}}\leq \frac{2\VOTER^{\ra-\rb}}{1-\VOTER^{\ra-\rb}} = \mathcal{O} (\VOTER^{\ra-\rb}),
  \end{equation}
  which is the lemma.
\end{proof}

We now move on to the proof of Lemma \ref{lemma:close_hessian}.

\begin{proof}[Proof of Lemma \ref{lemma:close_hessian}]
  Applying Lemma \ref{lemma:no_voter} shows that for $\VOTER \geq \VOTER_1(\delta_3/2)$, under an event $\event_{no-voter}$ that holds with probability at least $1-\delta_3/2$, 
  the ball $\ball(\geometricmedian_{\infty}, \VOTER^{-\ra})$ contains no voters' preferred vectors.

  For any voter $\voter \in [\VOTER]$, and any $i,j \in [d]$, we define
  \begin{equation}
    \helement_{ij\voter} \triangleq \twoDelement{\nabla^2 \ell_2(\geometricmedian_\infty-\param_\voter)}{i}{j} 
    = \frac{\twoDelement{(I-\unitvector{\geometricmedian_\infty-\param_\voter} \unitvector{\geometricmedian_\infty-\param_\voter}^T)}{i}{j}}{\norm{{\geometricmedian_\infty-\param_\voter}}{2}}.
  \end{equation}
  Since $\paramdistribution$ is absolutely continuous with respect to the Lebesgue measure, 
  we know that $\helement_{ij\voter}$ is well-defined with probability 1.
  We then have
  \begin{equation}
    \twoDelement{\nabla^2 \Loss_{1:\VOTER}(\geometricmedian_\infty)}{i}{j} = \frac{1}{V}\sum_{i = 1}^{V} \helement_{ij \voter} \quad\text{and}\quad \twoDelement{\hessian_\infty}{i}{j} = \twoDelement{\nabla^2 \Loss_\infty (\geometricmedian_\infty)}{i}{j} = \expect_{\param_\voter}{[\helement_{ij\voter}]}.
  \end{equation}
  Moreover, we can upper-bound the variance of $\helement_{ij\voter}$ by
  \begin{equation}
    \Var[a_{ij \voter}] \leq \expect_{\param_\voter}[a_{ij \voter}^2]
    = \int_\PARAM \left( \frac{(I-\unitvector{\geometricmedian_\infty-\param} \unitvector{\geometricmedian_\infty-\param}^T)[i,j]}{\norm{{\geometricmedian_\infty-\param}}{2}}\right)^2 p(\param) d\param \leq \int_\PARAM  \frac{1}{\norm{{\geometricmedian_\infty-\param}}{2}^2} p(\param) d\param.
  \end{equation}
  By Lemma \ref{lemma: bounded integral}, we know that this integral is bounded, thus, we have $\Var[a_{ij \voter}] < \infty$. We then define the maximal variance $\sigma^2 \triangleq \max_{i,j} \Var[a_{ij \voter}]$ of the elements of the Hessian matrix.
  Since the voters' preferred vectors are assumed to be i.i.d, we then obtain
  \begin{equation}
    \Var\left[  \twoDelement{\nabla^2 \Loss_{1:\VOTER}(\geometricmedian_\infty)}{i}{j} \right] = \frac{1}{\VOTER^2}\sum_{\voter = 1}^\VOTER \Var[a_{ij \voter}] \leq  \frac{\sigma^2}{\VOTER}.
  \end{equation}
  Now applying Chebyshev's inequality on $\nabla^2 \Loss_{1:\VOTER}(\geometricmedian_\infty)[i,j]$ yields
  \begin{equation}
    \probability{\absv{\twoDelement{\nabla^2 \Loss_{1:\VOTER}(\geometricmedian_\infty)}{i}{j} - \twoDelement{\hessian_\infty}{i}{j}}\geq \varepsilon_3/2} \leq \frac{4\Var\left[  \twoDelement{\nabla^2 \Loss_{1:\VOTER}(\geometricmedian_\infty)}{i}{j} \right]}{\varepsilon_3^2} \leq \frac{4\sigma^2}{\VOTER \varepsilon_3^2}.
  \end{equation}
  Using a union bound, we then obtain
  \begin{equation}
  \label{equ:hessian_bound}
    \probability{\exists i, j \in [d], \absv{\twoDelement{\nabla^2 \Loss_{1:\VOTER}(\geometricmedian_\infty)}{i}{j} - \twoDelement{\hessian_\infty}{i}{j}}\geq {\varepsilon_3/2}} \leq \frac{4d^2 \sigma^2}{\VOTER \varepsilon_3^2}.
  \end{equation}
  Therefore, taking $\VOTER \geq \frac{8 d^2 \sigma^2}{\delta_3 \varepsilon_3^2}$, 
  the event $\event_{Hessian} \triangleq \set{\forall i,j \in [d] \mathsep \absv{\twoDelement{\nabla^2 \Loss_{1:\VOTER}(\geometricmedian_\infty)}{i}{j} - \twoDelement{\hessian_\infty}{i}{j}} \leq \varepsilon_3/2}$ 
  occurs with probability at least $1 - \delta_3/2$.
  Taking a union bound shows that, 
  for $\VOTER \geq \max \set{\VOTER_1(\delta_3/2), \frac{8 d^2 \sigma^2}{\delta_3 \varepsilon_3^2}}$, 
  the event $\event \triangleq \event_{no-vote} \cap \event_{Hessian}$ occurs with probability at least $1-\delta_3$.

  We now bound the difference between finite-voter Hessian matrices at $\geometricmedian_\infty$ and at a close point $\parz$, by
  \begin{align}
    \VOTER &\absv{\twoDelement{\nabla^2 \Loss_{1:\VOTER}(\parz)}{i}{j}-\twoDelement{\nabla^2 \Loss_{1:\VOTER}(\geometricmedian_\infty)}{i}{j}} 
    = \absv{ \sum_{\voter \in [\VOTER]} \frac{\twoDelement{(I-\unitvector{\parz-\param_\voter} \unitvector{\parz-\param_\voter}^T)}{i}{j}}{\norm{{\parz-\param_\voter}}{2}} 
    - \frac{\twoDelement{(I-\unitvector{\geometricmedian_\infty-\param_\voter} \unitvector{\geometricmedian_\infty-\param_\voter}^T)}{i}{j}}{\norm{{\geometricmedian_\infty-\param_\voter}}{2}}}\\
    &\leq \sum_{\voter \in [\VOTER]} \absv{ \frac{\twoDelement{(I-\unitvector{\parz-\param_\voter} \unitvector{\parz-\param_\voter}^T)}{i}{j}}{\norm{{\parz-\param_\voter}}{2}} 
    - \frac{\twoDelement{(I-\unitvector{\geometricmedian_\infty-\param_\voter} \unitvector{\geometricmedian_\infty-\param_\voter}^T)}{i}{j}}{\norm{{\geometricmedian_\infty-\param_\voter}}{2}}} \\
    &\leq \sum_{\voter\in[\VOTER]} \absv{ \frac
    {\twoDelement{I}{i}{j}(\norm{\geometricmedian_\infty-\param_\voter}{2}-\norm{{\parz-\param_\voter}}{2})}
    {\norm{{\parz-\param_\voter}}{2}\norm{{\geometricmedian_\infty-\param_\voter}}{2}}} 
    + \absv{ \frac{
    \oneDelement{\unitvector{\parz-\param_\voter}}{i}\oneDelement{\unitvector{\parz-\param_\voter}}{j} }{\norm{\geometricmedian_\infty-\param_\voter}{2}} 
    - \frac{ \oneDelement{\unitvector{\geometricmedian_\infty-\param_\voter}}{i} \oneDelement{\unitvector{\geometricmedian_\infty-\param_\voter}}{j} }
    {\norm{\parz-\param_\voter}{2}} }.
  \end{align}
  Note that, under $\event$, for all voters $\voter \in [\VOTER]$, 
  we have $\norm{\geometricmedian_\infty - \paramsub{\voter}}{2} \geq \VOTER^{-\ra}$.
  Now assume $\parz \in \ball(\geometricmedian_\infty, \VOTER^{-\rb})$, 
  Lemma \ref{lemma:unit_vector_bound} then applies with $\rho \triangleq \norm{\parz - \geometricmedian_\infty}{2} \leq \VOTER^{-\rb}$, 
  yielding $\absv{\oneDelement{\unitvector{\geometricmedian_\infty-\param_\voter}}{i} - \oneDelement{\unitvector{\parz-\param_\voter}}{i}} = \mathcal{O}(\VOTER^{\ra-\rb}) \leq 1$ for all $i \in [d]$. 
  Also, we have $\absv{\unitvector{}[i]} \leq \norm{\unitvector{}}{2} = 1$ for all unit vectors.
  Under $\event$, we then have 
  \begin{align}
    &\absv{ \frac{
    \oneDelement{\unitvector{\parz-\param_\voter}}{i}\oneDelement{\unitvector{\parz-\param_\voter}}{j} }{\norm{\geometricmedian_\infty-\param_\voter}{2}} 
    - \frac{ \oneDelement{\unitvector{\geometricmedian_\infty-\param_\voter}}{i} \oneDelement{\unitvector{\geometricmedian_\infty-\param_\voter}}{j} }
    {\norm{\parz-\param_\voter}{2}} } 
    \leq \absv{ \frac{
    \oneDelement{\unitvector{\parz-\param_\voter}}{i}\oneDelement{\unitvector{\parz-\param_\voter}}{j} }{\norm{\geometricmedian_\infty-\param_\voter}{2}} 
    - \frac{ \oneDelement{\unitvector{\parz-\param_\voter}}{i} \oneDelement{\unitvector{\geometricmedian_\infty-\param_\voter}}{j} }
    {\norm{\geometricmedian_\infty-\param_\voter}{2}} } \nonumber \\
    &\qquad \qquad + \absv{ \frac{ \oneDelement{\unitvector{\parz-\param_\voter}}{i} \oneDelement{\unitvector{\geometricmedian_\infty-\param_\voter}}{j} }
    {\norm{\geometricmedian_\infty-\param_\voter}{2}}
    - \frac{ \oneDelement{\unitvector{\parz-\param_\voter}}{i} \oneDelement{\unitvector{\parz-\param_\voter}}{j} }
    {\norm{\geometricmedian_\infty-\param_\voter}{2}} }
    + \absv{ \frac{ \oneDelement{\unitvector{\parz-\param_\voter}}{i} \oneDelement{\unitvector{\parz-\param_\voter}}{j} }
    {\norm{\geometricmedian_\infty-\param_\voter}{2}}
    - \frac{ \oneDelement{\unitvector{\geometricmedian_\infty-\param_\voter}}{i} \oneDelement{\unitvector{\geometricmedian_\infty-\param_\voter}}{j} }
    {\norm{\parz-\param_\voter}{2}} } \\
    &\leq \mathcal{O}(\VOTER^{2 \ra-\rb}) 
    + \frac{ 2 \absv{\norm{\geometricmedian_\infty - \param_\voter}{2} - \norm{\parz-\param_\voter}{2}} }{ \norm{\geometricmedian_\infty - \param_\voter}{2}  \norm{\parz-\param_\voter}{2} }
    \leq \mathcal{O}(\VOTER^{2 \ra-\rb}) + \frac{2 \VOTER^{-\rb}}{ \VOTER^{-\ra} (\VOTER^{-\ra} - \VOTER^{-\rb}) }
    = \mathcal{O}(\VOTER^{2 \ra-\rb}),
  \end{align}
  where, in the last line, we used the triangle inequality, 
  which implies $\absv{\norm{\geometricmedian_\infty - \param_\voter}{2} - \norm{\parz-\param_\voter}{2}} \leq \norm{\geometricmedian_\infty - \parz}{2} \leq \VOTER^{-\rb}$
  and $\norm{\parz - \paramsub{\voter}}{2} \geq \norm{\geometricmedian_\infty - \paramsub{\voter}}{2} - \norm{\geometricmedian_\infty - \parz}{2} \geq \VOTER^{-\ra} - \VOTER^{-\rb}$.
  Therefore, under $\event$, we have
  \begin{equation}
      \absv{\twoDelement{\nabla^2 \Loss_{1:\VOTER}(\parz)}{i}{j}-\twoDelement{\nabla^2 \Loss_{1:\VOTER}(\geometricmedian_\infty)}{i}{j}} 
      \leq \frac{1}{\VOTER}\sum_{\voter \in [\VOTER]} \mathcal{O}(\VOTER^{2 \ra-\rb}) 
      = \mathcal{O}(\VOTER^{2\ra-\rb}).
  \end{equation}
  We now use the fact that $2\ra < \rb$, which implies $\VOTER^{2\ra-\rb} \rightarrow 0$.
  Thus, for any $\varepsilon_3 > 0$, there exists a $\VOTER'_3(\varepsilon_3)$ such that, 
  for $\VOTER \geq \VOTER'_3(\varepsilon_3)$, we have
  \begin{equation}
    \absv{\twoDelement{\nabla^2 \Loss_{1:\VOTER}(\parz)}{i}{j}-\twoDelement{\nabla^2 \Loss_{1:\VOTER}(\geometricmedian_\infty)}{i}{j}} \leq \varepsilon_3/2.
  \end{equation}
  Choosing $\VOTER\geq V_3(\varepsilon_3,\delta_3) \triangleq \max \set{ \VOTER_1(\delta_3/2), \frac{8d^2\sigma^2}{\delta_3\varepsilon_3^2}, \VOTER'_3(\varepsilon_3)}$, 
  and combining the above guarantee with the guarantee about event $\event$ proved earlier, yields the result.
\end{proof}

\paragraph{\textbf{Third-derivative Approximation.}}
Finally, to apply Lemma \ref{lemma:convex_projection_strategyproof}, we also need to control the third-derivative of $\Loss_{1:\VOTER}$ near the geometric median $\geometricmedian_{1:\VOTER}$.
In fact, for our purposes, it will be sufficient to bound its norm by a possibly increasing function in $\VOTER$, as long as this function grows slower than $\VOTER$.

\begin{definition}
We denote $\threeDelement{\nabla^3 \Loss(\parz)}{i}{j}{k}$, the third derivative of $\Loss(\parz)$ with respect to $\oneDelement{\parz}{i}$, $\oneDelement{\parz}{j}$, and $\oneDelement{\parz}{k}$, 
and $\norm{\nabla^3 \Loss(\parz)}{\infty} = \max_{i,j,k} \absv{\threeDelement{\nabla^3 \Loss(\parz)}{i}{j}{k}}$.
\end{definition}

\begin{lemma}
\label{lemma:bounded_third_derivative}
  Under Assumption \ref{ass:pdf}, for $\ra,\rb>0$,
  there exists $K \in \setR$ such that, for any $\delta_4 >0$, 
  there exists $\VOTER_4(\delta_4) \in \setN$ such that, 
  for all $\VOTER \geq \VOTER_4(\delta_4)$, 
  with probability at least $1-\delta_4$, 
  no other voter's vector lies in the ball $\ball(\geometricmedian_{\infty}, \VOTER^{-\ra})$ and,
  for all $\parz \in \ball(\geometricmedian_{\infty}, \VOTER^{-\rb})$, we have
  $ \norm{\nabla^3 \Loss_{1:\VOTER} (\parz)}{\infty} = K (1 + \VOTER^{3 \ra - \rb})$.
\end{lemma}

\begin{proof}
  We use the same proof strategy as the previous Lemma. First note that using Lemma \ref{lemma: bounded integral} for $d\geq5$, the variance of each element of the third derivative of $\Loss_{1:\VOTER}$ is bounded, i.e.,
  \begin{equation}
      \forall (i,j,k) \in [d]^3, \Var \left[\threeDelement{\nabla^3 \Loss_{1:\VOTER}(\geometricmedian_\infty)}{i}{j}{k}\right] 
      = \mathcal O \left( \int_\PARAM  \frac{1}{\norm{{\geometricmedian_\infty-\param}}{2}^4} p(\param) d\param \right) 
      \triangleq \sigma^2 < \infty.
  \end{equation}
  Similarly to the previous proof, we define the events
  \begin{align}
    \event_{\nabla^3}(t) &\triangleq \set{\forall i,j,k \in [d], \absv{\threeDelement{\nabla^3\Loss_{1:\VOTER}(\geometricmedian_\infty)}{i}{j}{k}-\threeDelement{ \nabla^3\Loss_\infty(\geometricmedian_\infty)}{i}{j}{k}} 
    \leq t}, \\
    \event_{no-vote} &\triangleq \set{\forall \voter \in [\VOTER], \paramsub{\voter} \notin \ball(\geometricmedian_\infty, \VOTER^{-\rb})}
    \quad \text{and} \quad 
    \event \triangleq \event_{\nabla^3} (\VOTER^{r}) \cap \event_{no-vote},
  \end{align} 
  where $r \triangleq \max\set{0, 3\ra - \rb} \geq 0$.
  Using Chebyshev’s bound and union bound, we know that,
  $\probability{\event_{\nabla^3}(t)} \geq 1 - d^3 \sigma^2 / \VOTER t^2$,
  where the $\mathcal O$ hides a constant derived from the upper bound on the variance of $\threeDelement{\nabla^3 \Loss_{1:\VOTER}(\geometricmedian_\infty)}{i}{j}{k}$, and which depends only on $\paramdistribution$.
  Therefore, we have $\probability{ \event_{\nabla^3} (\VOTER^{r}) } \geq 1 - d^3 \sigma^2 \VOTER^{-(1+2r)}$.
  Now, assuming $\VOTER \geq \VOTER_4 (\delta_4) \triangleq \max \set {\left(2 d^3 \sigma^2 / \delta\right)^{\frac{1}{1+2r}}, \VOTER_1(\delta_4/2)}$,
  we know that the event $\event$ occurs with probability at least $1-\delta_4$.

  Now, we bound the deviation of $\nabla^3\Loss_{1:V}(\parz)$ from $\nabla^3\Loss_{1:V}(\geometricmedian_\infty)$ for any $\parz \in  \ball(\geometricmedian_{\infty}, \VOTER^{-\rb})$.
  It can be shown that $\threeDelement{\nabla^3 \ell_2(\parz)}{i}{j}{k} = \frac{\threeDelement{f(z)}{i}{j}{k}]}{\norm{\parz}{2}^2}$, where
  \begin{equation}
      \threeDelement{f(z)}{i}{j}{k} \triangleq
      \begin{cases}
        {3\oneDelement{\unitvector{\parz}}{i}^3-3\oneDelement{\unitvector{\parz}}{i}}, & \text{if}\ i = j = k \\
        {3\oneDelement{\unitvector{\parz}}{j}^2\oneDelement{\unitvector{\parz}}{i}-\oneDelement{\unitvector{\parz}}{i}}, & \text{if}\ i \neq j = k \\
        {3\oneDelement{\unitvector{\parz}}{i}\oneDelement{\unitvector{\parz}}{j}\oneDelement{\unitvector{\parz}}{k}}, & \text{if}\ i \neq j \neq k
      \end{cases}.
  \end{equation}
  Since $\absv{\unitvector{}[i]} \leq 1$ for all unit vectors $\unitvector{}$ and all coordinates $i \in [d]$, 
  we see that $\absv{\threeDelement{f(\geometricmedian_\infty-\param_\voter)}{i}{j}{k}} \leq 6$.
  Moreover, using Lemma \ref{lemma:unit_vector_bound}, for any $i,j,k \in [d]$, we have
  \begin{equation}
      \absv{\threeDelement{f(\parz)}{i}{j}{k]} - \threeDelement{f(\geometricmedian_\infty)}{i}{j}{k}} = \mathcal O (\VOTER^{\ra-\rb}).
  \end{equation}
  Recall also that, like in the previous proof, 
  under event $\event$, for all voters $\voter \in [\VOTER]$, 
  we have $\norm{\geometricmedian_\infty - \param_\voter}{2} \geq \VOTER^{-\ra}$, 
  $\norm{\parz-\param_\voter}{2} \geq \VOTER^{-\ra} - \VOTER^{-\rb} = \Omega(\VOTER^{-\ra})$
  and $\absv{\norm{\geometricmedian_\infty - \param_\voter}{2} - \norm{\parz-\param_\voter}{2}} \leq \norm{\geometricmedian_\infty - \parz}{2} \leq \VOTER^{-\rb}$ (using the triangle inequality).
  We then have
  \begin{align}
      &\absv{\threeDelement{\nabla^3\Loss_{1:V}(\parz)}{i}{j}{k} - \threeDelement{\nabla^3\Loss_{1:V}(\geometricmedian_\infty)}{i}{j}{k}}
      =\absv{ \sum_{\voter \in [\VOTER]} \frac{\threeDelement{f(\parz-\param_\voter)}{i}{j}{k}}{\norm{\parz-\param_\voter}{2}^2} 
      - \sum_{\voter \in [\VOTER]} \frac{\threeDelement{f(\geometricmedian_\infty-\param_\voter)}{i}{j}{k}}{\norm{\geometricmedian_\infty-\param_\voter}{2}^2}}  \\
      &\leq \frac{1}{\VOTER} \sum_{\voter \in [\VOTER]} \absv{ 
      \frac{\threeDelement{f(\parz-\param_\voter)}{i}{j}{k}}{\norm{\parz-\param_\voter}{2}^2} 
      - \frac{\threeDelement{f(\geometricmedian_\infty -\param_\voter)}{i}{j}{k}}{\norm{\parz-\param_\voter}{2}^2}}
      + \absv{ 
      \frac{\threeDelement{f(\geometricmedian_\infty-\param_\voter)}{i}{j}{k}}{\norm{\parz-\param_\voter}{2}^2} 
      - \frac{\threeDelement{f(\geometricmedian_\infty -\param_\voter)}{i}{j}{k}}{\norm{\geometricmedian_\infty-\param_\voter}{2}^2}} \\
      &\leq \frac{1}{\VOTER} \sum_{\voter \in [\VOTER]} \frac{\mathcal O (\VOTER^{\ra-\rb})}
      {\norm{\parz-\param_\voter}{2}^2} 
      + 6 \frac{\absv{\norm{\parz-\param_\voter}{2}^2 - \norm{\geometricmedian_\infty-\param_\voter}{2}^2}}{\norm{\parz-\param_\voter}{2}^2 \norm{\geometricmedian_\infty-\param_\voter}{2}^2} \\
      &\leq \frac{\mathcal O (\VOTER^{\ra-\rb})}{ \Omega \left( \VOTER^{-\ra} \right)^2 }
      + \frac{6}{\VOTER} \sum_{\voter \in [\VOTER]} \absv{\norm{\parz-\param_\voter}{2} - \norm{\geometricmedian_\infty-\param_\voter}{2}} 
      \frac{ \norm{\parz-\param_\voter}{2} + \norm{\geometricmedian_\infty-\param_\voter}{2} }
      { \norm{\parz-\param_\voter}{2}^2 \norm{\geometricmedian_\infty-\param_\voter}{2}^2 } \\
      &\leq \mathcal O (\VOTER^{3\ra-\rb})
      + \frac{6}{\VOTER} \sum_{\voter \in [\VOTER]} \VOTER^{-\rb} \left( \frac{1}{\norm{\parz-\param_\voter}{2} \norm{\geometricmedian_\infty-\param_\voter}{2}^2} 
      + \frac{1}{\norm{\parz-\param_\voter}{2}^2 \norm{\geometricmedian_\infty-\param_\voter}{2}} \right) \\
      &\leq \mathcal O (\VOTER^{3\ra-\rb})
      + \frac{12 \VOTER^{-\rb}}{ \Omega \left( \VOTER^{-\ra} \right)^3}
      \leq \mathcal O (\VOTER^{3\ra - \rb}).
  \end{align}
  Combining this with the guarantee of event $\event$ then yields
  \begin{align}
    \absv{ \threeDelement{\nabla^3\Loss_{1:V}(\parz)}{i}{j}{k} }
    &\leq \absv{\threeDelement{\nabla^3\Loss_{\infty}(\geometricmedian_\infty)}{i}{j}{k}}
    + \absv{\threeDelement{\nabla^3\Loss_{\infty}(\geometricmedian_\infty)}{i}{j}{k} - \threeDelement{\nabla^3\Loss_{1:V}(\geometricmedian_\infty)}{i}{j}{k}} \nonumber \\
    &\qquad \qquad \qquad \qquad \qquad + \absv{ \threeDelement{\nabla^3\Loss_{1:V}(\geometricmedian_\infty)}{i}{j}{k} - \threeDelement{\nabla^3\Loss_{1:V}(\parz)}{i}{j}{k}} \\
    &\leq \mathcal{O}(1) + \mathcal{O} (\VOTER^{r}) + \mathcal{O} (\VOTER^{3\ra - \rb}) 
    = \mathcal{O}(1) + \mathcal{O} (\VOTER^{3\ra - \rb}),
  \end{align}
  using the definition of $r$. 
  Given that $\probability{\event} \geq 1-\delta_4$, taking a bound $K$ that can replace the $\mathcal O$ yields the lemma.
\end{proof}

\subsubsection{Skewness is Continuous}
\label{sec:proof_skewness_continuous}
The last piece that is required for the proof of Theorem~\ref{th:asymptotic_strategyproofness} is the fact that the function $\skewness$ is continuous. 
To get there, we first prove a couple of lemmas about symmetric matrices.

\begin{definition}
  We denote $\symmetric{d}$ the set of symmetric $d \times d$ real matrices. We denote $\twoDelement{X}{i}{j}$ the element of the $i$-th row and $j$-th column of the matrix $X$, and $\norm{X}{\infty} \triangleq \max_{i,j} \absv{\twoDelement{X}{i}{j}}$.
\end{definition}

\begin{lemma}
\label{lemma:min_spectrum_addition}
  For any symmetric matrices $\hessian, \sdpmatrix \in \symmetric{d}$,
  $\absv{\min \spectrum (\hessian) - \min \spectrum(\sdpmatrix)} \leq d \norm{\hessian - \sdpmatrix}{\infty}$.
\end{lemma}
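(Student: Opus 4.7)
The plan is to combine two standard ingredients: (i) a perturbation inequality for the smallest eigenvalue of a symmetric matrix (essentially a one-line Weyl-type bound), and (ii) a comparison between the operator norm and the entry-wise max norm on $\symmetric{d}$. The second bound will contribute the factor $d$, while the first will reduce the problem to estimating an operator norm.

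For step (i), I would use the Courant-Fischer characterization $\min \spectrum(A) = \min_{\norm{\varx}{2} = 1} \varx^T A \varx$, valid for any $A \in \symmetric{d}$. Letting $\varx^\star$ be a unit eigenvector of $\sdpmatrix$ associated with $\min \spectrum(\sdpmatrix)$, write
\begin{equation}
    \min \spectrum(\hessian) \leq (\varx^\star)^T \hessian \varx^\star = (\varx^\star)^T \sdpmatrix \varx^\star + (\varx^\star)^T (\hessian - \sdpmatrix) \varx^\star = \min \spectrum(\sdpmatrix) + (\varx^\star)^T (\hessian - \sdpmatrix) \varx^\star.
\end{equation}
Swapping the roles of $\hessian$ and $\sdpmatrix$ gives the analogous inequality in the other direction, so
\begin{equation}
    \absv{\min \spectrum(\hessian) - \min \spectrum(\sdpmatrix)} \leq \sup_{\norm{\varx}{2} = 1} \absv{\varx^T (\hessian - \sdpmatrix) \varx}.
\end{equation}

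For step (ii), I would bound the right-hand side by $d \norm{\hessian - \sdpmatrix}{\infty}$. For any unit vector $\varx$, expanding coordinates yields
\begin{equation}
    \absv{\varx^T (\hessian - \sdpmatrix) \varx} = \absv{\sum_{i,j} \varx[i] \varx[j] \twoDelement{(\hessian - \sdpmatrix)}{i}{j}} \leq \norm{\hessian - \sdpmatrix}{\infty} \left( \sum_{i} \absv{\varx[i]} \right)^2,
\end{equation}
and Cauchy-Schwarz gives $\sum_i \absv{\varx[i]} \leq \sqrt{d} \, \norm{\varx}{2} = \sqrt{d}$, so the factor $(\sum_i \absv{\varx[i]})^2 \leq d$. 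Chaining this with the previous display yields the claimed inequality.

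There is no real obstacle here; the lemma is a routine combination of Courant-Fischer and a coordinate-wise Cauchy-Schwarz estimate. The only subtle point is that the factor $d$ arising from $\ell_1$-to-$\ell_2$ conversion is tight enough for the purposes of the paper (and in particular is used later to control eigenvalues of finite-voter Hessian matrices via entry-wise concentration). I would keep the presentation at the length above, since any sharper bound (for instance replacing $d$ by $\sqrt{d}$ via the Frobenius norm) is unnecessary for the subsequent arguments.
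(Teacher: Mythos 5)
Your proof is correct and takes essentially the same route as the paper's: both plug a unit eigenvector of one matrix realizing $\min \spectrum(\cdot)$ into the other matrix's quadratic form, symmetrize by swapping $\hessian$ and $\sdpmatrix$, and bound the perturbation term $\absv{\varx^T (\hessian - \sdpmatrix) \varx}$ entry-wise via $\norm{\varx}{1}^2 \leq d \norm{\varx}{2}^2$ to obtain the factor $d$. The only difference is cosmetic ordering (the paper proves the quadratic-form bound first and then invokes the eigenvector), so nothing further is needed.
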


\begin{proof}
  Consider a unit vector $\unitvector{}$. We have 
  \begin{align}
    &\unitvector{}^T \hessian \unitvector{} - \unitvector{}^T \sdpmatrix \unitvector{}
    = \unitvector{}^T (\hessian - \sdpmatrix) \unitvector{}
    = \sum_{i,j \in [d]} (\hessian[i,j] - \sdpmatrix[i,j]) \unitvector{}[i] \unitvector{}[j] \\ 
    &\qquad \leq \sum_{i,j \in [d]} \absv{\hessian[i,j] - \sdpmatrix[i,j]} \absv{\unitvector{}[i]} \absv{\unitvector{}[j]} 
    \leq \norm{\hessian - \sdpmatrix}{\infty} \left(\sum_{i \in [d]} \absv{\unitvector{}[i]} \right) \left( \sum_{j \in [d]} \absv{\unitvector{}[j]} \right) \\
    &\qquad = \norm{\hessian - \sdpmatrix}{\infty} \norm{\unitvector{}}{1}^2 
    \leq d \norm{\hessian - \sdpmatrix}{\infty} \norm{\unitvector{}}{2}^2
    = d \norm{\hessian - \sdpmatrix}{\infty},
  \end{align}
  where we used the well-known inequality $\norm{\parx}{1}^2 \leq d \norm{\parx}{2}^2$ (which follows from the convexity of $t \mapsto t^2$).
  Now consider $\unitvector{min}$ a unit eigenvector of the eigenvalue $\min \spectrum (\sdpmatrix)$ of the symmetric matrix $\sdpmatrix$.
  Then $\min \spectrum (\hessian) \leq \unitvector{min}^T \hessian \unitvector{min} 
  \leq \unitvector{min}^T \sdpmatrix \unitvector{min} + d \norm{\hessian - \sdpmatrix}{\infty}
  = \min \spectrum (\sdpmatrix) + d \norm{\hessian - \sdpmatrix}{\infty}$.
  Inverting the role of $\hessian$ and $\sdpmatrix$ then yields the lemma.
\end{proof}

\begin{lemma}
  \label{lemma:continuous_min_spectrum}
  The minimal eigenvalue is a continuous function of a symmetric matrix.
\end{lemma}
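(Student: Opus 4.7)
The plan is to obtain the lemma as an immediate corollary of the quantitative bound already established in Lemma~\ref{lemma:min_spectrum_addition}. That lemma shows that for any two symmetric matrices $\hessian, \sdpmatrix \in \symmetric{d}$, the gap between their minimal eigenvalues is controlled by $d \norm{\hessian - \sdpmatrix}{\infty}$. This is precisely a Lipschitz estimate (with constant $d$) for the map $\sdpmatrix \mapsto \min \spectrum(\sdpmatrix)$ with respect to the $\norm{\cdot}{\infty}$ norm on $\symmetric{d}$, so continuity follows at once.

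More concretely, I would fix $\varepsilon > 0$ and a symmetric matrix $\sdpmatrix_0 \in \symmetric{d}$, and set $\delta \triangleq \varepsilon / d$. Lemma~\ref{lemma:min_spectrum_addition} applied to $\hessian = \sdpmatrix$ and $\sdpmatrix = \sdpmatrix_0$ then yields
\begin{equation}
\absv{\min \spectrum(\sdpmatrix) - \min \spectrum(\sdpmatrix_0)} \leq d \norm{\sdpmatrix - \sdpmatrix_0}{\infty} < d \cdot \delta = \varepsilon
\end{equation}
whenever $\norm{\sdpmatrix - \sdpmatrix_0}{\infty} < \delta$. This establishes continuity of $\min \spectrum$ at $\sdpmatrix_0$, and since $\sdpmatrix_0$ was arbitrary, continuity on all of $\symmetric{d}$. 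To extend this statement to continuity in any other norm on $\symmetric{d}$ (should the subsequent proofs require it), I would simply invoke the fact that all norms on the finite-dimensional space $\symmetric{d}$ are equivalent.

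There is no real obstacle here: the substantive work is done by Lemma~\ref{lemma:min_spectrum_addition}, and the present lemma is just the qualitative restatement (Lipschitz $\Rightarrow$ continuous). The only minor subtlety is that the bound is phrased with the entrywise sup-norm $\norm{\cdot}{\infty}$, so one should either state continuity with respect to this norm explicitly or cite norm equivalence on finite-dimensional spaces to conclude continuity in the topological sense.
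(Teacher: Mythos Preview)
Your proposal is correct and follows exactly the paper's approach: the paper's own proof is a two-line remark that this is an immediate corollary of Lemma~\ref{lemma:min_spectrum_addition}, since that lemma's Lipschitz bound $\absv{\min \spectrum(\hessian) - \min \spectrum(\sdpmatrix)} \leq d \norm{\hessian - \sdpmatrix}{\infty}$ directly implies $\min \spectrum(\sdpmatrix) \rightarrow \min \spectrum(\hessian)$ as $\sdpmatrix \rightarrow \hessian$. Your version is simply a more explicit $\varepsilon$--$\delta$ rendering of the same argument.
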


\begin{proof}
  This is an immediate corollary of the previous lemma. 
  As $\sdpmatrix \rightarrow \hessian$, we clearly have $\min \spectrum(\sdpmatrix) \rightarrow \min \spectrum(\hessian)$.
\end{proof}

\begin{lemma}
\label{lemma:continuous_skewness}
  $\skewness$ is continuous.
\end{lemma}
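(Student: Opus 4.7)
My plan is to deduce continuity of $\skewness$ from the standard fact that the supremum of a jointly continuous function over a compact domain depends continuously on the parameter, once we have handled the potential blow-up of the denominator via Lemma~\ref{lemma:continuous_min_spectrum}.

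First I would restrict the supremum defining $\skewness(\sdpmatrix)$ to a compact domain. The expression $\frac{\norm{\varx}{2} \norm{\sdpmatrix \varx}{2}}{\varx^T \sdpmatrix \varx}$ is invariant under nonzero rescaling of $\varx$, so $\skewness(\sdpmatrix) = \sup_{\unitvector{} \in S^{d-1}} f(\sdpmatrix, \unitvector{}) - 1$ where $f(\sdpmatrix, \unitvector{}) \triangleq \frac{\norm{\sdpmatrix \unitvector{}}{2}}{\unitvector{}^T \sdpmatrix \unitvector{}}$ and $S^{d-1}$ is the unit sphere in $\setR^d$, which is compact.

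Next I would verify joint continuity of $f$. The numerator $\norm{\sdpmatrix \unitvector{}}{2}$ is obviously continuous in $(\sdpmatrix, \unitvector{})$. For the denominator, positive definiteness ensures $\unitvector{}^T \sdpmatrix \unitvector{} \geq \min \spectrum(\sdpmatrix) > 0$ for every $\unitvector{} \in S^{d-1}$, so there is no singularity, and the quotient is jointly continuous on $\set{\sdpmatrix \succ 0} \times S^{d-1}$.

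Finally, given a sequence $\sdpmatrix_n \to \sdpmatrix$ of positive definite matrices, Lemma~\ref{lemma:continuous_min_spectrum} implies $\min \spectrum(\sdpmatrix_n) \to \min \spectrum(\sdpmatrix) > 0$, so for $n$ large all $\sdpmatrix_n$ lie in a common compact neighborhood $K$ of $\sdpmatrix$ inside $\set{\sdpmatrix \succ 0}$. On $K \times S^{d-1}$, a compact set, $f$ is uniformly continuous, hence $\sup_{\unitvector{}} f(\sdpmatrix_n, \unitvector{}) \to \sup_{\unitvector{}} f(\sdpmatrix, \unitvector{})$, which gives $\skewness(\sdpmatrix_n) \to \skewness(\sdpmatrix)$. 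The only place one could trip up is losing control of the denominator, but this is exactly what Lemma~\ref{lemma:continuous_min_spectrum} rules out; modulo that, the argument is purely the maximum theorem on a compact set.
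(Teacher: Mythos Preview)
Your proof is correct but takes a genuinely different route from the paper's. The paper argues by direct estimation: it bounds $\absv{\skewness(\hessian)-\skewness(\sdpmatrix)}$ via $\sup_{\unitvector{}}\absv{\,\cdot\,}$, splits the difference of quotients with a telescoping triangle inequality, and controls each piece by $d\norm{\hessian-\sdpmatrix}{\infty}$, arriving at the explicit Lipschitz-type bound
\[
\absv{\skewness(\hessian)-\skewness(\sdpmatrix)} \leq \frac{2+\skewness(\hessian)}{\min\spectrum(\sdpmatrix)}\, d\,\norm{\hessian-\sdpmatrix}{\infty},
\]
and then appeals to Lemma~\ref{lemma:continuous_min_spectrum} for the denominator. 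Your argument instead reduces to a compactness principle: the objective is $0$-homogeneous, so the $\sup$ lives on the compact sphere $S^{d-1}$; the integrand is jointly continuous on $\set{\sdpmatrix\succ 0}\times S^{d-1}$; and uniform continuity on a compact neighborhood of $(\sdpmatrix,\,\cdot\,)$ yields convergence of the suprema. This is essentially Berge's maximum theorem and is cleaner and shorter. The paper's approach buys an explicit modulus of continuity, but that quantitative bound is never used downstream (the proof of Theorem~\ref{th:asymptotic_strategyproofness} only invokes continuity to extract an $\varepsilon_5$), so your more conceptual argument is fully adequate here. One minor tightening: to get the compact neighborhood $K\subset\set{\sdpmatrix\succ 0}$ you really only need openness of the positive-definite cone, which is an immediate consequence of Lemma~\ref{lemma:min_spectrum_addition}; your appeal to Lemma~\ref{lemma:continuous_min_spectrum} works but is slightly indirect.
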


\begin{proof}
  Consider $\hessian, \sdpmatrix \succ 0$ two positive definite symmetric matrices. We have
  \begin{align}
    &\absv{\skewness(\hessian)-\skewness(\sdpmatrix)} 
    = \absv{ \sup_{\norm{\unitvector{}}{2} = 1} \set{\frac{\norm{\hessian \unitvector{}}{2}}{\unitvector{}^T \hessian \unitvector{}} -1}-\sup_{\norm{\unitvector{}}{2} = 1} 
    \set{\frac{\norm{ \sdpmatrix \unitvector{}}{2}}{\unitvector{}^T \sdpmatrix \unitvector{}} -1}} \\
    &\leq \sup_{\norm{\unitvector{}}{2} = 1} \set{\absv{\frac{\norm{\hessian \unitvector{}}{2}}{\unitvector{}^T \hessian \unitvector{}} - \frac{\norm{\sdpmatrix \unitvector{}}{2}}{\unitvector{}^T \sdpmatrix \unitvector{}}}}\\
    &= \sup_{\norm{\unitvector{}}{2} = 1} \set{\absv{\frac{\norm{\hessian \unitvector{}}{2}(\unitvector{}^T \sdpmatrix \unitvector{})-\norm{\sdpmatrix \unitvector{}}{2}(\unitvector{}^T \hessian \unitvector{})}{(\unitvector{}^T \hessian \unitvector{})(\unitvector{}^T \sdpmatrix \unitvector{})} }}\\
    &\leq \sup_{\norm{\unitvector{}}{2} = 1} \set{\absv{\frac{\norm{\hessian \unitvector{}}{2}\left( \unitvector{}^T \sdpmatrix \unitvector{} - \unitvector{}^T \hessian \unitvector{} \right)}{(\unitvector{}^T \hessian \unitvector{})(\unitvector{}^T \sdpmatrix \unitvector{})}} 
    + \absv{\frac{\norm{\hessian \unitvector{}}{2}-\norm{\sdpmatrix \unitvector{}}{2}}{\unitvector{}^T \sdpmatrix \unitvector{}}}}\\
    &\leq \sup_{\norm{\unitvector{}}{2} = 1} \set{(\skewness(\hessian)+1)\absv{\frac{\left( \unitvector{}^T \sdpmatrix \unitvector{} - \unitvector{}^T \hessian \unitvector{} \right)}{\unitvector{}^T \sdpmatrix \unitvector{}}} 
    +\absv{\frac{\norm{\hessian \unitvector{}}{2}-\norm{\sdpmatrix \unitvector{}}{2}}{\unitvector{}^T \sdpmatrix \unitvector{}}}} \label{equ:skewness_inequality}.
\end{align}
Now, for any unit vector $\unitvector{}$, we have
\begin{align}
  \absv{ \unitvector{}^T \sdpmatrix \unitvector{} - \unitvector{}^T \hessian \unitvector{}} 
  &\leq \sum_{i,j \in [d]} \absv{\unitvector{}[i]} \absv{\unitvector{}[j]} \absv{\sdpmatrix[i,j] - \hessian[i,j]} 
  \leq \norm{\sdpmatrix - \hessian}{\infty} \sum_{i,j \in [d]} \absv{\unitvector{}[i]} \absv{\unitvector{}[j]} \\
  &=  \norm{\sdpmatrix - \hessian}{\infty} \norm{\unitvector{}}{1}^2 
  = d \norm{\sdpmatrix - \hessian}{\infty} \norm{\unitvector{}}{2}^2
  = d \norm{\sdpmatrix - \hessian}{\infty},
\end{align}
using the inequality $\norm{\parx}{1}^2 \leq d \norm{\parx}{2}^2$.
Moreover, by triangle inequality, we also have 
\begin{align}
  &\absv{ \norm{\hessian \unitvector{}}{2}-\norm{\sdpmatrix \unitvector{}}{2}} 
  \leq \norm{\hessian \unitvector{} - \sdpmatrix \unitvector{}}{2} 
  \leq \sqrt{\sum_{i \in [d]} \left(\sum_{j \in d} \absv{\hessian[i,j] - \sdpmatrix[i,j]} \absv{\unitvector{}[j]} \right)^2} \\
  &\qquad \leq \sqrt{\sum_{i \in [d]} \left(\sum_{j \in d} \norm{\hessian - \sdpmatrix}{\infty} \absv{\unitvector{}[j]} \right)^2} 
  \leq \norm{\hessian - \sdpmatrix}{\infty} \sqrt{ d \norm{\unitvector{}}{1}^2} \\
  &\qquad \leq \norm{\hessian - \sdpmatrix}{\infty} \sqrt{ d^2 \norm{\unitvector{}}{2}^2} 
  \leq d \norm{\hessian - \sdpmatrix}{\infty} \norm{\unitvector{}}{2}.
\end{align}
Finally, note that $\unitvector{}^T \sdpmatrix \unitvector{} \geq \min \spectrum(\sdpmatrix)$.
Combining it all then yields
\begin{align}
   \absv{\skewness(\hessian)-\skewness(\sdpmatrix)} \leq \frac{ 2+\skewness(\hessian)}{\min \spectrum(\sdpmatrix)} d \norm{\hessian - \sdpmatrix}{\infty}.
\end{align}
By continuity of the minimal eigenvalue (Lemma \ref{lemma:continuous_min_spectrum}), we know that $\min \spectrum(\sdpmatrix) \rightarrow \min \spectrum(\hessian)$ as $\sdpmatrix \rightarrow \hessian$.
This allows us to conclude that $\absv{\skewness(\hessian)-\skewness(\sdpmatrix)} \rightarrow 0$ as $\sdpmatrix \rightarrow \hessian$, which proves the continuity of the $\skewness$ function.
\end{proof}

\subsubsection{Proof of Theorem \ref{th:asymptotic_strategyproofness}}
\label{sec:proof_th_asymptotic_strategyproofness}
Finally, we can prove Theorem \ref{th:asymptotic_strategyproofness}.

\begin{proof}[Proof of Theorem \ref{th:asymptotic_strategyproofness}]
   Let $\varepsilon,\delta>0$. 
   Choose $\ra, \rc, \rb$ such that\footnote{clearly for $d\geq5$ such $\ra, \rc, \rb$ exist} $2/d<2\ra<\rb<\rc<1/2$, 
   and set $\delta_1 \triangleq \delta_2 \triangleq \delta_3 \triangleq \delta_4 \triangleq \delta/4$.
   Define also $\lambda_{min} \triangleq \min \spectrum (\nabla^2 \Loss_\infty (\geometricmedian_\infty))$ and $\varepsilon_3 \triangleq \min \set{\lambda_{min} / 2, \varepsilon_5}$, 
   where $\varepsilon_5$ will be defined later on, based on the continuity of $\skewness$ at $\hessian_\infty$.
   
   Now consider $\VOTER$ sufficiently large to satisfy the requirements of lemmas \ref{lemma:no_voter}, \ref{lemma:close_median}, \ref{lemma:close_hessian}, and \ref{lemma:bounded_third_derivative}. 
   Denoting $\event_\VOTER$ the event that contains the intersection of the guarantees of these lemmas, by union bound, we then know that $\probability{\event_\VOTER} \geq 1-\delta$.
   We will now show that, for $\VOTER$ large enough, under $\event_\VOTER$, 
   the geometric median restricted to the first $1+\VOTER$ voters is $(\skewness(\hessian_\infty) + \varepsilon)$-strategyproof for voter $0$.
   To do so, it suffices to prove that, under $\event_\VOTER$, the assumptions of Lemma \ref{lemma:convex_projection_strategyproof} are satisfied, for $\beta \triangleq 2 / \lambda_{min} \VOTER$.
   
   First, let us show that for $\VOTER$ large enough, under $\event_\VOTER$, the ball $\ball(\geometricmedian_{1:\VOTER}, 2 \beta)$ contains no preferred vector from the first $\VOTER$ voters.
   To prove this, let $\parz \in \ball(\geometricmedian_{1:\VOTER}, \beta)$.
   By triangle inequality, we have $\norm{\parz - \geometricmedian_\infty}{2} \leq \norm{\parz - \geometricmedian_{1:\VOTER}}{2} + \norm{\geometricmedian_{1:\VOTER} - \geometricmedian_\infty}{2} \leq 2 \beta + \VOTER^{-\rc} 
   = \mathcal O(\VOTER^{-1} + \VOTER^{-\rc}) = o(\VOTER^{-\ra})$,
   since $\ra < \rc < 1$.
   Thus, for $\VOTER$ large enough, we have $\parz \in \ball(\geometricmedian_\infty, \VOTER^{-\ra})$.
   But by Lemma \ref{lemma:no_voter}, under $\event_\VOTER$, this ball contains none of the preferred vectors from the first $\VOTER$ voters.
   As a corollary, $\Loss_{1:\VOTER}$ is then infinitely differentiable in $\ball(\geometricmedian_{1:\VOTER}, 2 \beta)$.
   The first condition of Lemma \ref{lemma:convex_projection_strategyproof} thus holds.
   
   We now move on to the second condition. 
   Note that, under event $\event_\VOTER$, by virtue of Lemma \ref{lemma:close_hessian},
   for all $\parz \in \ball(\geometricmedian_{1:\VOTER}, \beta) \subset \ball(\geometricmedian_{1:\VOTER}, 2\beta) \subset \ball (\geometricmedian_\infty, \VOTER^{-\ra})$,
   we have $\norm{\nabla^2 \Loss_{1:\VOTER} (\parz) - \hessian_\infty}{\infty} \leq \varepsilon_3 = \lambda_{min} / 2$.
   Lemma \ref{lemma:min_spectrum_addition} then yields 
   $\min \spectrum(\nabla^2 \Loss_{1:\VOTER} (\parz)) \geq \min \spectrum(\hessian_\infty) - \norm{\nabla^2 \Loss_{1:\VOTER} (\parz) - \hessian_\infty}{\infty} 
   \geq \lambda_{min} - \lambda_{min}/2 = \lambda_{min}/2$.
   By Taylor's theorem, we then know that, for any unit vector $\unitvector{}$,
   there exists $\parz \in [\geometricmedian_{1:\VOTER}, \geometricmedian_{1:\VOTER} + \beta \unitvector{}]$ such that
   \begin{align}
     \unitvector{}^T \nabla \Loss_{1:\VOTER} (\geometricmedian_{1:\VOTER} + \beta \unitvector{})
     &= \unitvector{}^T \left( \nabla \Loss_{1:\VOTER} (\geometricmedian_{1:\VOTER}) + \nabla^2 \Loss_{1:\VOTER} (\parz) [\beta \unitvector{}] \right) 
     = \beta \nabla^2 \Loss_{1:\VOTER} (\parz) [\unitvector{} \otimes \unitvector{}] \\
     &\geq \beta \min \spectrum(\nabla^2 \Loss_{1:\VOTER} (\parz)) 
     \geq \frac{2}{\lambda_{min} \VOTER} \lambda_{min} = 2/\VOTER > 1/\VOTER,
   \end{align}
   where we used the fact that $\nabla \Loss_{1:\VOTER} (\geometricmedian_{1:\VOTER}) = 0$.
   Thus the second condition of Lemma \ref{lemma:convex_projection_strategyproof} holds too.

  We now move on to the third condition.
  We have already shown that, under $\event_\VOTER$ and for all $\parz \in \ball(\geometricmedian_{1:\VOTER}, \beta)$, we have $\min \spectrum(\nabla^2 \Loss_{1:\VOTER}(\parz)) \geq \lambda_{min} / 2$. 
  From this, it follows that, for all $\parz \in \ball(\geometricmedian_{1:\VOTER}, \beta)$, we have $\min \spectrum(\nabla^2 \Loss_{1:\VOTER}(\parz) \cdot \nabla^2 \Loss_{1:\VOTER}(\parz)) \geq \lambda_{min}^2 / 4$.
  But now note that, for any coordinates $i,j \in [d]$, we have
  \begin{align}
    &\absv{\nabla^3 \Loss_{1:\VOTER}(\parz) \cdot \nabla \Loss_{1:\VOTER}(\parz) [i,j]}
    = \absv{\sum_{k \in [d]} \nabla^3 \Loss_{1:\VOTER}(\parz) [i, j, k] \nabla \Loss_{1:\VOTER}(\parz) [k]} \\
    &\leq \sum_{k \in [d]} \absv{\nabla^3 \Loss_{1:\VOTER}(\parz) [i, j, k]} \absv{\nabla \Loss_{1:\VOTER}(\parz) [k]} 
    \leq d \norm{\nabla^3 \Loss_{1:\VOTER}(\parz)}{\infty} \norm{\nabla \Loss_{1:\VOTER}(\parz)}{\infty} \\
    &\leq K d (1+\VOTER^{3\ra - \rb}) \beta 
    = \mathcal O (\VOTER^{-1} + \VOTER^{3 \ra -\rb -1}).
  \end{align}
  But since $2\ra < \rb < 1/2$, we have $3 \ra - \rb - 1 = \ra - 1 < -1/2 < 0$. 
  Thus the bound above actually goes to zero, as $\VOTER \rightarrow \infty$.
  In particular, for $\VOTER$ large enough, we must have $\norm{\nabla^3 \Loss_{1:\VOTER}(\parz) \cdot \nabla \Loss_{1:\VOTER}(\parz)}{\infty} \leq \lambda_{min}^2 /8$.
  As a result, by Lemma \ref{lemma:min_spectrum_addition}, for all $\parz \in \ball(\geometricmedian_{1:\VOTER}, \beta)$ and under $\event_\VOTER$, we then have 
  \begin{align}
     \min \spectrum &\left( \nabla^2 \Loss_{1:\VOTER}(\parz) \cdot \nabla^2 \Loss_{1:\VOTER}(\parz) + \nabla^3 \Loss_{1:\VOTER}(\parz) \cdot \nabla \Loss_{1:\VOTER}(\parz) \right) \\
     &\geq \min \spectrum(\nabla^2 \Loss_{1:\VOTER}(\parz) \cdot \nabla^2 \Loss_{1:\VOTER}(\parz)) - \norm{\nabla^3 \Loss_{1:\VOTER}(\parz) \cdot \nabla \Loss_{1:\VOTER}(\parz)}{\infty} \\
     &\geq \frac{\lambda_{min}^2}{4} - \frac{\lambda_{min}^2}{8} \geq \frac{\lambda_{min}^2}{8}.
  \end{align}
  Therefore $\nabla^2 \Loss_{1:\VOTER}(\parz) \cdot \nabla^2 \Loss_{1:\VOTER}(\parz) + \nabla^3 \Loss_{1:\VOTER}(\parz) \cdot \nabla \Loss_{1:\VOTER}(\parz) \succ 0$, 
  which is the third condition of Lemma~\ref{lemma:convex_projection_strategyproof}.

  Finally, the fourth and final condition of Lemma~\ref{lemma:convex_projection_strategyproof} holds by continuity of the function $\skewness$ (Lemma \ref{lemma:continuous_skewness}).
  More precisely, since $\hessian_\infty \succ 0$, we know that $\skewness$ is continuous in $\hessian_\infty$.
  Thus, there exists $\varepsilon_5 > 0$ such that, 
  if $A$ is a symmetric matrix with $\norm{\hessian_\infty - A}{\infty} \leq \varepsilon_5$,
  then $A \succ 0$ and $\skewness (A) \leq \skewness(\hessian_\infty) + \varepsilon$.
  Yet, by definition of $\varepsilon_3$ and Lemma \ref{lemma:close_hessian}, 
  we know that all hessian matrices $\nabla^2 \Loss_{1:\VOTER} (\parz)$ for $\parz \in \ball(\geometricmedian_{1:\VOTER}, \beta)$ satisfy the above property.
  Therefore, we know that for all such $\parz$, we have $\skewness (\nabla^2 \Loss_{1:\VOTER} (\parz)) \leq \skewness(\hessian_\infty) + \varepsilon$, 
  which is the fourth condition of Lemma \ref{lemma:convex_projection_strategyproof} with $\strategyproofbound \triangleq \skewness(\hessian_\infty) + \varepsilon$.
  
  Lemma \ref{lemma:convex_projection_strategyproof} thus applies.
  It guarantees that, for $\VOTER$ large enough, under the event $\event_\VOTER$ which occurs with probability at least $1-\delta$, 
  the geometric median is $(\skewness(\hessian_\infty) + \varepsilon)$-strategyproof for voter $0$.
  This corresponds to saying that the geometric median is asymptotically $\skewness(\hessian_\infty)$-strategyproof.
\end{proof}

\subsection{Upper and Lower Bounds for Skewness (Proof of Proposition \ref{prop:skewness_lowerbound})}
\label{sec:proof_prop_skewness_lowerbound}
\begin{proof}
We first prove the upper-bound. Consider an orthonormal eigenvector basis of $\sdpmatrix$ of vectors $\unitvector{1}, \ldots, \unitvector{d}$, with respective eigenvalues $\lambda_1, \ldots, \lambda_d$. We now focus on a unit vector $x$ in the form $\parx = \sum \beta_i \unitvector{i}$ with $\sum \beta_i^2 = 1$. 
  Note that $\sum \beta_i^2 \lambda_i$ and $\sum \beta_i^2 \lambda_i^2$ can then be viewed as weighted averages of $\lambda_i$'s and of their squares.
  As a result, we have $\sum \beta_i^2 \lambda_i \geq \lambda_{min}$ and $\sum \beta_i^2 \lambda_i^2 \leq \lambda_{max}^2$.
  As a result, we have
  \begin{equation}
    \frac{\norm{\sdpmatrix \parx}{2}^2}{(\parx^T \sdpmatrix \parx)^2}
= \frac{\sum \beta_i^2 \lambda_i^2}{\left( \sum \beta_i^2 \lambda_i \right)^2}
    \leq \frac{\lambda_{max}^2}{\lambda_{min}^2}.
  \end{equation}
  Taking the square root and subtracting one proves the upper-bound. We now move on to proving the lower-bound.
  Denote $\lambda_1$ and $\lambda_d$ the two extreme eigenvalues of $\sdpmatrix$, and $u_1$ and $u_d$ their orthogonal unit eigenvectors.
  Define $\parx = \frac{\unitvector{1}}{\sqrt{\lambda_1}} + \frac{\sqrt{\beta} \unitvector{d}}{\sqrt{\lambda_d}}$.
  We then have
  $\sdpmatrix \parx = \sqrt{\lambda_1} \unitvector{1} + \sqrt{\beta \lambda_d} \unitvector{d}$,
  $\norm{\parx}{2}^2 = \lambda_1^{-1} + \beta \lambda_d^{-1}$,
  $\parx^T \sdpmatrix \parx = 1 + \beta$, and
  $\norm{\sdpmatrix \parx}{2}^2 = \lambda_1 + \beta \lambda_d$.
  Combining this yields a ratio
  \begin{align}
    R(\beta)
    &\triangleq \frac{\norm{\parx}{2}^2 \norm{\sdpmatrix\parx}{2}^2}{(\parx^T \sdpmatrix \parx)^2}
    = \frac{(\lambda_1^{-1} + \beta \lambda_d^{-1}) (\lambda_1 + \beta \lambda_d)}{(1+\beta)^2}
    = \frac{\beta^2 + L \beta +1}{\beta^2 + 2 \beta + 1}
    = 1+\frac{L-2}{2+ \beta + \beta^{-1}},
  \end{align}
  where $L \triangleq \lambda_1^{-1} \lambda_d + \lambda_1 \lambda_d^{-1} \geq 2$. Note that, similarly, we have $\beta + \beta^{-1} \geq 2$. This implies $R(\beta) \leq 1+\frac{L-2}{4}$, with equality for $\beta =1$. The skewness is then greater than $\sqrt{R(1)} -1$. In two dimensions, the skewness is, in fact, equal to $\sqrt{R(1)} -1$ since $x$ represents essentially all possible vectors in this case.
  Multiplying the numerator and denominator of $R(1)$ by $\lambda_1 \lambda_d$ then yields the proposition.
\end{proof}

\section{PROOFS AND DIFFERENT RESULTS FROM SECTION \ref{sec:skewed_preferences}}
\label{sec:proof_skewed_preferences}

\subsection{Sketch of Proof for Theorem \ref{th:skewed_geometric_median}}
\label{sec:proof_th_skewed_geometric_median}

\begin{figure}[t!]
    \centering
    \includegraphics[width=.6\linewidth]{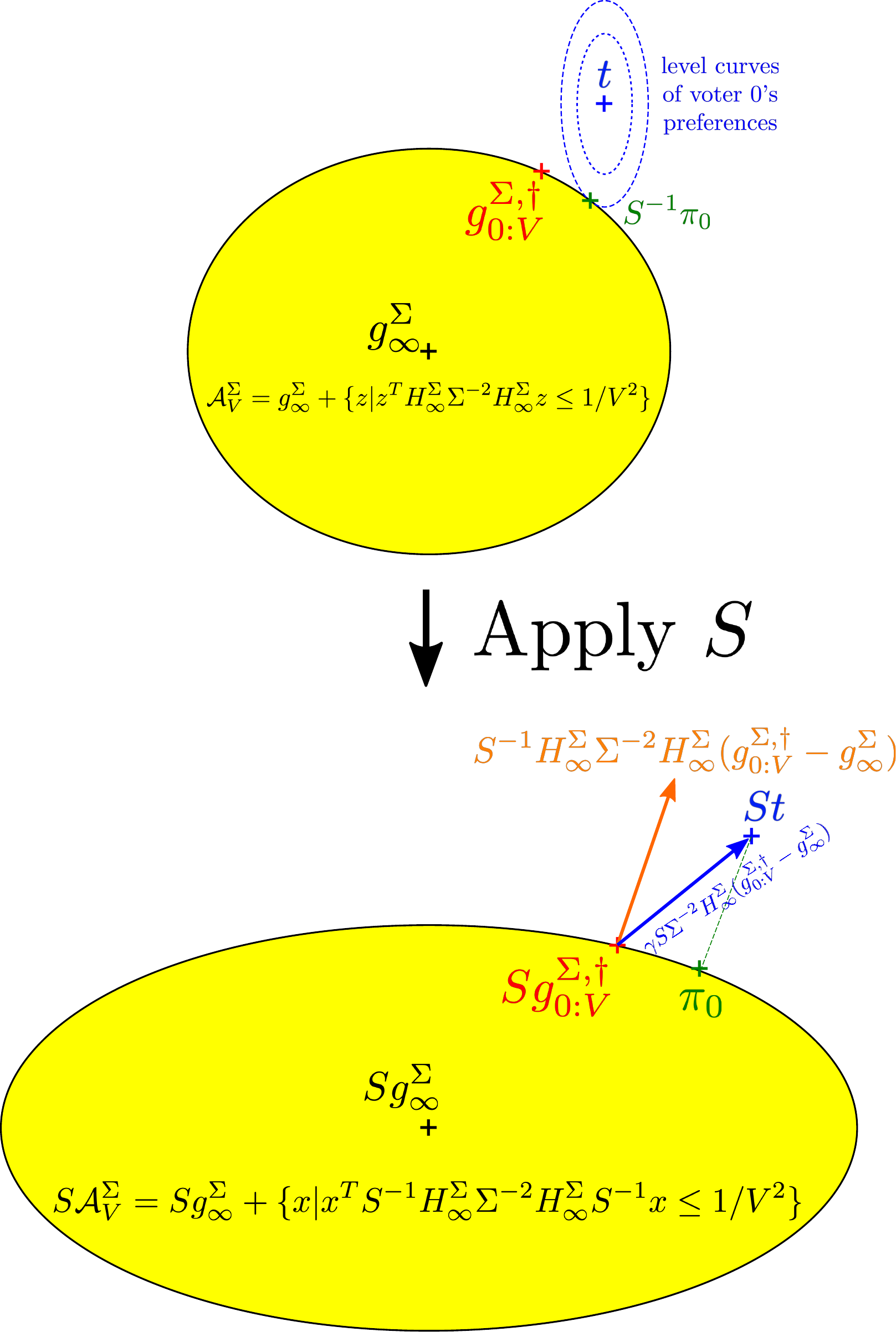}
    \caption{Proof techniques to determine the asymptotic strategyproofness of the $\adaptivematrix$-skewed geometric median for $\sdpmatrix$-skewed preferences. We skew space using $\sdpmatrix$, so that in the skewed space, voter $0$ wants to minimize the Euclidean distance between their preferred vector and the skewed geometric median. Strategyproofness then depends on the angle between the blue and orange vectors in the skewed space, as depicted in the figure.}
    \label{fig:skewed_preferences}
\end{figure}

\begin{proof}
  We  provide a sketch of proof, which is based on Figure~\ref{fig:skewed_preferences}. 
  By Taylor series and given concentration bounds, for $\VOTER$ large enough and $\parz \rightarrow \geometricmedian_\infty^\adaptivematrix$, the gradient of the skewed loss for $1+\VOTER$ voters is then approximately given by 
  \begin{align}
      (1+\VOTER) \nabla \Loss^\adaptivematrix_{0:\VOTER} (\strategicvote{0}, \paramfamily, \parz) 
      &\approx \frac{\adaptivematrix \adaptivematrix (\parz - \strategicvote{0})}{\norm{\parz - \strategicvote{0}}{\adaptivematrix}}
      + \VOTER \hessian^\adaptivematrix_\infty (\parz - \geometricmedian_\infty^\adaptivematrix)
      + o(\norm{\parz - \geometricmedian_\infty^\adaptivematrix}{2}) \\
      &= \adaptivematrix \unitvector{\adaptivematrix \parz - \adaptivematrix \strategicvote{0}} + \VOTER \hessian^\adaptivematrix_\infty (\parz - \geometricmedian_\infty^\adaptivematrix)
      + o(\norm{\parz - \geometricmedian_\infty^\adaptivematrix}{2}).
      \label{eq:taylor_skewed_gradient}
  \end{align}
  This quantity must cancel out for $\parz =
  \geometricmedian_{0:\VOTER}^{\adaptivematrix}$.
  Thus we must have $\adaptivematrix^{-1} \hessian^\adaptivematrix_\infty (\geometricmedian_{0:\VOTER}^{\adaptivematrix} - \geometricmedian_\infty^\adaptivematrix)
  \approx \frac{1}{\VOTER} \unitvector{\adaptivematrix \strategicvote{0} - \adaptivematrix \geometricmedian_{0:\VOTER}^{\adaptivematrix}}$,
  which implies $\norm{\adaptivematrix^{-1} \hessian^\adaptivematrix_\infty (\geometricmedian_{0:\VOTER}^{\adaptivematrix} - \geometricmedian_\infty^\adaptivematrix)}{2}^2 
  = 1/\VOTER^2$.
  The achievable set $\setAchieve^\adaptivematrix$ is thus approximately the ellipsoid $\set{\geometricmedian_\infty^\adaptivematrix + \parz \st \parz^T \hessian^\adaptivematrix_\infty \adaptivematrix^{-2} \hessian^\adaptivematrix_\infty \parz \leq 1/\VOTER^2 }$. In particular, for $\VOTER$ large enough, $\setAchieve^\adaptivematrix$ is convex.

  Meanwhile, denote $\geometricmedian_{0:\VOTER}^{\adaptivematrix, \dagger}$ the skewed geometric median when the strategic voter truthfully reports their preferred vector $\targetvector$. 
  By Equation (\ref{eq:taylor_skewed_gradient}), we must have $\adaptivematrix \adaptivematrix (\targetvector - \geometricmedian_{0:\VOTER}^{\adaptivematrix, \dagger}) \propto \hessian^\adaptivematrix_\infty (\geometricmedian_{0:\VOTER}^{\adaptivematrix, \dagger} - \geometricmedian_\infty^\adaptivematrix)$, 
  which implies $\targetvector - \geometricmedian_{0:\VOTER}^{\adaptivematrix, \dagger} \propto \adaptivematrix^{-2} \hessian^\adaptivematrix_\infty (\geometricmedian_{0:\VOTER}^{\adaptivematrix, \dagger} - \geometricmedian_\infty^\adaptivematrix)$.

  Now let us skew the space by matrix $\sdpmatrix$, i.e., we map each point $\parz$ in the original space to a point $\parx \triangleq \sdpmatrix \parz$ in the $\sdpmatrix$-skewed space.
  Interestingly, since $\norm{\parz-\paramsub{\voter}}{\sdpmatrix} = \norm{\parx - \sdpmatrix \paramsub{\voter}}{2}$, a voter with $\sdpmatrix$-skewed preferences in the original space now simply wants to minimize the Euclidean distance in the $\sdpmatrix$-skewed space.
  Now, note that since $\sdpmatrix$ is a linear transformation and since $\setAchieve^\adaptivematrix$ is convex, so is $\sdpmatrix \setAchieve^\adaptivematrix$.
  This allows us to re-use the orthogonal projection argument.
  Namely, denoting $\pi_0$ the orthogonal projection of $\sdpmatrix \targetvector$ onto the tangent hyperplane to $\sdpmatrix \setAchieve^\adaptivematrix$, 
  we have
  \begin{align}
    \inf_{\strategicvote{0} \in \setR^d} \norm{ \targetvector - \GeometricMedian^\adaptivematrix(\strategicvote{0}, \paramfamily_{1:\VOTER}) }{\sdpmatrix} 
    &= \inf_{\strategicvote{0} \in \setR^d} \norm{ \sdpmatrix \targetvector - \sdpmatrix \GeometricMedian^\adaptivematrix(\strategicvote{0}, \paramfamily_{1:\VOTER}) }{2}  \\
    &= \inf_{\parx \in \sdpmatrix \setAchieve^\adaptivematrix} \norm{\sdpmatrix \targetvector - \parx}{2}
    \geq \norm{ \sdpmatrix \targetvector - \pi_0 }{2}.
  \end{align}
  To compute $\pi_0$, note that, for a large number of voters and with high probability, the achievable set $\sdpmatrix \setAchieve^\adaptivematrix$ in the $\sdpmatrix$-skewed space is approximately the set of points $\sdpmatrix \geometricmedian_\infty^\adaptivematrix + \sdpmatrix \parz$ such that $\parz^T \hessian_\infty^\adaptivematrix \adaptivematrix^{-2} \hessian_\infty^\adaptivematrix \parz \leq 1 / \VOTER^2$.
  Equivalently, this corresponds to the set of points $\sdpmatrix \geometricmedian_\infty^\adaptivematrix+ \parx$ with $\parx \triangleq \sdpmatrix \parz$ (and thus $\parz = \sdpmatrix^{-1} \parx$) such that 
  $(\sdpmatrix^{-1} \parx)^T \hessian_\infty^\adaptivematrix \adaptivematrix^{-2} \hessian_\infty^\adaptivematrix (\sdpmatrix^{-1} \parx) = \parx^T (\sdpmatrix^{-1} \hessian_\infty^\adaptivematrix \adaptivematrix^{-2} \hessian_\infty^\adaptivematrix \sdpmatrix^{-1}) \parx \leq 1 / \VOTER^2$.
  This is still an ellipsoid. 
  The normal to the surface of $\sdpmatrix \setAchieve^\adaptivematrix$ at $\parxsub{0} = \sdpmatrix \geometricmedian_{0:\VOTER}^{\adaptivematrix, \dagger} - \sdpmatrix \geometricmedian_\infty^\adaptivematrix$ is then given by $\sdpmatrix^{-1} \hessian_\infty^\adaptivematrix \adaptivematrix^{-2} \hessian_\infty^\adaptivematrix \sdpmatrix^{-1} \parxsub{0} 
  = \sdpmatrix^{-1} \hessian_\infty^\adaptivematrix \adaptivematrix^{-2} \hessian_\infty^\adaptivematrix (\geometricmedian_{0:\VOTER}^{\adaptivematrix, \dagger} - \geometricmedian_\infty^\adaptivematrix)$.
  
  Meanwhile, since $\targetvector - \geometricmedian_{0:\VOTER}^{\adaptivematrix, \dagger} \propto \adaptivematrix^{-2} \hessian^\adaptivematrix_\infty (\geometricmedian_{0:\VOTER}^{\adaptivematrix, \dagger} - \geometricmedian_\infty^\adaptivematrix)$, 
  we know that there exists $\gamma > 0$ such that $\sdpmatrix \targetvector - \sdpmatrix \geometricmedian_{0:\VOTER}^{\adaptivematrix, \dagger} = \gamma \sdpmatrix \adaptivematrix^{-2} \hessian^\adaptivematrix_\infty (\geometricmedian_{0:\VOTER}^{\adaptivematrix, \dagger} - \geometricmedian_\infty^\adaptivematrix)$.
  Then

  \begin{align}
    &\frac{\norm{ \targetvector - \GeometricMedian(\targetvector, \paramfamily_{1:\VOTER}) }{\sdpmatrix} }{ \norm{ \targetvector - \GeometricMedian(\strategicvote{0}, \paramfamily_{1:\VOTER}) }{\sdpmatrix} }
    \leq \frac{ \norm{\sdpmatrix \targetvector - \sdpmatrix \geometricmedian_{0:\VOTER}^{\adaptivematrix, \dagger} }{2} }{ \norm{\sdpmatrix \targetvector - \pi_0 }{2} } \\
    &\leq \frac{ \gamma \norm{ \sdpmatrix \adaptivematrix^{-2} \hessian^\adaptivematrix_\infty (\geometricmedian_{0:\VOTER}^{\adaptivematrix, \dagger} - \geometricmedian_\infty^\adaptivematrix) }{2} }
    { (\gamma \sdpmatrix \adaptivematrix^{-2} \hessian^\adaptivematrix_\infty (\geometricmedian_{0:\VOTER}^{\adaptivematrix, \dagger} - \geometricmedian_\infty^\adaptivematrix))^T \frac{\sdpmatrix^{-1} \hessian_\infty^\adaptivematrix \adaptivematrix^{-2} \hessian_\infty^\adaptivematrix (\geometricmedian_{0:\VOTER}^{\adaptivematrix, \dagger} - \geometricmedian_\infty^\adaptivematrix)}{\norm{\sdpmatrix^{-1} \hessian_\infty^\adaptivematrix \adaptivematrix^{-2} \hessian_\infty^\adaptivematrix (\geometricmedian_{0:\VOTER}^{\adaptivematrix, \dagger} - \geometricmedian_\infty^\adaptivematrix)}{2}} } \\
    &= \frac{ \norm{ \sdpmatrix \adaptivematrix^{-2} \hessian^\adaptivematrix_\infty (\geometricmedian_{0:\VOTER}^{\adaptivematrix, \dagger} - \geometricmedian_\infty^\adaptivematrix) }{2} \norm{\sdpmatrix^{-1} \hessian_\infty^\adaptivematrix \adaptivematrix^{-2} \hessian_\infty^\adaptivematrix (\geometricmedian_{0:\VOTER}^{\adaptivematrix, \dagger} - \geometricmedian_\infty^\adaptivematrix)}{2}}
    { ( \sdpmatrix \adaptivematrix^{-2} \hessian^\adaptivematrix_\infty (\geometricmedian_{0:\VOTER}^{\adaptivematrix, \dagger} - \geometricmedian_\infty^\adaptivematrix) )^T (\sdpmatrix^{-1} \hessian_\infty^\adaptivematrix \adaptivematrix^{-2} \hessian_\infty^\adaptivematrix (\geometricmedian_{0:\VOTER}^{\adaptivematrix, \dagger} - \geometricmedian_\infty^\adaptivematrix)) }  \\
    &= \frac{ \norm{ \parysub{0} }{2} \norm{\sdpmatrix^{-1} \hessian_\infty^\adaptivematrix \sdpmatrix^{-1} \parysub{0}}{2}}
    { \parysub{0}^T \sdpmatrix^{-1} \hessian_\infty^\adaptivematrix \sdpmatrix^{-1} \parysub{0} } 
    \leq 1 + \skewness(\sdpmatrix^{-1} \hessian_\infty^\adaptivematrix \sdpmatrix^{-1}),
  \end{align}
  by defining $\parysub{0} \triangleq  \sdpmatrix \adaptivematrix^{-2} \hessian^\adaptivematrix_\infty (\geometricmedian_{0:\VOTER}^{\adaptivematrix, \dagger} - \geometricmedian_\infty^\adaptivematrix)$.
  This concludes the sketch of the proof.
  A more rigorous proof would need to follow the footsteps of our main proof (Theorem \ref{th:asymptotic_strategyproofness}).
\end{proof}

\subsection{Proof of Proposition \ref{proposition:skewed_hessian}}
\label{sec:proof_proposition_skewed_hessian}

\begin{proof}
  Let $i,j \in [d]$. Note that $\partial_j \left( (\adaptivematrix \adaptivematrix \parz) [i] \right) = \partial_j \left( \sum_k (\adaptivematrix \adaptivematrix)[i,k] \parz[k] \right) = (\adaptivematrix \adaptivematrix) [i,j]$.
  As a result, using Lemma~\ref{lemma:skewed_gradient}, we have
  \begin{align}
      \partial_{ij} \norm{\parz}{\adaptivematrix}
      &= \frac{\partial_j ((\adaptivematrix \adaptivematrix \parz)[i]) \norm{\parz}{\adaptivematrix} - (\adaptivematrix \adaptivematrix \parz)[i] \partial_j \norm{\parz}{\adaptivematrix}}{\norm{\parz}{\adaptivematrix}^2} \\
      &= \frac{(\adaptivematrix \adaptivematrix)[i,j]}{\norm{\parz}{\adaptivematrix}} - \frac{(\adaptivematrix \adaptivematrix \parz)[i] (\adaptivematrix \adaptivematrix \parz)[j]}{\norm{\parz}{\adaptivematrix}^3} \\
      &= \left(\frac{\adaptivematrix \adaptivematrix}{\norm{\parz}{\adaptivematrix}}\right)[i,j] - \left( \frac{\adaptivematrix \adaptivematrix \parz \parz^T \adaptivematrix \adaptivematrix}{\norm{\parz}{\adaptivematrix}^3} \right)[i,j] \\
      &= \left(\adaptivematrix \left( \frac{1}{\norm{\adaptivematrix \parz}{2}} 
      \left( I - \left( \frac{\adaptivematrix \parz}{\norm{\adaptivematrix \parz}{2}} \right) \left( \frac{\adaptivematrix \parz}{\norm{\adaptivematrix \parz}{2}} \right)^T \right) \right) \adaptivematrix \right)[i,j].
  \end{align}
  Combining all coordinates, replacing $\parz$ by $\parz - \paramsub{\voter}$, and averaging over all voters then yields the lemma.
\end{proof}

\subsection{The computation of $\adaptivematrix$-skewed Geometric Median}
Intuitively, the computation of the $\adaptivematrix$-skewed geometric median corresponds to skewing the space using the linear transformation $\adaptivematrix$, computing the geometric median in the skewed space, and de-skewing the computed geometric median by applying $\adaptivematrix^{-1}$. The following two lemmas formalize this intuition.
\begin{lemma}
$\Loss^\adaptivematrix_\infty (\parz, \paramdistribution) = \Loss_\infty(\adaptivematrix \parz, \adaptivematrix \paramdistribution)$ 
and $\Loss^\adaptivematrix_{0:\VOTER} (\strategicvote{0}, \paramfamily, \parz) = \Loss_{0:\VOTER} (\adaptivematrix \strategicvote{0}, \adaptivematrix \paramfamily, \adaptivematrix \parz)$.
\end{lemma}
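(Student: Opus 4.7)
The plan is to prove both identities by direct substitution, using the definition of the $\adaptivematrix$-skewed norm and the linearity of $\adaptivematrix$.

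For the first identity, I would start from the definition $\Loss^\adaptivematrix_\infty (\varz, \paramdistribution) \triangleq \expect_{\param \sim \paramdistribution}{\norm{\varz - \param}{\adaptivematrix}}$, then unfold $\norm{\varz - \param}{\adaptivematrix} = \norm{\adaptivematrix(\varz - \param)}{2} = \norm{\adaptivematrix \varz - \adaptivematrix \param}{2}$ by linearity. Interpreting $\adaptivematrix \paramdistribution$ as the push-forward of $\paramdistribution$ through the (bijective, since $\adaptivematrix \succ 0$) linear map $\adaptivematrix$, the change-of-variable $\param' \triangleq \adaptivematrix \param$ turns the expectation into $\expect_{\param' \sim \adaptivematrix \paramdistribution}{\norm{\adaptivematrix \varz - \param'}{2}}$, which is exactly $\Loss_\infty(\adaptivematrix \varz, \adaptivematrix \paramdistribution)$.

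For the second identity, the argument is even more immediate: each summand in
\begin{equation}
    \Loss^\adaptivematrix_{0:\VOTER} (\strategicvote{0}, \paramfamily, \varz)
    = \frac{1}{1+\VOTER} \left( \norm{ \strategicvote{0} - \varz }{\adaptivematrix} + \sum_{\voter \in [\VOTER]} \norm{ \param_{\voter} - \varz }{\adaptivematrix} \right)
\end{equation}
can be rewritten using $\norm{\varx - \vary}{\adaptivematrix} = \norm{\adaptivematrix \varx - \adaptivematrix \vary}{2}$, and by the convention that $\adaptivematrix \paramfamily$ denotes the family $(\adaptivematrix \paramsub{1}, \ldots, \adaptivematrix \paramsub{\VOTER})$, the right-hand side matches $\Loss_{0:\VOTER} (\adaptivematrix \strategicvote{0}, \adaptivematrix \paramfamily, \adaptivematrix \varz)$ term by term.

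There is no serious obstacle here; the only subtle point is to be explicit about what $\adaptivematrix \paramdistribution$ means (the push-forward measure) so that the change-of-variables step is unambiguous. Everything else is routine application of linearity and the definition of $\norm{\cdot}{\adaptivematrix}$.
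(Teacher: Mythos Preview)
Your proposal is correct and matches the paper's own approach: the paper simply says the lemma is ``straightforward, by expanding the definition of the terms,'' which is exactly the direct-substitution argument you outline. Your added remark about interpreting $\adaptivematrix \paramdistribution$ as the push-forward measure is a welcome clarification but not a different method.
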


\begin{proof}
  This is straightforward, by expanding the definition of the terms.
\end{proof}

\begin{lemma}
\label{lemma:skewedGMComp}
$\geometricmedian^\adaptivematrix_\infty (\paramdistribution) = \adaptivematrix^{-1} \geometricmedian_\infty(\adaptivematrix \paramdistribution)$ 
and $\geometricmedian^\adaptivematrix_{0:\VOTER} (\strategicvote{0}, \paramfamily) = \adaptivematrix^{-1} \geometricmedian_{0:\VOTER} (\adaptivematrix \strategicvote{0}, \adaptivematrix \paramfamily)$.
\end{lemma}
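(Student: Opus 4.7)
The plan is to derive this lemma as an immediate corollary of the preceding lemma, using invertibility of $\adaptivematrix$. Since $\adaptivematrix \succ 0$, the map $\varz \mapsto \adaptivematrix \varz$ is a bijection of $\setR^d$ onto itself. Combining this bijectivity with the identities $\Loss^\adaptivematrix_\infty(\varz, \paramdistribution) = \Loss_\infty(\adaptivematrix \varz, \adaptivematrix \paramdistribution)$ and $\Loss^\adaptivematrix_{0:\VOTER}(\strategicvote{0}, \paramfamily, \varz) = \Loss_{0:\VOTER}(\adaptivematrix \strategicvote{0}, \adaptivematrix \paramfamily, \adaptivematrix \varz)$ from the previous lemma, minimizing the left-hand side in $\varz$ is equivalent, via the change of variable $\vary \triangleq \adaptivematrix \varz$, to minimizing the right-hand side in $\vary$ and then setting $\varz = \adaptivematrix^{-1} \vary$.

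Concretely, in the infinite-voter case I would write
\begin{equation*}
  \argmin_{\varz \in \setR^d} \Loss^\adaptivematrix_\infty(\varz, \paramdistribution)
  = \argmin_{\varz \in \setR^d} \Loss_\infty(\adaptivematrix \varz, \adaptivematrix \paramdistribution)
  = \adaptivematrix^{-1} \argmin_{\vary \in \setR^d} \Loss_\infty(\vary, \adaptivematrix \paramdistribution),
\end{equation*}
and note that the latter $\argmin$ is, by definition, $\geometricmedian_\infty(\adaptivematrix \paramdistribution)$. The finite-voter identity follows by the exact same substitution applied to the second identity of the previous lemma.

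To be fully rigorous I would also check uniqueness on both sides, so that the statement is a genuine equality of points and not merely of $\argmin$ sets. In the infinite-voter case, I would observe that the pushforward $\adaptivematrix \paramdistribution$ inherits Assumption~\ref{ass:pdf}: the image $\adaptivematrix \PARAM$ is a convex open subset of $\setR^d$, the pushforward density $\vary \mapsto p(\adaptivematrix^{-1} \vary) / \absv{\det \adaptivematrix}$ is continuously differentiable on $\adaptivematrix \PARAM$, and $\expect{\norm{\adaptivematrix \param}{2}} \leq \norm{\adaptivematrix}{2} \expect{\norm{\param}{2}} < \infty$; so Proposition~\ref{porp:positive difinite} applies and yields a unique $\geometricmedian_\infty(\adaptivematrix \paramdistribution)$. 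In the finite-voter case, since $\adaptivematrix$ is invertible, $\dim(\adaptivematrix \strategicvote{0}, \adaptivematrix \paramfamily) = \dim(\strategicvote{0}, \paramfamily)$, so whenever the right-hand dimension is at least $2$, Proposition~\ref{prop:uniqueness} guarantees uniqueness on both sides.

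There is essentially no technical obstacle here: once the preceding lemma is in hand, the only thing to do is chase a change of variable through an invertible linear map. The mildest subtlety is the bookkeeping around uniqueness and, in particular, verifying that Assumption~\ref{ass:pdf} is preserved under the pushforward by $\adaptivematrix$; beyond that, the lemma is a one-line consequence of bijectivity and the identities already established.
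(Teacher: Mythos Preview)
Your proposal is correct and follows essentially the same approach as the paper: a change of variable $\vary = \adaptivematrix \varz$ combined with the identities from the preceding lemma, exploiting the bijectivity of the invertible linear map $\adaptivematrix$. Your version is in fact more careful than the paper's (which is a two-line sketch), since you explicitly address uniqueness and verify that Assumption~\ref{ass:pdf} is preserved under the pushforward by $\adaptivematrix$.
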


\begin{proof}
  By definition of $\geometricmedian_\infty(\adaptivematrix \paramdistribution)$, we know that it minimizes $\pary \mapsto \Loss_\infty(\pary, \adaptivematrix \paramdistribution)$.
  It is then clear that $\adaptivematrix^{-1} \geometricmedian_\infty(\adaptivematrix \paramdistribution)$ minimizes $\parz \mapsto \Loss_\infty(\adaptivematrix \parz, \adaptivematrix \paramdistribution)$.
  The case of $\geometricmedian^\adaptivematrix_{0:\VOTER}$ is similar.
\end{proof}

\subsection{No Shoe Fits Them All}
\label{sec:shoe}
In practice, we may expect different voters to assign a different importance to different dimensions.
Unfortunately, this leads to the following impossibility theorem for asymptotic strategyproofness of any skewed geometric median.

\begin{corollary}
\label{cor:no_shoe_fits_them_all}
Suppose voters $\voter, \voterbis$ have $\sdpmatrix_\voter$ and $\sdpmatrix_\voterbis$-skewed preferences, where the matrices $\sdpmatrix_\voter$ and $\sdpmatrix_\voterbis$ are not proportional.
Then no skewed geometric median is asymptotically strategyproof for both.
\end{corollary}

\begin{proof}
Asymptotic strategyproofness for $\sdpmatrix_\voter$ requires using a $\adaptivematrix$-skewed geometric median such that $\skewness( \sdpmatrix^{-1}_\voter \hessian_\infty^\adaptivematrix \sdpmatrix^{-1}_\voter) = 0$.
By Proposition~\ref{prop:skewness_lowerbound}, this means that all eigenvalues of $\sdpmatrix^{-1}_\voter \hessian_\infty^\adaptivematrix \sdpmatrix^{-1}_\voter$ must be equal, which implies that $\sdpmatrix^{-1}_\voter \hessian_\infty^\adaptivematrix \sdpmatrix^{-1}_\voter \propto I$.
But then, we must have $\hessian_\infty^\adaptivematrix \propto \sdpmatrix^2_\voter$.
As a result, we then have $\sdpmatrix^{-1}_\voterbis \hessian_\infty^\adaptivematrix \sdpmatrix^{-1}_\voterbis \propto \sdpmatrix^{-1}_\voterbis \sdpmatrix^2_\voter \sdpmatrix^{-1}_\voterbis$.
But, given our assumption about these matrices, this cannot be proportional to the identity.
Proposition~\ref{prop:skewness_lowerbound} then implies that $\skewness(\sdpmatrix^{-1}_\voterbis \hessian_\infty^\adaptivematrix \sdpmatrix^{-1}_\voterbis) > 0$, which means that the $\adaptivematrix$-skewed geometric median is not asymptotically strategyproof for voter $\voterbis$.
\end{proof}

We leave however open the problem of determining what shoe ``most fits them all''.
In other words, assuming a set $\mathcal{S}$ of skewing matrices, each of which may represent how different voters' preferences may be skewed, which $\adaptivematrix(\mathcal S)$-skewed geometric median guarantees asymptotic $\strategyproofbound$-strategyproofness for all voters, with the smallest possible value of $\strategyproofbound$? 
And what is this optimal uniform asymptotic strategyproofness guarantee $\strategyproofbound(\mathcal{S})$ that can be obtained?

\section{ALTERNATIVE UNIT FORCES}
\label{sec:alternative}
In this section we show that the fairness principle ``one voter, one vote with a unit force'' can be generalized to other vector votes when we use the right norm to  measure the norm of voters’ forces. First we consider the skewed geometric median, and then we analyze the minimizer of $\ell_p$ distances.

\subsection{Skewed Geometric Median}
Interestingly, we can also interpret the skewed geometric median as an operator that yields unit forces to the different voters, albeit the norm of the forces is not measured by the Euclidean norm.
To understand how forces are measured, let us better characterize the derivative of the skewed norm. 

\begin{lemma}
\label{lemma:skewed_gradient}
For all $\parz \in \setR^d-\set{0}$, we have $\nabla_\parz \norm{\parz}{\adaptivematrix} = \adaptivematrix \adaptivematrix \parz / \norm{\parz}{\adaptivematrix}$.
\end{lemma}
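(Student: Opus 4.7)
The plan is to mirror the computation of Lemma~\ref{lemma:unit-force} but with the inner product $\langle\cdot,\cdot\rangle_{\adaptivematrix}$ induced by $\adaptivematrix$. First I would rewrite the skewed norm squared as a quadratic form in $\varz$. Since $\adaptivematrix \succ 0$ is symmetric, we have $\adaptivematrix^T = \adaptivematrix$, so
\begin{equation}
\norm{\varz}{\adaptivematrix}^2 = \norm{\adaptivematrix \varz}{2}^2 = (\adaptivematrix \varz)^T (\adaptivematrix \varz) = \varz^T \adaptivematrix^T \adaptivematrix \varz = \varz^T (\adaptivematrix \adaptivematrix) \varz.
\end{equation}

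Next I would differentiate this quadratic form coordinate-wise, exactly as in the proof of Lemma~\ref{lemma:unit-force}. Writing $M \triangleq \adaptivematrix \adaptivematrix$ (also symmetric), the partial derivative $\partial_i (\varz^T M \varz) = 2 (M \varz)[i]$, so $\nabla_\varz \norm{\varz}{\adaptivematrix}^2 = 2 \adaptivematrix \adaptivematrix \varz$.

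Then I would apply the chain rule to $\norm{\varz}{\adaptivematrix} = \sqrt{\norm{\varz}{\adaptivematrix}^2}$. Since $\adaptivematrix$ is positive definite and $\varz \neq 0$, we have $\norm{\varz}{\adaptivematrix} = \norm{\adaptivematrix \varz}{2} > 0$, so the square root is smooth at this point and
\begin{equation}
\nabla_\varz \norm{\varz}{\adaptivematrix} = \frac{\nabla_\varz \norm{\varz}{\adaptivematrix}^2}{2 \norm{\varz}{\adaptivematrix}} = \frac{\adaptivematrix \adaptivematrix \varz}{\norm{\varz}{\adaptivematrix}},
\end{equation}
which is the claimed formula. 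There is no real obstacle here; the only subtle point worth flagging is invoking the symmetry of $\adaptivematrix$ to collapse $\adaptivematrix^T \adaptivematrix$ into $\adaptivematrix \adaptivematrix$ in the statement (without symmetry we would only get $\adaptivematrix^T \adaptivematrix \varz / \norm{\varz}{\adaptivematrix}$), and checking that $\varz \neq 0$ indeed guarantees $\norm{\varz}{\adaptivematrix} \neq 0$, which follows from $\adaptivematrix$ being invertible.
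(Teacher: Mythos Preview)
Your proposal is correct and follows essentially the same approach as the paper: compute $\nabla_\varz \norm{\varz}{\adaptivematrix}^2 = 2\adaptivematrix\adaptivematrix\varz$ and then apply the chain rule through the square root. The only cosmetic difference is that you phrase the first step via the quadratic-form identity $\norm{\varz}{\adaptivematrix}^2 = \varz^T\adaptivematrix\adaptivematrix\varz$ and invoke symmetry of $\adaptivematrix$ explicitly, whereas the paper carries out the same computation coordinate-by-coordinate (using symmetry implicitly when it swaps $\adaptivematrix[j,i]$ for $\adaptivematrix[i,j]$).
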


\begin{proof}
  Note that $\norm{\parz}{\adaptivematrix}^2 = \norm{\adaptivematrix \parz}{2}^2 = \sum_i (\adaptivematrix \parz)[i]^2 = \sum_i \left( \sum_j \adaptivematrix[i,j] \parx[j]\right)^2$. 
  We then have 
  \begin{align}
      \partial_i \norm{\parz}{\adaptivematrix}^2 
      &= \sum_j 2 \adaptivematrix[j,i] (\adaptivematrix \parz)[j] 
      = 2 \sum_j \sum_k \adaptivematrix[j,i] \adaptivematrix[j,k] \parz[k] \\
      &= 2 \sum_k \left( \sum_j \adaptivematrix[i,j] \adaptivematrix[j,k] \right) \parz[k]
      = 2 \sum_k \left( \adaptivematrix \adaptivematrix \right)[i,k] \parz[k]
      = 2 (\adaptivematrix \adaptivematrix \parz)[i].
  \end{align}
  From this, it follows that
  \begin{equation}
      \partial_i \norm{\parz}{\adaptivematrix} 
      = \partial_i \sqrt{\norm{\parz}{\adaptivematrix}^2}
      = \frac{\partial_i \norm{\parz}{\adaptivematrix}^2}{2 \sqrt{\norm{\parz}{\adaptivematrix}^2}}
      = \frac{(\adaptivematrix \adaptivematrix \parz) [i]}{\norm{\parz}{\adaptivematrix}}.
  \end{equation}
  Combining all coordinates yields the lemma.
\end{proof}

It is noteworthy that, using a $\adaptivematrix$-skewed loss, the gradient $\nabla_\parz \norm{\parz}{\adaptivematrix}$ is no longer colinear with $\parz$. In fact, it is not even colinear with $\adaptivematrix \parz$, which is the image of $\parz$ as we apply the linear transformation $\adaptivematrix$ to the entire space. 
Similarly, this pull is no longer of Euclidean unit force.
Nevertheless, it remains a unit force, as long as we measure its force with the appropriate norm.

\begin{lemma}
For all $\parz, \paramsub{\voter} \in \setR^d$, we have $\norm{\nabla_\parz \norm{\parz - \paramsub{\voter}}{\adaptivematrix}}{\adaptivematrix^{-1}} = 1$. 
Put differently, using the $\adaptivematrix$-skewed loss, voters have $\adaptivematrix^{-1}$-unit forces.
\end{lemma}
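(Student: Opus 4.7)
The plan is to reduce this identity to a direct calculation using the previously established gradient formula (Lemma \ref{lemma:skewed_gradient}) together with the two definitions $\norm{\varx}{\adaptivematrix} \triangleq \norm{\adaptivematrix \varx}{2}$ and $\norm{\varx}{\adaptivematrix^{-1}} \triangleq \norm{\adaptivematrix^{-1} \varx}{2}$. The statement essentially says that once we measure pulls in the dual skewed norm, the cancellation between $\adaptivematrix$ and $\adaptivematrix^{-1}$ restores the unit-force property.

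First, I would reduce to the case $\paramsub{\voter} = 0$ by translation invariance of the gradient: since $\varz \mapsto \norm{\varz - \paramsub{\voter}}{\adaptivematrix}$ is just a translate of $\varz \mapsto \norm{\varz}{\adaptivematrix}$, we have $\nabla_\varz \norm{\varz - \paramsub{\voter}}{\adaptivematrix} = (\nabla \norm{\cdot}{\adaptivematrix})(\varz - \paramsub{\voter})$, and it suffices to evaluate the $\adaptivematrix^{-1}$-norm of $(\nabla \norm{\cdot}{\adaptivematrix})(\vary)$ at $\vary \triangleq \varz - \paramsub{\voter} \neq 0$.

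Next, I would invoke Lemma \ref{lemma:skewed_gradient} to obtain the explicit expression
\begin{equation}
(\nabla \norm{\cdot}{\adaptivematrix})(\vary) = \frac{\adaptivematrix \adaptivematrix \vary}{\norm{\vary}{\adaptivematrix}}.
\end{equation}
Applying the definition of $\norm{\cdot}{\adaptivematrix^{-1}}$ and using that $\adaptivematrix^{-1} \adaptivematrix = I$ then gives
\begin{equation}
\norm{ (\nabla \norm{\cdot}{\adaptivematrix})(\vary) }{\adaptivematrix^{-1}}
= \frac{\norm{\adaptivematrix^{-1} \adaptivematrix \adaptivematrix \vary}{2}}{\norm{\vary}{\adaptivematrix}}
= \frac{\norm{\adaptivematrix \vary}{2}}{\norm{\vary}{\adaptivematrix}}
= \frac{\norm{\vary}{\adaptivematrix}}{\norm{\vary}{\adaptivematrix}} = 1,
\end{equation}
which is the desired identity.

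There is no real obstacle here: the only subtlety is being careful that $\adaptivematrix$ is positive definite (hence invertible and symmetric, so that $\adaptivematrix^{-1} \adaptivematrix = I$ with no adjoint confusion) and that the gradient is only defined at $\vary \neq 0$, which is automatic from the assumption $\varz \neq \paramsub{\voter}$ implicit in the statement. The ``interpretation'' sentence — that in the $\adaptivematrix$-skewed loss each voter exerts an $\adaptivematrix^{-1}$-unit force — then follows by applying the identity to the decomposition $\nabla_\varz \Loss_{0:\VOTER}^{\adaptivematrix}(\strategicvote{0}, \paramfamily, \varz) = \frac{1}{1+\VOTER} \sum_\voter \nabla_\varz \norm{\varz - \paramsub{\voter}}{\adaptivematrix}$, recovering the ``one voter, one unit force'' principle in the $\adaptivematrix^{-1}$ norm.
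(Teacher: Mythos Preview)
Your proof is correct and follows essentially the same route as the paper's own argument: apply Lemma~\ref{lemma:skewed_gradient} and simplify using $\adaptivematrix^{-1}\adaptivematrix = I$ together with $\norm{\adaptivematrix \vary}{2} = \norm{\vary}{\adaptivematrix}$. You are in fact slightly more careful than the paper, which writes $\varz$ rather than $\varz - \paramsub{\voter}$ in the intermediate steps, and you correctly flag the implicit assumption $\varz \neq \paramsub{\voter}$.
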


\begin{proof}
  Applying Lemma~\ref{lemma:skewed_gradient} yields 
  \begin{align}
    \norm{\nabla_\parz \norm{\parz - \paramsub{\voter}}{\adaptivematrix}}{\adaptivematrix^{-1}} 
    = \norm{\frac{\adaptivematrix \adaptivematrix \parz}{\norm{\parz}{\adaptivematrix}}}{\adaptivematrix^{-1}}
    = \frac{\norm{\adaptivematrix^{-1} \adaptivematrix \adaptivematrix \parz}{2}}{\norm{\parz}{\adaptivematrix}}
    = \frac{\norm{\adaptivematrix \parz}{2}}{\norm{\parz}{\adaptivematrix}}
    = 1,
  \end{align}
  which is the lemma.
\end{proof}

\subsection{$\ell_p$ Norm }
Interestingly, we prove below that considering other penalties measured by $\ell_p$ distances is equivalent to assigning $\ell_q$-unit forces to the voters. In particular, the coordinate-wise median can be interpreted as minimizing the $\ell_1$ distances or, equivalently, assigning votes of $\ell_\infty$ unit force. In particular, the coordinate-wise median, which is known to be strategyproof, indeed implements the principle {\it ``one voter, one unit-force vote''}. In other words, this principle can guarantee strategyproofness; this requires a mere change of norm.

\begin{proposition}
\label{prop:unit_force}
  Assume $\frac{1}{p} + \frac{1}{q} = 1$, with $p,q \in [1,\infty]$.
  Then considering an $\ell_p$ penalty is equivalent to considering that each voter has an $\ell_q$-unit force vote.
  More precisely, any subgradient of the $\ell_p$ penalty has at most a unit norm in $\ell_q$.
\end{proposition}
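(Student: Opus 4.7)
The plan is to derive the statement directly from the duality between $\ell_p$ and $\ell_q$ norms. The first step is to recall Hölder's inequality, which implies that for conjugate exponents $\frac{1}{p} + \frac{1}{q} = 1$, the dual norm of $\ell_p$ is precisely $\ell_q$, i.e.\ $\norm{h}{q} = \sup \set{ h^T \vary \st \norm{\vary}{p} \leq 1 }$. This standard identity follows by combining Hölder (which gives $\leq$) with the explicit choice $\vary[i] = \sign(h[i]) \absv{h[i]}^{q-1} / \norm{h}{q}^{q-1}$ for $q$ finite, and the analogous choice picking out the maximal coordinate when $q = \infty$.

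Next I would unpack the subgradient definition. A vector $\subgradient \in \setR^d$ is a subgradient of $\ell_p$ at $\varx$ iff $\norm{\vary}{p} \geq \norm{\varx}{p} + \subgradient^T(\vary-\varx)$ for every $\vary \in \setR^d$. Splitting into cases, at $\varx = 0$ this reduces to $\subgradient^T \vary \leq \norm{\vary}{p}$ for all $\vary$, so taking the supremum over the unit $\ell_p$-ball and applying the duality identity yields $\norm{\subgradient}{q} \leq 1$. For $\varx \neq 0$, I would test the inequality on $\vary = 2\varx$ and $\vary = 0$ to extract $\subgradient^T \varx = \norm{\varx}{p}$; plugging this back into the subgradient inequality gives again $\subgradient^T \vary \leq \norm{\vary}{p}$ for all $\vary$, and the same dual-norm supremum produces $\norm{\subgradient}{q} \leq 1$.

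Finally I would connect this back to the paper's unit-force interpretation. The $\ell_p$-penalty loss is $\Loss_p(\paramfamily, \varz) = \frac{1}{\VOTER} \sum_{\voter \in [\VOTER]} \norm{\varz - \paramsub{\voter}}{p}$, and its gradient with respect to $\varz$ decomposes as an average of voter-wise terms $\nabla_\varz \norm{\varz - \paramsub{\voter}}{p}$, each of which, by the preceding argument, is (a subgradient of norm) at most $1$ in $\ell_q$. In particular, the optimality condition $0 \in \nabla_\varz \Loss_p(\paramfamily, \geometricmedian)$ expresses that the voters' pulls cancel, where each voter's pull is an $\ell_q$-unit vector, generalizing Lemma \ref{lemma:unit-force} and Lemma \ref{lemma:geometric-median_unit-forces} from the $p = q = 2$ case.

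The main obstacle is only bookkeeping at the endpoints $p \in \set{1, \infty}$, where the set of subgradients is strictly larger than a single vector: one must make sure the duality identity and the witness vector $\vary$ above still produce matching bounds (replacing suprema by the natural $\ell_1$ or $\ell_\infty$ expressions). The duality formulation is uniform across $p \in [1, \infty]$, so this does not require a separate argument, merely the conventional identification of $\ell_1$ and $\ell_\infty$ as mutual duals. No new technical machinery beyond Hölder's inequality and the definition of the subdifferential is needed.
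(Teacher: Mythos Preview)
Your argument is correct and in fact cleaner than the paper's. The paper proceeds by explicit computation: for $1 < p < \infty$ and $\varx \neq 0$ it differentiates $\norm{\varx}{p}$ coordinate-wise and verifies $\norm{\nabla \norm{\varx}{p}}{q} = 1$ by direct calculation, then treats the point $\varx = 0$ via the operator-norm characterization (essentially your duality step), and finally handles the endpoints $p = 1$ and $p = \infty$ with separate ad-hoc arguments (coordinate-wise for $p=1$, and an analysis of the set $J^{max}(\varx)$ of maximal coordinates for $p=\infty$). Your route bypasses all of this: by extracting $\subgradient^T \varx = \norm{\varx}{p}$ from the two test directions $\vary = 2\varx$ and $\vary = 0$ and feeding it back into the subgradient inequality, you reduce every case---smooth or not, interior $p$ or endpoint---to the single dual-norm identity $\norm{\subgradient}{q} = \sup_{\norm{\vary}{p} \leq 1} \subgradient^T \vary$. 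What the paper's explicit computation buys is the equality $\norm{\nabla \norm{\varx}{p}}{q} = 1$ at points of differentiability, though your argument recovers this too (since $\subgradient^T(\varx/\norm{\varx}{p}) = 1$ shows the supremum is attained). What your approach buys is uniformity: no case split on $p$, no separate handling of nondifferentiable points.
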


\begin{proof}
  Assume $\parx \neq 0$ and $1<p,q<\infty$. Then we have
  \begin{align}
    \absv{\partial_j \norm{\parx}{p}}^q
    &= \absv{\partial_j \left( \sum_{j \in [d]} \absv{\parx[j]}^p \right)^{1/p}}^q
    = \absv{\frac{1}{p} \left( \sum_{j \in [d]} \absv{\parx[j]}^p \right)^{(1/p)-1} \left( p \absv{\parx[j]}^{p-1} sign(\parx[j]) \right)}^q \\
    &= \norm{\parx}{p}^{q(1-p)} \absv{\parx[j]}^{q(p-1)}
    = \frac{ \absv{\parx[j]}^p }{ \norm{\parx}{p}^p },
  \end{align}
  using the equality $q(p-1) = p$ derived from $\frac{1}{p} + \frac{1}{q} = 1$. Adding up all such quantities for $j \in [d]$ yields
  \begin{align}
    \norm{\nabla \norm{\parx}{p}}{q}
    &= \left( \sum_{j \in [d]} \frac{ \absv{\parx[j]}^p }{ \norm{\parx}{p}^p } \right)^{1/q}
    = \left( \frac{ 1 }{ \norm{\parx}{p}^p } \sum_{j \in [d]} \absv{\parx[j]}^p \right)^{1/q} \\
    &= \left( \frac{ 1 }{ \norm{\parx}{p}^p } \norm{\parx}{p}^p \right)^{1/q}
    = 1.
  \end{align}
  Thus the gradient of the $\ell_p$ norm is unitary in $\ell_q$ norm, when $\parx \neq 0$.
  Note then that a subgradient $g$ at $0$ must satisfy $g^T \parx \leq \norm{\parx}{p}$ for all $\parx \in \setR^d$.
  This corresponds to saying that the operator norm of $\parx \mapsto g^T \parx$ must be at most one with respect to the norm $\ell_p$.
  Yet it is well-known that this operator norm is the $\ell_q$ norm of $g$.

  In the case $p=1$, then each coordinate is treated independently.
  On each coordinate, the derivative is then between $-1$ and $1$ (and can equal $[-1,1]$ if $\parx[j] = 0$).
  This means that the gradients are of norm at most 1.

  The last case left to analyze is when $p=\infty$.
  Denote $J^{max}(\parx) \triangleq \set{j \in [d] \st \parx[j] = \norm{\parx}{\infty}}$.
  When $\card{J^{max}(\parx)} = 1$, denoting $j$ the only element of $J^{max}(x)$ and $\unitvector{j}$ the $j$-th vector of the canonical basis, then the gradient of the $\ell_\infty$ is clearly $\unitvector{j}$, which is unitary in $\ell_1$ norm.
  Moreover, note that if $k \notin J^{max}(\parx)$, then we clearly have $\partial_k \norm{\parx}{\infty} = 0$.

  Now, denote $g \in \nabla \norm{\parx}{\infty}$, let $\pary \in \setR^d$, and assume for simplicity that $\parx \geq 0$.
  We know that
  \begin{equation}
    \norm{\parx + \varepsilon \pary}{\infty}
    \geq \norm{\parx}{\infty} + \varepsilon g^T \pary.
  \end{equation}
  For $\varepsilon > 0$ small enough, we then have
  \begin{equation}
    \norm{\parx}{\infty} + \varepsilon \max_{j \in J^{max}(\parx)} \pary[j]
    \geq \norm{\parx}{\infty} + \varepsilon \sum_{j \in J^{max}(\parx)} g[j] \pary[j],
  \end{equation}
  from which it follows that
  \begin{equation}
    \sum_{j \in J^{max}(\parx)} g[j] \pary[j] \leq \max_{j \in J^{max}(\parx)} \pary[j].
  \end{equation}
  Considering $\pary[j] = -1$ for $j \in J^{max}(\parx)$ and $\pary[k] = 0$ for all $k \neq j$ then implies $-g[j] \leq 0$, which yields $g[j] \geq 0$.
  Generalizing it for all $j$'s implies that $g \geq 0$.
  Now, considering $\pary[j] = 1$ for all $j \in J^{max}(\parx)$ then yields $\sum_{j \in J^{max}(\parx)} g[j] = \norm{g}{1} \leq 1$, which concludes the proof for $\parx \geq 0$.
  The general case can be derived by considering axial symmetries.
\end{proof}

\vfill

\end{document}